\Crefname{equation}{Eq.}{Eqs.}
\Crefname{figure}{Fig.}{Fig.}
\Crefname{tabular}{Tab.}{Tabs.}
\Crefname{table}{Tab.}{Tabs.}
\Crefname{section}{Section}{Section}
\Crefname{appendix}{App.}{App.}
\newtheorem{theorem}{Theorem}
\newtheorem{lemma}{Lemma}
\newtheorem{proposition}{Proposition}
\newtheorem{claim}{Claim}
\newtheorem{corollary}{Corollary}
\newtheorem{definition}{Definition}
\newtheorem{fact}{Fact}
\newtheorem{remark}{Remark}
\newtheorem{assumption}{Assumption}
\def\R{\mathbb{R}}
\def \P{\mathbb{P}}
\def\vp{{\bm{p}}}
\def\vv{{\bm{v}}}
\def\vx{{\bm{x}}}
\newcommand{\E}{\mathbb{E}}
\DeclarePairedDelimiter{\ceil}{\lceil}{\rceil}
\DeclarePairedDelimiter{\set}{ \{ }{ \} }
    \newcommand{\ggnote}[1]{}
    \newcommand{\apnote}[1]{}
    \newcommand{\knote}[1]{}
    \newcommand{\ganote}[1]{}
    \newcounter{note}[section]
    \renewcommand{\thenote}{\thesection.\arabic{note}}
    \newcommand{\ggnote}[1]{\refstepcounter{note}\textcolor{blue}{$\ll${\bf Guru~\thenote:} {\sf #1}$\gg$\marginpar{\tiny\bf GG~\thenote}}}
    \newcommand{\apnote}[1]{\refstepcounter{note}\textcolor{violet}{$\ll${\bf Andres~\thenote:} {\sf #1}$\gg$\marginpar{\tiny\bf AP~\thenote}}}
    \newcommand{\knote}[1]{\refstepcounter{note}\textcolor{red}{$\ll${\bf Kshipra~\thenote:} {\sf #1}$\gg$\marginpar{\tiny\bf KBL~\thenote}}}
    \newcommand{\ganote}[1]{\refstepcounter{note}\textcolor{purple}{$\ll${\bf Gagan~\thenote:} {\sf #1}$\gg$\marginpar{\tiny\bf GA~\thenote}}}
\newcommand{\ostat}[2]{\ensuremath{v^{( #1, #2) }}}
\newcommand{\robustmechanism}{Intermediary-Proof Mechanism \xspace}
\newcommand{\heteromechanism}{Intermediary-Proof Mechanism for Heterogeneous Items \xspace}
\newcommand{\robustmechanismshort}{IPM\xspace} 
\newcommand{\bu}{\ensuremath{i}}
\newcommand{\ag}{\ensuremath{l}}
\newcommand{\ob}{\ensuremath{j}}
\newcommand{\discon}{Buyer-Disconnectability\xspace}
\title{Maximizing revenue in the presence of intermediaries}
\author[]{Gagan Aggarwal}
\author[]{Kshipra Bhawalkar}
\author[]{Guru Guruganesh}
\author[]{Andres Perlroth}
\affil[]{Google Research}
\date{}
\begin{document}

\maketitle

% \begin{abstract}
% Traditional Revenue maximization mechanism design focuses on selling $k$ identical items to bidders
% who do not collude. We extend this to consider the setting where the bidders can collude in any arbitrary manner through an aggregator. We show that in this case, we can achieve a constant factor
% of the optimal welfare regardless of the market structure. We extend these results to the case where the $k$  items are ad slots and have a corresponding position normalisers. 
% \end{abstract}
\begin{abstract}
We study the mechanism design problem of selling $k$ items to unit-demand buyers with private valuations for the items. A buyer either participates directly in the auction or is represented by an intermediary, who represents a subset of buyers. Our goal is to design robust mechanisms that are independent of the demand structure (i.e. how the buyers are partitioned across intermediaries), and perform well under a wide variety of possible contracts between intermediaries and buyers.

We first consider the case of $k$ identical items where each buyer draws its private valuation for an item i.i.d. from a known $\lambda$-regular distribution. We construct a robust mechanism that, independent of the demand structure and under certain conditions on the contracts between intermediaries and buyers, obtains a constant factor of the revenue that the mechanism designer could obtain had she known the buyers' valuations. In other words, our mechanism's expected revenue achieves a constant factor of the optimal welfare, regardless of the demand structure. Our mechanism is a simple posted-price mechanism that sets a take-it-or-leave-it per-item price that depends on $k$ and the total number of buyers, but does not depend on the demand structure or the downstream contracts.

Next we generalize our result to the case when the items are not identical. We assume that the item valuations are separable, i.e. $v_{\bu \ob} = \eta_\ob v_\bu$ for buyer $\bu$ and item $\ob$, with each private $v_\bu$ drawn i.i.d. from a known $\lambda$-regular distribution. For this case, we design a mechanism that obtains at least a constant fraction of the optimal welfare, by using a menu of posted prices. This mechanism is also independent of the demand structure, but makes a relatively stronger assumption on the contracts between intermediaries and buyers, namely that each intermediary prefers outcomes with a higher sum of utilities of the subset of buyers represented by it.
\end{abstract}
% \begin{abstract}
% 
% We study a mechanism design problem of selling $k$ identical items to bidders having private valuation for getting one of the items. Depending on the specifics of each bidder, they may decide to participate directly on the auction or to be represented by an intermediary (who represents a subset of bidders). The paper's main result is to construct a robust mechanism that, independently of how intermediaries and bidders interact, obtains a constant factor of the revenue the mechanism designer could obtain had he known all private information. In other words, our mechanism's expected revenue achieves a constant factor of the optimal welfare regardless of the market structure. We further generalize our results for the case where items are ad slots and have a corresponding position normalizers. 
% \end{abstract}
%“The game is always bigger than you think” captures the idea that important aspects of a game  
\thispagestyle{empty}
\newpage
\setcounter{page}{1}
%\input{00intro}
%\input{rebuttal2}

%\newpage
 \section{Introduction}
\normalem
%- motivation of the question
The fact that {\em the game is always bigger than you think} is a major concern in applied mechanism design. In many applications, auctioneers design their auction assuming that the ultimate consumers are the ones bidding in the auction. However, in many markets, buyers are not bidding in the auction but are, instead, being represented by intermediaries. 
For example, intermediaries are a common, though often ignored, element of advertisement auctions on the Internet. In Sponsored search and Display ad auctions, advertisers are often represented by ad agencies who optimize their online ad campaigns.\footnote{See \citet{choi2020online} for a comprehensive review of intermediation in ad-auctions.} Another example is the AdExchange marketplace that sells advertisements on webpages. In an AdExchange, the advertisers are represented by Ad Networks, who do real-time bidding on their behalf into the AdExchange auction \citep{muthukrishnan2009ad}. Similarly, intermediation plays an important role in over-the-counter markets. As described in \citet{duffie2005over}, customers interested in buying such assets usually execute their trades with dealers (intermediaries), which are the ones interacting with the seller.
%% continue Aside from online advertisement, intermediation plays a central role in over the counter markets (for example US Treasury Auctions \cite{hortaccsu2010mechanism, boyarchenko2015intermediaries}."

In this context, classic auction design may under-perform since intermediaries may 
enable buyers to engage in collusive behavior, or because of the double marginalization issue impacting the  intermediaries' preferences for outcomes. The information about which buyers each intermediary represents and the 
%contracts between 
intermediaries' incentives in the auction (which depend on the downstream negotiation with buyers) is usually hard for the seller to know. With this in mind, we design mechanisms that do not rely on this information. Specifically, we design an intermediary-proof mechanism with two properties:
\begin{enumerate}
    \item It is independent of the market structure -- which buyers are represented by which intermediaries.
    \item The revenue obtained is a constant factor of the optimal welfare (the highest revenue the auctioneer could possibly obtain).
\end{enumerate} 
Furthermore, our mechanism consists of a simple posted-pricing scheme that depends only on the number of buyers, number of items and some order statistics of the buyers' valuation distribution. 

\subsection{Model and main results}
Our model consists of one seller (the auctioneer) who owns $k$ items and $n$ unit-demand buyers. 
We start with the problem where the items are identical and later tackle the case of non-identical but related items. For the identical items case, each buyer has the same valuation for each item, which is unknown to the seller. The valuations $(v_\bu)_{\bu \in [n]}$ are assumed to be drawn independently from the same distribution $F$. We further assume that the distribution $F$ is $\lambda$-regular (see Definition~\ref{def:lambda-regular} for details); note that the class of $\lambda$-regular distributions allows us to smoothly interpolate between monotone hazard rate  (MHR) distributions ($\lambda=0$) and the more general class of regular distributions $(\lambda=1)$. Each buyer either participates directly in the auction or is represented by one of $m$ intermediaries. Taking an agnostic approach to how these decisions are made, we consider a general partition of the buyers where all the buyers in one part are represented by the same intermediary. Next, we abstract from the downstream decisions and contracts between buyers and intermediaries and consider a general class of utility functions for intermediaries. This class is motivated by, and includes, two of the most natural and commonly-studied intermediary utility functions: (a) surplus-maximizing intermediary which maximizes the surplus of the buyers it represents and (b) the monopolist intermediary that maximizes its profit from captive buyers. The class consists of all utility functions that satisfy three natural properties: that each intermediary can secure at least a payoff of zero (individual rationality); that their payoff cannot be greater than the surplus created by the buyers (i.e.~intermediaries don't create value); and that when the seller posts a uniform price $r$ per item such that there are $q$ Buyers represented by Intermediary $\ag$ with valuation at least $r$, then Intermediary $\ag$ will buy at least $\tau_\ag q$ items in expectation (\discon). 

The first result of the paper is the construction of a mechanism that for any partition of the buyers and any aggregators' utility functions that satisfy the above conditions, guarantees at least $\tau \frac{1}{2 }  (1-\lambda)^{\frac 1 \lambda} \left(1-\frac 1 e\right)$ fraction of the welfare where $\tau = \min \tau_\ag$ (see Theorem~\ref{th:1}). This translates to a $\tau \cdot 1/2e (1-1/e)$ approximation factor for MHR distributions and a $0$ approximation factor for regular distributions\footnote{We show that any mechanism that gets the same revenue from different market structures (like ours) cannot get a constant factor of the revenue for regular distributions -- see Section~\ref{sec:discussion} for details.}. In other words, independent of which buyers each intermediary represents and the exact details %outcome
of the downstream interaction between buyers and intermediaries, 
our mechanism gets a constant-factor approximation of the revenue the seller could obtain if she had complete information about the buyers' valuations. We would like to note that a priori, it is not at all obvious that such a robust mechanism exists. As we show in \Cref{sec:robustness}, implementing the revenue-maximizing mechanism for a particular partition of buyers can perform poorly when faced with a different partition of buyers. The mechanism is a uniform posted-pricing scheme that only depends on the number of items $k$, number of bidders $n$, and the distribution $F$. More precisely, the seller posts a price per item $p^{R} =\E [\ostat{1}{ \lceil n/k \rceil}]$, where $\ostat{1}{\ob}$ is the largest order statistic of $\ob$ samples drawn from $F$.\footnote{In case of excess demand, items are randomly allocated among interested bidders.}

%\apnote{this paragraph wants to explain why this result is not trivial?  - maybe skip.
% The result that a mechanism achieves a a constant factor from welfare for every buyers/intermediaries configuration is not trivial at all. Section 3 shows that implementing the revenue-maximizing mechanism for a particular configuration can have a low performance when a different configuration is the one presented in the auction. For instance, if the seller believes that most of the buyers are present in the auction he should implement an auction where the price endogeneously arise from competition. However, such mechanism can do poorly when buyers are mostly aggregated with few intermediaries in which case a collusive behavior arises. Likewise, when most buyers are represented by a similar bundling strategy allows to set a high price ameliorating the risk of not selling compare
%\ganote{added a couple of lines, ptal}

%\ganote {I have changed configuration to partition in a few places. configuration sounds better, but we need to define it at some point as it is not as self-evident what it means}

%\apnote{maybe go with this before the lambda-regulat; to be according to the flow of the paper}
Next in Section~\ref{sec:ext}, we consider the problem when the items are not identical. In particular, we consider the case of separable values, i.e. Buyer $\bu$'s valuation of receiving item $\ob$ is given by $\eta_\ob v_\bu$. This is inspired by sponsored search advertising, where different positions on the page may have different click-through-rates and therefore different values (note that this does not model sponsored search exactly as in our model, we can sell an arbitrary subset of items, not necessarily a prefix). Compared to the identical items case, the difficulty here is that to compete with welfare, higher-weighted items must be allocated to high-valuation bidders. This requires the mechanism to charge a higher cost for higher-weight items and screen buyers according to their preference for high-value/high-cost items. To do this, we make a more detailed assumption about the intermediaries' preferences and consider the case where each intermediary is a surplus-maximizing intermediary. With this, we construct a sequential posted-price mechanism that, for every partition of buyers, obtains revenue equal to a constant fraction of the welfare. Although the pricing scheme is more subtle than for the homogeneous case, it nevertheless still possesses the nice property of depending only on the number of items $k$, the weight values $(\eta_\ob)_{\ob \in [k]}$, the number of bidders $n$, and the valuation distribution $F$.

%\knote{Added a brief mention of section 7 results, feel free to edit.}
%\ganote{we allude to this briefly in the footnote above, removing from here}
%We also briefly explore the feasibility of extending our results to regular distributions, i.e. the case of $\lambda=1$. We show that for a regular distribution with $n$ buyers represented by a single aggregator, there is an $\Omega(\log n)$ gap between optimal revenue and the revenue that can be obtained via item pricing. This also highlights the challenge of posting prices that are simultaneously competitive with all demand structure ranging from competition to monopsony. We leave it as an open question to determine the feasibility of simple mechanisms that are robust to demand structure when the buyer's values are drawn from regular distributions.

\subsection{Why is this interesting?}
%\ganote{appropriate title needed}

From an {\bf applied perspective}, our intermediary-proof %\robustmechanism 
mechanisms have appealing properties which, aside from the revenue guarantees, make them very suitable for applications. First, their simplicity. Our mechanisms are a posted pricing, requiring a simple menu to implement --
%\ganote{improve wording, or simply remove the previous 5 words} 
this makes bidders less likely to make strategic errors \citep{shengwuli2017}. This contrasts with optimal auctions in multidimensional environments (like ours) which are usually quite complex \citep{daskalakis2017strong}. Second, the mechanisms are detail-free -- the seller only requires knowledge of the number of buyers and good estimates about buyers' valuations \footnote{In fact, our mechanism only requires us to estimate the order statistics of the buyers distributions.}, and does not need to learn any information about the market structure or the details of the intermediaries' utility functions to implement the mechanism. 

%\ganote{added para below, ptal}
In fact, the price that we use, $\E [\ostat{1}{ \lceil n/k \rceil}]$, where $\ostat{1}{\ob}$ is the largest order statistic of $\ob$ samples drawn from $F$, is a very natural function of the valuation distribution. Intuitively, this captures the {\em real} competition for each item since there are $n/k$ buyers per item. Although pricing based on the order statistics of the buyers' distributions are commonly used in mechanism design, we believe this particular function is new and its generalizations might prove useful in other multi-item settings.

% on the moments of their % distributions~\citet{azar,carrasco2018optimal}

A consequence of our results (see \Cref{coro:1}), that might be of independent interest, is that the optimal revenue for every demand structure and bidders' utility functions (that satisfy the above conditions) is at least a constant fraction of the optimal welfare (which is the same for all demand structures) assuming that the distribution is $\lambda$-regular with $\lambda$ bounded away from $1$ by a constant.
In particular, this is true for the full competition case, when all buyers participate directly in the auction and their utility functions are simply quasi-linear. Consider any per-item uniform posted-price mechanism (for the identical items case) that is constant-competitive with optimal revenue under full competition. Since a uniform posted-price mechanism achieves the same revenue under different demand structures (upto a factor of $\tau$), this mechanism would also be constant-competitive with respect to optimal welfare (and thus optimal revenue) under all demand structures and bidders' utility functions satisfying the above conditions.

%\ganote{added para below, ptal}
Another result that might be of independent interest is Lemma~\ref{lem:1}, which shows that the largest order statistic of a $\lambda$-regular distribution is also distributed according to a $\lambda$-regular distribution. The analogous property for MHR distributions has proved quite useful in analyzing mechanisms for those distributions, and we believe this property might be helpful in extending some of those results to the $\lambda$-regular case.

%\ganote{add more technical coolness, e.g. the simple price structure for identical.  and the tiered one for non-identical -- not sure what exactly to say here, max of lambda-regular is lambda-regular}

 \subsection{Related Work}

Our work differs from the literature about intermediation in auctions by taking a robust approach to the market structure and the downstream contracts between buyers and intermediaries. Prior work has tried to solve the optimal mechanism for specific market structures while our mechanism provides a constant approximation. Because finding the optimal mechanism has the technical difficulty of first the need to characterize the subgame between aggregators and buyers, most of this literature has centered their attention in describing the downstream dynamics with a restrictive assumption on the number of items, intermediaries, and buyers \citep{hummel2016incentivizing, stavrogiannis2013competing, balseiro2019intermediation, balseiro2017optimal}.\footnote{An important exception is \citet{allouah2017auctions} who shows in a general framework that it is Pareto improving not to restrict the number of bids each intermediary can submit.} 
Closer to our work, \citet{feldman2010auctions} shows that for the one-item case, the optimal mechanism when each intermediary has one buyer captive consists of a second price auction (SPA) with a reserve price that is decreasing with the number of bidders. Our mechanism instead proposes a fixed pricing, which increases with the number of bidders.\footnote{\citet{feldman2010auctions} also restricts attention to valuations distribution satisfying the monotone hazard rate property, and hence, our revenue guarantee apply to their setting.} Finally, when buyers' valuations are fully correlated, \citet{bergemann2017selling} shows that the optimal mechanism is a posted-price mechanism.

\subsubsection{Collusion in Auctions}
The second stream of literature related to our work is the one studying collusion in auctions (see \citet{krishna2009auction} for a textbook review of this topic). This area studies auction design when bidders form bidding rings (cartel) and collude with other bidders in the ring. In this context, a bidding ring can be seen as an intermediary in our model. \citet{che2006robustly} provide a methodology to make a mechanism robust to any form of collusion. \citet{che2009optimal,pavlov2008auction} provide conditions where they can reshape the optimal mechanism into a collusion-proof one. Although such results go along the lines of our robustness approach regarding the market structure (here, how the cartels are conformed), such mechanisms differ from ours in two crucial layers. (i) They depend on the bidders' specific payoff structures, while in our mechanisms, such dependence is minimal. (ii) Their mechanisms are not ex-post individually rational while our robust mechanisms satisfy this property.\footnote{\citet{che2018weak} shows that under some restriction on the cartels, the optimal collusion-proof mechanism satisfies ex-post individually rationality.}  

%\ganote{for the following, we might want to pick the most relevant work, and refer to some other resource for a survey of the remaining results?}

\subsubsection{Multi-parameter Revenue-Optimal Auctions}
Our work is also closely related to the work on multi-parameter mechanism design. With intermediaries controlling multiple bids, they resemble buyers buying multiple goods with set function valuation over the set of goods.  In particular, when the utility function of each intermediary is the aggregate surplus of the buyers it represents, the intermediaries' value for a set of $i$ items is the sum of its $i$ highest values. This valuation function is submodular as adding an item to a smaller set yields higher incremental value. Note however that the intermediaries' values for disjoint subsets of items are not independent. 
%\ganote{Our setting might be covered by XOS valuations. Check. Quest for simple mechanisms is still well-motivated}

%\ganote{need to say: papers solve the general/correlated multi-parameter problem. the optimal mechanisms are complex. literature on simple mechanisms. Since our valuations are correlated, most of the previous literature on simple mechanisms does not apply. in fact many results on how simple mechanisms don't do well for general correlated. the closest postive is cai-zhao? does it apply? still not detail free and depends on demand structure. as well as assumes the exact form of the utility function. }
%\knote{cai-zhao still needs independence}
There is a long line of research studying multi-parameter mechanism design. \cite{CaiDW12a, BriestCKW10, AlaeiFHHM12} show how to construct revenue-optimal mechanisms under fairly general conditions. However, these mechanisms are necessarily complex~\cite{Thanassoulis04, ManelliV06, HartR2015a}, leading to many papers studying simple mechanisms for optimizing revenue in different settings. For example, there is a stream of work providing simple mechanisms for set function valuations (such as additive, unit-demand, submodular) \cite{hart2017approximate, LiY13, BabaioffILW20, Yao15, RubinsteinW15, CaiDW16, ChawlaM16, CaiZ17}. We refer the readers to \cite{CaiZ17} for the state of the art on this. These results do not apply to our setting as they need the buyers' valuations for disjoint subsets of items to be independent which is not true in our setting. Also, these mechanisms take the form of VCG with entry fee. VCG-like mechanisms can perform poorly in the presence of intermediaries, need to be tailored to the demand structure and make the intermediaries' problem of bidding on behalf of their buyer's harder.
%\cite{CaiD11, CaiH13} provide nearly optimal mechanisms for unit-demand and additive buyers %-- their mechanisms require independence and are not simple. 
For the correlated values setting, \cite{HartN13} show that finite menu-size mechanisms don't perform well under arbitrary correlations even with two items and one bidder.\footnote{ However, when the seller is a max-min agent a pricing mechanism per item turns out to be optimal  \cite{Carroll2017}.} In contrast, we show that with a specific type of correlation, simple mechanisms with a small menu size perform well.

% FangN06, JanielMM07, Pycia06, Lev11, Pavlov11,
Another related sequence of papers \cite{ChawlaHK07, ChawlaHMS10, ChawlaMS10, AlaeiFHHM12, CaiDW12b, KleinbergW12} provide sequential posted-price mechanisms with a good approximation to the revenue. As noted in the introduction, using Corollary~\ref{coro:1}, these mechanisms can be used to obtain constant competitive mechanisms for our setting (see \Cref{prop:n}). %Subsequent works show that optimal mechanisms for discrete distributions can be represented as a linear program \cite{CaiDW12a, BriestCKW10, AlaeiFHHM12}. 
%We assume MHR valuations in our results, when buyer's valuations are drawn from independent MHR distributions, \cite{CaiD11, CaiH13} provide nearly optimal mechanisms for unit-demand and additive buyers - their mechanisms require independence and are not simple. 
In fact, a special case of our results (namely, the homogeneous item case with valuations drawn from an MHR distribution) can be obtained by combining the results of \citet{ChawlaHK07, ChawlaHMS10, ChawlaMS10} and a  modification of the result of \citet{kleinberg2013ratio}, which considers the ratio of the optimal revenue to optimal welfare and bounds this for the case of $c$-bounded regular distributions. We note that it is not clear how to extend this approach to the case of $\lambda$-regular distributions or the case of heterogeneous items. For the case of heterogeneous items, existing posted price mechanisms do not translate well to the intermediary setting and the result of \citet{kleinberg2013ratio} does not apply to $\lambda$-regular distributions. 

 \section{Model} \label{sec:model}
%\ganote{It would be nice to stick to the same indices for buyers, intermediaries and items. I think we have used $i$ for buyers consistently. But $j$ has been used for intermediaries and items in various places. I have created a macro bu (-> i) for buyer, ob (-> j) for item and ag (-> l) for intermediaries. I have tried to change this in all sections, please fix if you find an inconsistency}\\

%{\bf Baseline Model:}
The baseline model consists of a seller with $k$ items to sell. We first consider the case where the items are homogeneous and later extend our results to the case of non-homogeneous items with separable valuations (see Section~\ref{sec:ext}). 
There are $n$ unit-demand buyers. For each $\bu \in [n]$, buyer $\bu$ has a valuation $v_\bu$ for obtaining an item. We assume that the valuations $v_\bu$ are drawn independently and identically distributed (i.i.d.) according to a distribution $F$ with support on  $\R_{+}$.

%\ganote{aggregator -> intermediary}
\bigskip 

A buyer may choose to bid directly in the auction or through an intermediary. Our model is agnostic to how these decisions (and negotiations) are made and, hence, we allow for an arbitrary partition of buyers across intermediaries ultimately bidding in the auction.\footnote{A buyer bidding directly in the platform corresponds to an intermediary exclusively representing him.} 

\begin{definition}[Demand Structure]
The {\em demand structure} $(m, \pi)$ of a market with $n$ buyers is composed of $m \leq n$ intermediaries and a partition $\pi(\cdot)$ of the set of buyers $[n]$. For an intermediary $\ag \in [m]$ $\pi(\ag)$ denotes the set of buyers that $\ag$ represents. 
\end{definition}

In particular, notice that our model includes the {\em competition case}, when all buyers are bidding directly in the auction ($m^C=n$, $\pi^C(\ag)=\{\ag\}$), and the {\em monopsony case}, when all buyers are represented by the same intermediary ($m^M=1$, $\pi^M(1)=[n]$). 

Although we consider the demand structure as exogenously given, our results can be easily extended to a setting where buyers strategically decide whether to participate directly in the auction or to be represented by a particular intermediaries (see Remark~\ref{rem:1} below for details).

\bigskip
%Our model is agnostic to the exact terms of engagement between buyers and intermediaries.
\noindent{\bf Intermediaries' utility functions} \\
%We also want our model to be agnostic to the exact terms of engagement between buyers and intermediaries. Hence, 

Motivated by two of the most natural and commonly-studied intermediary utility functions -- (a) surplus-maximizing intermediary which maximizes the surplus of the buyers it represents and (b) the monopolist intermediary that maximizes its profit from captive buyers -- we consider a general class of intermediary utility functions. Let $\vv_\ag= (v_\bu)_{\bu \in \pi(\ag)}$ be the valuations of the buyers represented by Intermediary $\ag$. Given $x_\ag \in [k]$ items allocated to Intermediary $\ag$ for a price $p_\ag \in \R_+$, we denote by $U_\ag(x_\ag,p_\ag;\vv_\ag)$ Intermediary $\ag$'s utility function. We apply the following natural restrictions: 
%\knote{name the conditions IR, does not create value, separability(find better name)}
\begin{enumerate}
    \item (Individual Rationality) The Intermediary $\ag$ does not obtain any item, it gets a payoff of zero (i.e.~$\ag$ always opt out from the auction). 
    \item (Does not create value) For every vector of valuations $\vv_\ag$, allocation $x_\ag$, and price $p_\ag$ we have that
    \begin{align*} U_\ag (x_\ag,p_\ag;\vv_{\ag})\le V(x_\ag,p_\ag;\vv_{\ag}) :=  &\max_{z_\ag \in [0,1]^{\pi(\ag)}} \left\{ \sum_{\bu\in \pi(\ag)} v_{\bu}  z_{\bu \ag} -p_\ag\; \bigg| \; \sum_{\bu\in \pi(\ag)} z_{\bu \ag}\le x_\ag \right\}
    \end{align*}
    \item (\discon) When Intermediary $\ag$ can obtain the items for a uniform price of $r$ per item, Intermediary $\ag$ wants to purchase at least $\tau_\ag \cdot |\{ \bu: v_\bu \ge r\; \mbox{ for } \bu \in \pi(\ag)\}|$ items in expectation, for $\tau_\ag\in (0,1]$. In particular, for a surplus-maximizing intermediary $\tau_\ag = 1$ while for a monopolist intermediary $\tau_l\geq \frac 1 e$ whenever $F$ is MHR (see Lemma~\ref{lem:2} for details). %\ganote{@andres: changed wording of condition 3 and its exposition below, PTAL}\apnote{LGTM}
    %\item When the seller posts a price $p$ per object, Aggregator $\ag$ optimal decision is to purchase $\tau_\ag|\{ i: v_\bu \ge p\; \mbox{ for } i \in \pi(\ag)\}|$ objects, where $\tau_\ag\in [0,1]$.
   % $\vv_\ag$ and a feasible allocation $\vx_\ag$, let $\hat p_\ag = \sum_{j \in \pi(\ag)}  v_\bu \hat x_{ij} $. Then for every $p_\ag\le \hat p_\ag$ we have that $U_\ag(\vx_\ag,p_\ag|\vv_\ag) \ge 0$.
    
    %\apnote{We can have a more general formulation : for every vector of valuations $\vv_\ag$ and $l\le k$, let $\hat {\vx}_\ag = \argmax \{ \sum_{j \in \pi(\ag)} x_{ij} v_\ag\;|\;\sum_{j \in \pi(\ag)} x_{ij}  \le l\}$ and $\hat p_\ag = \sum_{j \in \pi(\ag)} \hat x_{ij} v_\ag $. Then for every $p_\ag\le \hat p_\ag$ we have that $U_\ag(\hat \vx_\ag,p_\ag|\vv_\ag) \ge 0$.}
\end{enumerate}
%\ganote{another name for separability, I like the name bounded double marginalization}\apnote{I would rather chenge to independency or something like that: the restriction is not on the degree of double marginalziaiton. It is on that behavior of the intermediaries when prices are separable is kind of independent among its buyers }\ganote{How about "Disconnectability" (not a real word) or "Buyer-Disconnectability"? Indepedent has strong connotation in terms of distributions and draws that it would be good to stay away from that. Defined it as a macro "discon" for now}
Condition 1 implies that intermediaries cannot obtain a negative payoff. Condition 2 captures that intermediaries do not generate value, hence, their payoff is bounded by $V(x_\ag,p_\ag;\vv_{\ag})$, the highest surplus Intermediary $\ag$ can generate among buyers in $\pi(\ag)$.  Condition 3 is a {\em \discon} condition: when items are not bundled and have a uniform per-item price $r$, if $s$ buyers in $\pi(\ag)$ would like to buy an item at the price $r$, then, independent of the valuations $\vv_{\pi(\ag)\setminus s}$, Intermediary $\ag$ wants to buy at least $\tau_\ag\cdot s$ items at that price. The probability $\tau_\ag$ captures the level of double-marginalization that may arise due to the downstream interaction between Intermediary $\ag$ and Buyers $\bu \in \pi(\ag)$. For example, when Intermediary $\ag$ is a monopolist (with respect to the buyers in $\pi(\ag)$), trying to maximize its own revenue, we have that $\tau_\ag \ge \frac 1 e$ if the distribution $F$ is MHR (see Lemma~\ref{lem:dm} for a precise statement).
%\ganote{@andres: added a phrase, is that accurate?}\apnote{yes. btw, I added $s\le k$}\ganote{this may not be necessary -- we have to anyway deal with the case where the intermediary cannot obtain as many items as it wants due to others taking them, perfect. so let's keep the way it is. ok}

%\ganote{by monopolistic power, do you mean they have captive buyers} \apnote{is it now better?}

%Aggregators $\ag$ buys the same items as the items Advertisers $\bu \in \pi(\ag)$ would buy.
%when aggregators are able to select how many items to buy, they would purchase as much as what 
%imposes a natural restriction on the set of contracts between each aggregator and their respective advertisers. It implies that if the seller offers to Aggregator $\ag$ a take-it-or-leave-it offer for a bundle $\vx_\ag$ for a price less than the value created by the bundle, $\hat p_\ag$, then Aggregator $\ag$ prefers to buy the bundle at $p_\ag$ rather than rejecting it an obtaining a payoff of zero.

The restrictions on the intermediaries' utility functions include models previously studied. Consider the following examples:

\begin{itemize}
\item A negotiation between buyers and intermediaries (with no asymmetric information) can lead to a utility function of the form
\begin{equation}\label{eq:1} U_\ag(x_\ag,p_\ag;\vv_\ag) = \alpha_\ag(v_\ag,x_\ag,p_\ag) V(x_\ag,p_\ag;\vv_\ag),\end{equation} %\left(\sum_{i\in \pi(\ag)} v_{\bu}  x_{\bu j} - p_\ag\right),\end{equation}
where $\alpha_\ag(\vv_\ag,x_\ag,p_\ag)$ may represent either Intermediary $\ag$'s bargaining power or an auditing fee. In particular, when $\alpha_\ag\equiv 1$, Intermediary $\ag$ has full bargaining power. %and also complete knowledge of the buyers' valuations. 
%\ganote{@andres: changed "and also" to ", perhaps due to". does that make sense?}\apnote{I think this is wrong. The bargaining power and the knoweldge of buyers' valuations are not necessary related to. (perhaps seem causal)} \ganote{And we need both for this? I was thinking that if the intermediary knows the buyers valuations, they would have full bargaining power. happy to go back to the original}\apnote{well, this depends. If they are like monopolist this is true. But if they acutally need to negotiate back and forth then even knowing the valuation doesnt guarantee you ca n take all the surplus} \ganote{got it. let's go back} 
A simple exercise shows that for this class of utilities functions, the \discon condition holds with $\tau_\ag = 1$.  

\item
Intermediaries are profit-maximizers and lack knowledge of buyers' valuations. The intermediary can run a revenue maximizing truthful auction to maximize its profit.
%$U(\vx_\ag, p_\ag | \vv_\ag) = \sum_{\bu \in \pi(\ag)} r_\ag - p_\ag$ where $r_\ag$ is the price the intermediary charges to the buyers. The intermediary can run a revenue maximizing truthful auction to maximize its profit.
%Utility functions derived from profit-maximizer intermediaries when they lack knowledge about buyers' valuations. 
Lemma~\ref{lem:dm} shows that these utilities functions satisfy our requirements with $\tau_\ag  = c(\lambda)$. 
%\item % Yet another example would] be an 
%Intermediary gets a fixed fraction of the utility obtained by each buyer, enforced perhaps by auditing. This would satisfy our requirement with $\tau_\ag = 1$.
\end{itemize}
%\knote{@Andres, added some more math to the profit maximizers example. See if okay. }\apnote{I don't think that the utiltiy is linear like that. Even for uninform-pricing the utility is 0 if $MR(v_\bu)\le p$ and $MR^{-1}(p)$ for $v_\bu>p$} 
%\ganote{an intermediary could get a fixed percentage because they have implemented some sort of auditing and they just charge a fixed percentage as part of a contract}\apnote{Wouldn't $\alpha$ be the fixed percentage charged by the intermediary?}\ganote{yes, and it might be worth mentioning. I added in the end, please remove if it looks weak} \apnote{I am not sure. If you are taking a fee not a percentage of the surplus, then this is correct. But if your fee is just a percentage of the surplus, then you always want to buy when bidders have a greater valuation than $r$- you only increase the surplus and hence your fee thus $\tau_\ag=1$. (is like the $\alpha$ in Equation~\eqref{eq:1}. } \ganote{you are absolutely right. $\tau_\ag$ is 1 here. is this worth keeping?}
%\apnote{maybe add to another interpretation of $\alpha_\ag$? (See full discussion in comments)}
%\knote{Actually I am confused, is third a special case of first?} %\apnote{Well, actually what do you mean when charging a fraction of the buyers' utility? the buyer's utility is $v_\bu x_{ij}$ - the price that the intermediary and buyer agree upon, but such pricing is fractin of the utility.... something is a bit circular...}

% \bigskip

{\bf The seller's problem.} The seller wants to implement a mechanism that maximizes her revenue. From the revelation principle \citep{myerson1981optimal}, we can restrict our attention to direct bayesian incentive compatible mechanisms (BIC). That is, $(x_\ag,p_\ag)_{\ag \in [m]}$, where $x_\ag:[\underline v, \overline v]^{n} \to \Delta([k])$,\footnote{For a set $X$, we denote by $\Delta (X)$ the set of probability measures on $X$.} $p_\ag:[\underline v, \overline v]^{n} \to \mathbb{R}_{+}$, satisfying  
%\ganote{does the mechanism allocate an item to a Buyer or to an intermediary. I don't think we can specify which buyer an item will go to. If we change here, we should change in section 5 as well. DONE}
\begin{align*}
    \E[U_\ag(x_\ag(\vv_\ag,\vv_{-\ag}),p_\ag(\vv_\ag,\vv_{-\ag});\vv_{\ag}) \; | \vv_{\ag}] &\ge \E[U_\ag(x_\ag(\vv'_\ag,\vv_{-\ag}),p_\ag(\vv'_\ag,\vv_{-\ag});\vv_{\ag}) \; | \vv_{\ag}]  & \mbox{(IC)} \\
     \E[U_\ag(x_\ag(\vv_\ag,\vv_{-\ag}),p_\ag(\vv_\ag,\vv_{-\ag});\vv_{\ag}) \; | \vv_{\ag}]  &\ge 0  & \mbox{(IR)} \\
 \sum_{\ag\in [m]} x_{\ag}(\vv) &\le k \qquad \mbox{ (a.s.)}.
\end{align*}
We denote by $\mbox{Rev}(F,m,\pi;(\vx,\vp)) = \E\bigg[\sum_{\ag\in [m]} p_\ag(\vv) \bigg]$ the revenue the seller receives by implementing mechanism $(\vx,\vp)$.

We define $\mbox{Rev}(F,m,\pi)$ to be the optimal expected revenue.

\begin{definition}[Optimal Expected Revenue, $\mbox{Rev}(F,m,\pi)$]\label{def:optrev}
$$\mbox{Rev}(F,m,\pi)= \max_{(\vx,\vp) \in \mbox{(BIC)}} \mbox{Rev}(F,m,\pi;(\vx,\vp)).$$
\end{definition}

\begin{definition}[\ostat{\ob}{t}]\label{def:ostat}
Define \ostat{\ob}{t} to be the $\ob^{th}$-order statistic of $\vv=(v_\bu)_{\bu\in [t]}$, where each $v_\bu$ is drawn i.i.d. according to the distribution $F$.
\end{definition}
%\ganote{does this look ok with t instead of n, although the $v_\bu$ are already drawn? }

\begin{definition}[Expected Optimal Welfare, $\mbox{Wel}(F)$]\label{def:optwel}
$$ \mbox{Wel}(F) =  \E\bigg[\max_{\vx\in \R^n} \sum_{\bu \in [n]} v_\bu x_\bu\bigg] = \E\bigg[\sum_{\ob=1}^k \ostat{\ob}{n}\bigg],$$
where \ostat{\ob}{n} is the $\ob^{th}$-order statistics of the random variable $\vv=(v_\bu)_{\bu \in[n]}$.
\end{definition}

\begin{remark}\label{rem:1}
For the purpose of a clearer exposition, our model restricts attention to cases where the demand structure is independent of the mechanism selected by the seller. However, our results can be easily generalized for the cases when $(m,\pi)$ is a function of $(\vx,\vp)$.
\end{remark}

\noindent{\bf Revenue benchmark:}
We will use Expected Optimal Welfare, $\mbox{Wel}(F)$ defined above as our revenue benchmark. Our use of this benchmark is motivated by two factors: (a) The maximum revenue a seller can extract can never exceed this benchmark (as noted in the lemma below), and (b) the optimal revenue mechanism (and the optimal revenue) is typically difficult to characterize in multi-parameter settings. 

%\ganote{Removing: Note that, because the welfare represents the revenue the seller would obtain if she had full information about buyers' valuations, the following lemma shows that the welfare is always greater than the revenue that the seller can extract in the revenue-maximizing mechanism.}

\begin{restatable}{lemma}{lemzero}
\label{lem:0}
For every distribution $F$ and demand structure $(m,\pi)$, $\mbox{Rev}(F,m,\pi)\le \mbox{Wel}(F)$.
\end{restatable}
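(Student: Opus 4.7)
The plan is to chain two simple inequalities: (i) for each intermediary, Individual Rationality combined with the ``does not create value'' condition bounds the expected payment by the expected top surplus available among its buyers; and (ii) summing these bounds across intermediaries produces a quantity that is pointwise dominated by the welfare because the sets $\pi(\ag)$ partition $[n]$ and the total allocation is at most $k$.

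Concretely, fix any BIC mechanism $(\vx,\vp)$. For each $\ag \in [m]$, first observe that the maximum in the definition of $V(x_\ag,p_\ag;\vv_\ag)$ is attained by putting $z_{\bu\ag}=1$ on the $x_\ag$ largest-valued buyers in $\pi(\ag)$ (valuations are nonnegative and $z_{\bu\ag}\in[0,1]$). Writing $S_\ag(x;\vv_\ag)$ for the sum of the top $x$ valuations in $\pi(\ag)$, the ``does not create value'' condition therefore reads
$$U_\ag(x_\ag(\vv),p_\ag(\vv);\vv_\ag) \;\le\; S_\ag(x_\ag(\vv);\vv_\ag) - p_\ag(\vv)$$
pointwise in $\vv$. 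Taking expectations and invoking the (interim) IR inequality gives
$$0 \;\le\; \E[U_\ag] \;\le\; \E\!\left[S_\ag(x_\ag(\vv);\vv_\ag)\right] - \E[p_\ag(\vv)],$$
so $\E[p_\ag(\vv)] \le \E[S_\ag(x_\ag(\vv);\vv_\ag)]$ for every intermediary $\ag$.

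Summing over $\ag$ yields
$$\mbox{Rev}(F,m,\pi;(\vx,\vp)) \;=\; \E\!\left[\sum_{\ag\in[m]} p_\ag(\vv)\right] \;\le\; \E\!\left[\sum_{\ag\in[m]} S_\ag(x_\ag(\vv);\vv_\ag)\right].$$
Since $\{\pi(\ag)\}_{\ag\in[m]}$ partitions $[n]$ and $\sum_{\ag} x_\ag(\vv) \le k$ almost surely, the inner sum selects at most $k$ distinct buyers across all intermediaries and sums their nonnegative valuations; this is pointwise at most $\sum_{\ob=1}^{k} \ostat{\ob}{n}$, whose expectation is exactly $\mbox{Wel}(F)$ by Definition~\ref{def:optwel}. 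Taking the supremum over BIC mechanisms gives $\mbox{Rev}(F,m,\pi) \le \mbox{Wel}(F)$, as required. There is no real obstacle; the only minor care needed is in reading $x_\ag$, which takes values in $\Delta([k])$ — one can first condition on its realized integer value, apply the pointwise inequalities above, and then take the outer expectation, with none of the steps affected.
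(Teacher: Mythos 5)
Your argument is correct and follows the same two-step route as the paper's own proof: bound each intermediary's expected payment by the expected surplus it could extract using IR together with the ``does not create value'' condition, then sum over intermediaries and use disjointness of the $\pi(\ag)$ plus the allocation constraint to dominate by the top-$k$ order statistics. Your write-up simply makes explicit a few steps the paper leaves implicit (identifying the maximizer in the definition of $V$, and handling the randomized allocation by conditioning), but there is no substantive difference.
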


This result is standard in the auction literature and it naturally extends to our setting since intermediaries don't create value (Condition 2 on $U_\ag$). 

\bigskip
\subsubsection*{Distributional Assumptions}
We use $\phi(v):= v - \frac{1-F(v)}{f(v)}$ to denote the virtual valuation for a value $v$, drawn from a distribution with CDF $F$ and PDF $f$.  Throughout the paper we make the following assumption. 
\
\begin{definition}[$\lambda$-regularity]\label{def:lambda-regular}
Given $\lambda\in [0,1]$, a distribution $F$ is $\lambda$-regular if $\lambda v + \frac {1-F(v)}{f(v)}$ is non-decreasing on $v\in [\underline v, \overline v]$.
\end{definition}

\begin{assumption}[$F$ is $\lambda$-regular]\label{ass:1}
The distribution $F$ has a density $f>0$ on the support $[\underline v,\overline v]$ ($\underline v < \overline v \le \infty$), and is $\lambda$-regular for $\lambda <1$. 
\end{assumption}

Observe that for $\lambda = 0$, the condition is equivalent to the monotone hazard rate condition commonly used in economics literature, while for $\lambda=1$, the condition is equivalent to the Myerson regularity condition.  A distribution satisfying the $\lambda$-regularity has three important properties which are key to the robustness of our mechanism.

\begin{restatable}{fact}{factone}  \label{lem:1} 
Let $F^{(1)}$ the distribution of the first order statistic of $n$ i.i.d. random variables drawn according to $\lambda$-regular distribution $F$. Then, $F^{(1)}$ is $\lambda$-regular.
\end{restatable}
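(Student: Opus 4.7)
The plan is to reduce the claim to a single algebraic inequality about an explicit one-variable function. Since $F^{(1)}(v)=F(v)^n$ and $f^{(1)}(v)=nF(v)^{n-1}f(v)$, the factorization $1-u^n=(1-u)(1+u+\cdots+u^{n-1})$ yields the clean identity
\[
\frac{1-F^{(1)}(v)}{f^{(1)}(v)}\;=\;\frac{1-F(v)}{f(v)}\cdot\theta(F(v)),\qquad \theta(u)\;:=\;\frac{1+u+\cdots+u^{n-1}}{n\,u^{n-1}}\;=\;\frac{1}{n}\sum_{j=0}^{n-1}u^{-j}.
\]
So the ``Mills ratio'' of $F^{(1)}$ is just that of $F$ multiplied by an explicit function of the quantile $u=F(v)$, and the whole problem reduces to tracking how this product evolves with $v$.

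First I would record two elementary properties of $\theta$ on $(0,1]$: $\theta(u)\geq 1$ (with equality only at $u=1$) and $\theta'(u)\leq 0$, both immediate from the series form $\theta(u)=\tfrac{1}{n}\sum_{j=0}^{n-1}u^{-j}$. Writing $h(v):=(1-F(v))/f(v)$ and $h^{(1)}(v):=h(v)\,\theta(F(v))$, the product rule together with the identity $h(v)\,f(v)=1-F(v)$ gives
\[
\frac{dh^{(1)}}{dv}\;=\;\theta(F(v))\,h'(v)\;+\;(1-F(v))\,\theta'(F(v)).
\]
Substituting into the expression whose monotonicity defines $\lambda$-regularity of $F^{(1)}$ and exploiting the $\lambda$-regularity of $F$ (which is a one-sided bound on $h'(v)$), the whole derivation reduces to the single algebraic inequality
\[
-(1-u)\,\theta'(u)\;\geq\;\lambda\bigl(\theta(u)-1\bigr),\qquad u\in (0,1],\ \lambda\in[0,1].
\]

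The crux, and the main obstacle, is this inequality. Because both sides are non-negative and the right-hand side is linear in $\lambda$, it suffices to verify the extremal case $\lambda=1$. A one-line calculation gives the identity $(1-u)\theta'(u)=\theta(u)-\tfrac{1}{n}-\tfrac{n-1}{n\,u^n}$, and plugging this in and clearing denominators, the $\lambda=1$ case becomes the purely polynomial claim
\[
(n-1)(1+u^n)\;\geq\;2\sum_{k=1}^{n-1}u^k\qquad\text{for }u\in(0,1].
\]
I would prove this by pairing $u^k$ with $u^{n-k}$ and invoking the trivial $(1-u^k)(1-u^{n-k})\geq 0$, which gives $u^k+u^{n-k}\leq 1+u^n$; when $n$ is even, the unpaired middle term $u^{n/2}$ is handled by AM--GM, $u^{n/2}\leq\tfrac{1}{2}(1+u^n)$. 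Summing over the pairs delivers the inequality, after which everything is routine. The conceptual punchline is that the pairing trick is what makes the argument work uniformly in $n$ and across the whole range $\lambda\in[0,1]$.
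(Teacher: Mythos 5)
Your argument is correct and takes a genuinely different route from the paper. The paper invokes Schweizer's characterization (``$F$ is $\lambda$-regular iff $g_\lambda\circ F$ is convex, $g_\lambda(x)=\bigl((1-x)^{-\lambda}-1\bigr)/\lambda$''), computes $(g_\lambda\circ F^n)''$, and reduces to a separate algebraic inequality (Lemma~\ref{lem:lamb-aux}, $\tfrac{1+\lambda}{1-F^n}nF^n+(n-1)\geq\tfrac{1+\lambda}{1-F}F$) that it verifies via a monotonicity analysis and a double application of L'H\^opital. You instead stay entirely with the Mills ratio: the telescoping factorization $1-u^n=(1-u)(1+u+\cdots+u^{n-1})$ gives the clean multiplicative identity $h^{(1)}(v)=h(v)\,\theta(F(v))$, and the $\lambda$-regularity condition on $F^{(1)}$ then reduces, after using the one-sided bound $h'\leq\lambda$, to $-(1-u)\theta'(u)\geq\lambda(\theta(u)-1)$. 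Your derivative identity $(1-u)\theta'(u)=\theta(u)-\tfrac1n-\tfrac{n-1}{n\,u^n}$ checks out, the reduction to $(n-1)(1+u^n)\geq 2\sum_{k=1}^{n-1}u^k$ is exact, and the pairing bound $u^k+u^{n-k}\leq 1+u^n$ summed over $k=1,\dots,n-1$ settles it (your separate handling of a ``middle term'' when $n$ is even is unnecessary---summing the paired inequality over all $k$ already double-counts every term, middle included---but it is not wrong). The one place you should be a bit more careful is that, like the paper's main calculation, your derivation requires $f$ to be differentiable so that $h'$ exists; the paper treats this with a standalone approximation lemma (\Cref{cl:approx}) that would serve equally well to close your proof, and you should at least flag the issue. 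What your route buys is real: no reliance on the Schweizer representation theorem, and the auxiliary inequality becomes a clean symmetric polynomial estimate proven by pairing and rearrangement rather than by L'H\^opital. One small editorial note for consistency with the paper: the displayed condition in \Cref{def:lambda-regular} as stated ($\lambda v + \tfrac{1-F}{f}$ nondecreasing) is inconsistent with the paper's own remark that $\lambda=0$ recovers MHR and with \Cref{prop:schw}; the intended condition is $\lambda v - \tfrac{1-F}{f}$ nondecreasing, equivalently $h'\leq\lambda$, which is exactly the bound you use.
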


To the best of our knowledge, \Cref{lem:1} is not known prior to our work and we provide a full proof in the appendix. We use a nontrivial characterization of $\lambda$-regular distributions given by~\cite{schweizer2019performance}] to prove this fact and it might be of independent interest. 

%\ganote{stated the below facts with c(lambda), and said c(lambda) = ... for lambda in (0,1] and c(0)=1/e. otherwise c(0) is not properly defined. ptal}
We define the following notation which will be used throughout the paper: 
\begin{definition}($c(\lambda)$)
We define the quantity $c(\lambda) = (1-\lambda)^{\frac{1}{\lambda}}$ for all $\lambda \in (0,1]$ and define $c(0) = \lim_{\lambda \to 0} c(\lambda) = \frac{1}{e}$. 
\end{definition}

\begin{fact}\label{lem:2}[\cite{schweizer2019performance}] For a random variable $v$ drawn from a $\lambda$- regular distribution $F$, we have that $\P[v\ge \E[v]]\ge c(\lambda)$.
\end{fact}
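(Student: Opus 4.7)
The natural plan is to work in quantile space. Let $q = F^{-1}:[0,1]\to[\underline v,\overline v]$ and set $\mu = \E[v]=\int_0^1 q(u)\,du$ and $u^* := F(\mu)$, so that $q(u^*)=\mu$ and $\P[v\ge \mu]=1-u^*$. The goal therefore reduces to proving $1-u^* \ge (1-\lambda)^{1/\lambda}$.

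My first step is to recast $\lambda$-regularity as a monotonicity property of $q$. Using the quantile identities $f(q(u))=1/q'(u)$ and $1-F(q(u))=1-u$, the $\lambda$-regularity condition translates to monotonicity of $\lambda q(u) - (1-u)q'(u)$ in $u$ (with the sign convention chosen so that $\lambda=0$ recovers MHR and $\lambda=1$ recovers Myerson regularity). Differentiating once and grouping the resulting terms as a logarithmic derivative identifies the right invariant: $A(u) := q'(u)(1-u)^{\lambda+1}$ is non-increasing on $(0,1)$.

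Next I would turn the defining identity $q(u^*)=\mu$ into a closed inequality in $u^*$ alone. Writing $q'(s)=A(s)(1-s)^{-(\lambda+1)}$, applying Fubini to $\int_0^1 q(u)\,du$, and using the algebraic identity $(1-s)^{-(\lambda+1)}-(1-s)^{-\lambda}=s(1-s)^{-(\lambda+1)}$, the equation $q(u^*)=\mu$ rearranges to
$$\int_0^{u^*} A(s)\cdot s\cdot (1-s)^{-(\lambda+1)}\,ds \;=\; \int_{u^*}^{1} A(s)(1-s)^{-\lambda}\,ds.$$
Because $A$ is non-increasing, $A(s)\ge A(u^*)$ on $[0,u^*]$ and $A(s)\le A(u^*)$ on $[u^*,1]$, so replacing $A(s)$ on both sides by the constant $A(u^*)$ and dividing through by $A(u^*)>0$ yields the $A$-free inequality
$$\int_0^{u^*} s(1-s)^{-(\lambda+1)}\,ds \;\le\; \int_{u^*}^{1}(1-s)^{-\lambda}\,ds.$$

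The final step is elementary calculus: integration by parts on the left (with $u=s$, $dv=(1-s)^{-(\lambda+1)}ds$, hence $v=(1-s)^{-\lambda}/\lambda$) and direct evaluation on the right produce closed-form expressions in $u^*$ and $\lambda$. Multiplying through by $\lambda(1-\lambda)$ and using $u^*+(1-u^*)=1$, everything collapses to the one-line inequality $(1-\lambda)(1-u^*)^{-\lambda}\le 1$, equivalently $(1-u^*)^{\lambda}\ge 1-\lambda$, which is exactly $1-u^*\ge (1-\lambda)^{1/\lambda}=c(\lambda)$. The MHR case $\lambda=0$ is recovered by taking $\lambda\to 0^+$, giving the classical $1/e$ bound. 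The main obstacle is the first step: identifying the correct monotone quantile invariant $A(u)=q'(u)(1-u)^{\lambda+1}$ and tracking signs carefully so that the subsequent comparison with $A(u^*)$ points in the right direction; once $A$ is in hand, the rest is driven by a single identity and a clean algebraic cancellation enabled by the exponent $\lambda+1$ matching precisely what $\lambda$-regularity supplies.
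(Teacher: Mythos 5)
The paper merely cites this fact from \citet{schweizer2019performance} without proof, so there is nothing in the paper to compare your argument against; what you have produced is an independent derivation, and it is correct. One thing you smoothed over should be made explicit: the paper's Definition~\ref{def:lambda-regular} writes $\lambda$-regularity as monotonicity of $\lambda v + \frac{1-F(v)}{f(v)}$, but that cannot be what is meant---at $\lambda=0$ it says ``$(1-F)/f$ non-decreasing,'' the opposite of MHR, and at $\lambda=1$ it is not Myerson regularity. The intended condition, the one consistent with the generalized-hazard-rate monotonicity $r_\lambda(v)=f(v)/(1-F(v))^{1+\lambda}$ that the paper itself uses in \Cref{prop:schw}, carries a minus sign: $\lambda v - \frac{1-F(v)}{f(v)}$ non-decreasing. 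You ``chose the sign convention that recovers MHR and regularity at the endpoints,'' which is the right move, but you should flag that it differs from the displayed definition rather than leave the impression they agree.

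With that sign, your quantile invariant works: differentiating $\lambda q(u)-(1-u)q'(u)$ gives $(\lambda+1)q'(u)-(1-u)q''(u)\ge 0$, and since $A'(u)=(1-u)^\lambda\bigl[(1-u)q''(u)-(\lambda+1)q'(u)\bigr]$, this is exactly $A'(u)\le 0$ for $A(u)=q'(u)(1-u)^{\lambda+1}$. The Fubini identity $\int_0^{u^*}q'(s)\,s\,ds=\int_{u^*}^1 q'(s)(1-s)\,ds$, the sandwich by the constant $A(u^*)$ across the threshold, the integration by parts, and the final cancellation $(1-\lambda)(1-u^*)^{-\lambda}\bigl[u^*+(1-u^*)\bigr]\le 1$ all check; the conclusion $(1-u^*)^\lambda\ge 1-\lambda$ is the claim. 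You tacitly use $\lambda<1$ and $f>0$ on the support (so that $\mu<\infty$ and $q$ is differentiable), both given by Assumption~\ref{ass:1}, and the $\lambda\to 0^+$ limit recovers the classical $1/e$ for MHR.

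For perspective, there is a one-line proof once you notice that $\lambda$-regularity is equivalent (for $\lambda>0$) to convexity of $v\mapsto(1-F(v))^{-\lambda}$, a fact the paper uses in a footnote to the proof of \Cref{lem:1}: Jensen gives $(1-F(\E v))^{-\lambda}\le\E\bigl[(1-F(v))^{-\lambda}\bigr]=\int_0^1 u^{-\lambda}\,du=\tfrac{1}{1-\lambda}$, since $1-F(v)$ is uniform on $[0,1]$, and rearranging is immediate. Your quantile-space argument is longer but encodes the same convexity (via $A$ non-increasing) and in addition exposes the extremal distribution: equality throughout forces $A\equiv$ constant, which characterizes the generalized Pareto $1-F(v)=(1+\lambda v)^{-1/\lambda}$, for which $\E[v]=1/(1-\lambda)$ and $\P[v\ge\E v]=(1-\lambda)^{1/\lambda}$ exactly.
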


\begin{fact}\label{cor:oneless}$\P[v^{(1,s-1)} \ge \E[v^{(1,s)}]] \geq \frac {s-1}s  c(\lambda)$.
\end{fact}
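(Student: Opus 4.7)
}
The plan is to bound $\P[v^{(1,s-1)} \ge \E[v^{(1,s)}]]$ by isolating the event that the maximum among $s$ i.i.d.\ samples is attained by one of the first $s-1$ samples. On that event the $(s-1)$-sample maximum coincides with the $s$-sample maximum, so the probability we want reduces to the probability that $v^{(1,s)}$ exceeds its own mean, restricted to that event. A symmetry argument then lets us extract exactly the factor $(s-1)/s$, and Fact~\ref{lem:1} together with Fact~\ref{lem:2} supplies the $c(\lambda)$.

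More concretely, first I would invoke Fact~\ref{lem:1} to conclude that the distribution of $v^{(1,s)}$ is itself $\lambda$-regular, and then apply Fact~\ref{lem:2} to the random variable $v^{(1,s)}$ to obtain
\[
\P\bigl[v^{(1,s)} \ge \E[v^{(1,s)}]\bigr] \;\ge\; c(\lambda).
\]
Next, for each $i \in [s]$ let $A_i$ denote the event that $v_i$ is the strict maximum of $v_1,\ldots,v_s$. Because $F$ has a density (Assumption~\ref{ass:1}) ties have probability zero, so $A_1,\ldots,A_s$ partition the sample space up to a null set, and by exchangeability of the $v_i$'s each has probability $1/s$. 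Exchangeability also implies that the conditional distribution of $v^{(1,s)}$ given $A_i$ does not depend on $i$; consequently
\[
\P\bigl[v^{(1,s)} \ge \E[v^{(1,s)}] \,\big|\, A_i\bigr] \;=\; \P\bigl[v^{(1,s)} \ge \E[v^{(1,s)}]\bigr]
\]
for every $i$, since the unconditional probability is the average of these $s$ equal conditional probabilities.

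To finish, observe that on the event $A^c := A_1 \cup \cdots \cup A_{s-1}$ (the maximum lies among the first $s-1$ samples) we have $v^{(1,s-1)} = v^{(1,s)}$. Therefore
\[
\P\bigl[v^{(1,s-1)} \ge \E[v^{(1,s)}]\bigr]
\;\ge\; \P\bigl[v^{(1,s-1)} \ge \E[v^{(1,s)}],\, A^c\bigr]
\;=\; \P\bigl[v^{(1,s)} \ge \E[v^{(1,s)}],\, A^c\bigr],
\]
and summing the previous conditional-probability identity over $i=1,\ldots,s-1$ gives
\[
\P\bigl[v^{(1,s)} \ge \E[v^{(1,s)}],\, A^c\bigr] \;=\; \tfrac{s-1}{s}\,\P\bigl[v^{(1,s)} \ge \E[v^{(1,s)}]\bigr] \;\ge\; \tfrac{s-1}{s}\,c(\lambda),
\]
which is the desired bound. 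I do not expect a serious obstacle here: the only delicate point is verifying the symmetry-based identity $\P[v^{(1,s)} \ge \E[v^{(1,s)}] \mid A_i] = \P[v^{(1,s)} \ge \E[v^{(1,s)}]]$, which follows cleanly from exchangeability once ties are ruled out by the density assumption. The main conceptual move is recognizing that the loss from passing from $v^{(1,s)}$ to $v^{(1,s-1)}$ is exactly the $1/s$ chance that the dropped sample happened to be the maximizer.
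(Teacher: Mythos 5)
Your proposal is correct and follows essentially the same approach as the paper's proof: both use Facts~\ref{lem:1} and~\ref{lem:2} to bound $\P[v^{(1,s)}\ge\E[v^{(1,s)}]]$ by $c(\lambda)$, and both extract the factor $(s-1)/s$ from the observation that the maximizing index is, by exchangeability, uniformly distributed and independent of the value of the maximum. Your write-up is in fact a bit more careful, since you make the conditional-independence step explicit where the paper silently multiplies the two probabilities.
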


\begin{proof}
Draw $X_1, X_2, \ldots, X_s$ i.i.d. from F. Let \ostat{1}{s} be the maximum of $X_1, X_2, \ldots, X_s$, and let \ostat{1}{s-1} be the maximum of $X_1, X_2, \ldots, X_{s-1}$. From ~\Cref{lem:1}~and~\Cref{lem:2}, we know that
$ \P[ \ostat{1}{s} \geq \E[ \ostat{1}{s}]] \geq c(\lambda) $.
Since each of the $s$ draws is equally likely to be \ostat{1}{s}, the probability that \ostat{1}{s} lies in the first $s-1$ draws is $\frac{s-1}{s}$, giving
\begin{align*}
\P[ \ostat{1}{s-1} \geq \E[ \ostat{1}{s}]] &\geq \P[\ostat{1}{s} \in \{ X_1, X_2, \ldots, X_{s-1} \} ] \cdot \P[ \ostat{1}{s} \geq \E[ \ostat{1}{s}]] 
\geq \frac{s-1}{s } c(\lambda)
\end{align*}

\end{proof}

%\knote{Mention lambda-regular}
%{\bf Virtual Valuation and MHR distributions}
%ggnote{Change this part to be entirely about $\lambda$-regular. First part of section 6.}
%
%Assumption 1 is standard in many economic settings and it implies that $F$ is regular, i.e., that the virtual valuation $\mbox{MR}_F(v):= v - \frac{1-F(v)}{f(v)}$ is non-decreasing.
%
%A distribution satisfying the MHR property has three important properties which are key to the robustness of our mechanism.

%The first property is well-known. We reproduce a proof in the appendix.
%\begin{fact}\label{lem:1} 
%Let $F^{(\ob)}$ be the distribution of the random variable corresponding to \ostat{\ob}{n}. Then, $F^{(\ob)}$ satisfies the MHR property.
%\end{fact}
%%\apnote{Do we still keep the proof ? } \apnote{Shall we use $F^{(i)}$ instead? }\ganote{I am using j in the ostat as much as possible, to keep it easy to disinguish from the $v_i$s}
%\begin{fact} \cite{barlow1964bounds}\label{lem:2}
%For a value $v$ drawn from a distribution $F$ satisfying the MHR property, we have that $\P[v\ge \E[v]] \ge \frac 1 e. $
%\end{fact}
%
%Note that a direct consequence from the previous result is that for a distribution $F$ satisfying the MHR property we have that $\E[v]<\infty$.
%
%\begin{fact} \label{cor:oneless}
% $ \P[ \ostat{1}{s-1} \geq \E[ \ostat{1}{s}]] \geq \frac{s-1}{s \cdot e} $
%\end{fact}

 \section{The importance of being robust}
\label{sec:robustness}
In this section, we argue the importance of implementing a mechanism that is robust to the details of the intermediaries' utility functions and to the different demand structures that may arise.

First, consider the dependence on intermediaries' utility functions. An intermediary's utility function arises from the downstream negotiations/contracts/mechanisms involving the individual buyers it represents. These downstream interactions may take many different forms and can vary by intermediary and change over time. It is hard for a seller to know about the details of these downstream interactions between each intermediary-buyer pair. And even in cases where the seller knows these details, the seller could be faced with a difficult mechanism design problem involving bidders with complex and heterogeneous multi-dimensional utility functions (see the literature review section in the introduction for a flavor of the difficulty of the problem). A mechanism that does not depend on the details of the intermediaries' payoff structures lets the seller sidestep these issues, additionally guaranteeing that it will continue to perform well even when the downstream contracts change.
%The resulting mechanism might be complex too, The complexity of implementing the optimal mechanism is not only a problem for the seller from a computational perspective but it also makes bidders more likely to make strategic errors. 
%Such design contrasts with our robust mechanism, which is a simple per-item posted-price mechanism, making the behavior of intermediaries easy to predict and stable. 
%Furthermore, our robust mechanism is also completely agnostic to the shape of the intermediaries' utility function, and still provides a good revenue guarantee for any demand structure.

Next, we argue the importance of implementing a mechanism that is robust to different demand structures. 
%We show that a mechanism designed for a particular demand structure may turn out to be quite ineffective if the demand structure turns out to be a different one. 
To this effect, we will show that natural revenue-maximizing mechanisms tailored for a particular demand structure can have perform quite poorly when faced with a different demand structure. If a seller implements a mechanism that depends on the demand structure, the seller should either know the demand structure beforehand with some certainty or construct a mechanism in which the seller learns the demand structure as part of the mechanism. We note that it is possible to construct a mechanism that learns the demand structure without incurring any cost (in one of its equilibria)\footnote{The first stage of the mechanism consists of asking each intermediary how many buyers it is representing, and if the total sum is not $n$, then the seller does not move to the second stage of the mechanism. Clearly, an equilibrium exists where intermediaries truthfully report the number of buyers they represent.} 
Even though such a mechanism may perform well in theory, the first stage of the mechanism is quite unstable --  there exist equilibria such that the mechanism never reaches the second step, resulting in zero revenue for the seller. 

In the rest of the section, we show that natural revenue-maximizing mechanisms designed for a particular demand structure $(m,\pi)$ can perform poorly when presented with a different demand structure $(m',\pi')$.
More precisely, we compare two extreme demand structures: competition $(m^C = [n],\pi^C(\ag)=\{\ag\})$ and monopsony $(m^M=1,\pi^M(1)=[n])$. For this exposition, we assume that intermediaries have complete information and full bargaining power, i.e., their utilities functions are $U_\ag(x_\ag,p_\ag;\vv_\ag)= V(x_\ag,p_\ag;\vv_\ag) = \sum_{\ob=1}^{x_\ag} \ostat{\ob}{\pi(\ag)} - p_\ag$.   

Along with showing that robustness is critical in situations where the demand structure is unknown, we observe a surprising result that is of independent interest. We show that there are instances where a seller prefers facing a monopsony demand structure rather than full competition. Even though competition is helpful in raising prices in an auction, facing one bidder that demands multiple items allows the seller to construct bundle mechanisms which are known to be more effective compared to simply selling the items separately \citep{mcafee1989multiproduct}. In particular, we observe that the competition effect is particularly helpful in situations when there is scarcity of items (small $k$) while the bundling effect is helpful in the presence of many items (high $k$).

%\ganote{the second and third are more desirable properties of our mechanism than a reason to be robust to demand structure. we can either make this section about just robustness to demand structure, or mention these considerations a little earlier or as a separate section. put the discussion of robustness to utility functions and demand structure at the beginning of section}

\subsection{Optimal mechanism for the competition case}
We first characterize the optimal auction for the competition case and show that in some instances, it performs poorly for the monopsony case.

\begin{restatable}{proposition}{propone} 
\label{prop:1}
In the competition case, the revenue-maximizing mechanism can be implemented with $(k+1)^{th}$-price auction with a reserve price $r =\phi^{-1}_{F}(0)$.  
\end{restatable}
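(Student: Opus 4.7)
The plan is to invoke Myerson's classical characterization of revenue-optimal mechanisms in a single-parameter environment and verify that, in the competition case, it specializes to the claimed format. In the competition case, each buyer is its own intermediary, so $\pi(\bu) = \{\bu\}$ and the intermediary utility function reduces (via Conditions 1--2 on $U_\ag$) to the standard quasi-linear $U_\bu(x_\bu, p_\bu; v_\bu) = v_\bu x_\bu - p_\bu$. Thus the seller's problem becomes the textbook problem of selling $k$ identical items to $n$ unit-demand buyers with i.i.d.\ valuations drawn from $F$, a genuinely single-parameter environment.

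First, I would apply Myerson's lemma: for any BIC-IR mechanism $(\vx, \vp)$, the expected revenue equals the expected virtual welfare
\begin{equation*}
\mathbb{E}\Bigl[\sum_{\bu \in [n]} \phi_F(v_\bu)\, x_\bu(\vv)\Bigr],
\end{equation*}
where $\phi_F(v) = v - (1-F(v))/f(v)$. Second, I would use the fact that $F$ is $\lambda$-regular with $\lambda < 1$ (Assumption~\ref{ass:1}) to conclude that $\phi_F$ is strictly increasing; in particular, ordering bidders by virtual value coincides with ordering by actual value. The virtual-welfare-maximizing allocation subject to the feasibility constraint $\sum_\bu x_\bu \leq k$ and $x_\bu \in [0,1]$ is therefore to give one item to each of the (at most $k$) buyers whose virtual values are strictly positive, i.e.\ whose values exceed $r := \phi_F^{-1}(0)$. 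This allocation rule is monotone, so by Myerson it is implementable with appropriate payments.

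Third, I would match the implementation: the allocation that gives items to the top $k$ bids exceeding $r$ is exactly what the $(k+1)^{\text{th}}$-price auction with reserve $r$ does. The induced threshold payment for a winning buyer is $\max\{r, v^{(k+1,n)}\}$, which is precisely the payment rule in a $(k+1)^{\text{th}}$-price auction with reserve $r$ (using the standard convention that $v^{(k+1,n)} < r$ whenever fewer than $k+1$ bidders clear the reserve). Since both the allocation and payment rules of the stated mechanism coincide with those of the Myerson-optimal mechanism, it achieves the optimal expected revenue.

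I do not expect a serious obstacle: the argument is a direct application of Myerson (1981) once we observe that the competition case reduces to a single-parameter IID setting and that $\lambda$-regularity for $\lambda < 1$ implies monotonicity of $\phi_F$. The only minor care needed is to check that ties at the reserve (a measure-zero event under the density assumption in Assumption~\ref{ass:1}) and the $(k+1)^{\text{th}}$-price convention are handled consistently; this is routine.
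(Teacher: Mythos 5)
Your proof is correct and takes essentially the same route as the paper: reduce to the single-parameter i.i.d.\ environment, apply Myerson's virtual-welfare characterization, and identify the revenue-optimal monotone allocation as that of a $(k{+}1)^{\text{th}}$-price auction with reserve $\phi_F^{-1}(0)$. The paper's proof carries out the same envelope/Myerson derivation, only stated slightly more generally with a per-item production cost $c$ (so the argument can be reused in Lemma~\ref{lem:dm}), and it invokes MHR for the monotonicity of the allocation where you (just as correctly) invoke $\lambda$-regularity with $\lambda<1$.
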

Although this statement is a well-known result in the literature, for the sake of completeness we provide a complete proof in the appendix. 

To simplify notation, we define $\mbox{Rev}^C(\vx,\vp)=\mbox{Rev}^C(F,m^C,\pi^C;(\vx,\vp))$ and $\mbox{Rev}^C = \mbox{ Rev}^C(F,m^C,\pi^C)$. Likewise, we define Rev$^M(\vx,\vp)$ and Rev$^M$ for the monopsony case. 

The next proposition shows that there exists an instance where the revenue obtained by the optimal mechanism for the competition case is not constant-competitive when the demand structure is monopsony (we defer the proof to the appendix).

\begin{restatable} {proposition}{proptwo}
Let $(\vx^C,\vp^C)$ be the optimal mechanism for the competition case. Then for every $\gamma>0$, there exist $n, F, k$ such that Rev$^M(\vx^C,\vp^C)<\gamma$Rev$^M$.
\end{restatable}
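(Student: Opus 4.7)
The plan is to exhibit a family of instances where $(\vx^C,\vp^C)$ extracts only a vanishing fraction of the optimal monopsony revenue. I would fix $k=1$ and take $F$ to be the rate-$1$ exponential distribution (which is MHR, hence $\lambda=0$-regular), varying only $n$.

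By \Cref{prop:1}, $(\vx^C,\vp^C)$ is the second-price auction with reserve $r=\phi_F^{-1}(0)=1$. In the monopsony case, the sole intermediary submits a single bid on behalf of its $n$ captive buyers. Since $U_\ag=V$, its best response is to bid into the auction whenever $\ostat{1}{n}\ge r$, and in that event the price paid is exactly the reserve (no other bidder is present to push the clearing price up). Hence
\[
\mbox{Rev}^M(\vx^C,\vp^C) \;=\; r\cdot \P[\ostat{1}{n}\ge r] \;\le\; 1,
\]
uniformly in $n$.

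To lower bound $\mbox{Rev}^M$, I would invoke a simple posted-price mechanism offering the single item at price $p=\tfrac12\log n$. Under full bargaining power the intermediary accepts whenever $\ostat{1}{n}\ge p$, and the maximum of $n$ i.i.d.\ exponentials satisfies
\[
\P\bigl[\ostat{1}{n}\ge \tfrac12 \log n\bigr] \;=\; 1-\bigl(1-n^{-1/2}\bigr)^n \;\xrightarrow[n\to\infty]{}\; 1,
\]
so $\mbox{Rev}^M \ge \tfrac12 (\log n)(1-o(1))$. Combining the two bounds gives $\mbox{Rev}^M(\vx^C,\vp^C)/\mbox{Rev}^M = O(1/\log n)$, and picking $n$ large enough drives this ratio below any prescribed $\gamma>0$.

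The step that warrants the most care is the best-response analysis of the intermediary against $(\vx^C,\vp^C)$. With $k=1$ this is essentially immediate, but for general $k$ one needs the observation that, since the intermediary controls how many bids it submits above $r$, by submitting at most $k$ such bids (and the rest below $r$) it forces the $(k+1)$-th highest bid down to $r$, yielding the bound $\mbox{Rev}^M(\vx^C,\vp^C)\le kr$. The remainder is a routine concentration/order-statistic computation against the posted-price benchmark, which one could carry out directly in terms of $F$ without invoking any machinery from later sections.
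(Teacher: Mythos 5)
Your proof is correct and follows essentially the same argument as the paper: both take $k=1$ with the rate-$1$ exponential distribution, observe that the competition-optimal SPA with reserve $r=1$ earns at most $r=1$ from a monopsonist, and then lower-bound $\mbox{Rev}^M$ via a posted price that grows like $\log n$. The only cosmetic difference is that the paper prices at $\E[\ostat{1}{n}]=\mathcal H(n)$ and invokes the general $c(\lambda)$ acceptance bound, whereas you price at $\tfrac12\log n$ and verify acceptance directly—both give $\mbox{Rev}^M=\Omega(\log n)$ and the same conclusion.
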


%\ggnote{$\langle \rangle$ notation for instance should be defined at some point. }
%\begin{proof}
%%\knote{I edited this proof to be more detailed. Please check}
%Let $k=1$. First note that, the revenue of the optimal mechanism with monopsony ($\mbox{Rev}^M$) is at least as much as the mechanism that sells the item with a posted price $p = \E[\ostat{1}{n}]$. Hence, we obtain that
%$$\mbox{Rev}^M \ge \E[\ostat{1}{n}] \;\P[\ostat{1}{n} \ge \E[\ostat{1}{n}]]  \geq \frac 1 e \E[\ostat{1}{n}],$$
%where the second inequality is due to Lemma~\ref{lem:2}. On the other hand, as noted in Proposition~\ref{prop:1} the optimal competition mechanism is a second-price auction with a reserve price. In the monopsony case, we have that in every Bayes Nash Equilibrium the intermediary will only buy the item, for a price $r$, when $\ostat{1}{n}\ge r$. Thus the revenue of the optimal competition mechanism in a market with a monopsony demand structure is $\mbox{Rev}^M(\vx^C,\vp^C) = r \P[\ostat{1}{n} \ge r] \le r$.  
%
%Next, let $F$ be standard exponential distribution with $f(x) = exp(-x)$ and $F(x) = 1 - exp(-x)$. The hazard rate $h_F(z) = 1$ which implies that the reserve price is $r=1$ (independent of $n$). For exponential distributions, $\E[\ostat{1}{n}] = \sum _{j=1}^{n}{\frac {1}{n-j+1}} = \mathcal{H}(n)$ \cite{renyi}. Thus $\lim_{ n \to \infty} \E[\ostat{1}{n}] = \infty$, and we conclude that $\lim_{n \to \infty} \frac {\mbox{Rev}^M(\vx^C,\vp^C)} {\mbox{Rev}^M} = 0$.
%\end{proof}

\subsection{Optimal mechanism for the monopsony case}

The previous result shows that implementing the optimal mechanism for the competition case can have a low performance when the demand structure is a monopsony. To show the converse, that the optimal mechanism for the monopsony case has a low performance on the competition setting, is technically challenging since we need to solve a multi-item mechanism design problem. Even for $k=n$, when the intermediary's valuations for the items are i.i.d, we have that there are instances where the optimal mechanism consists of a complex set of menus \citep{daskalakis2017strong}. For the case $k<n$, when valuations are correlated, finding the optimal mechanism is believed to be hard, and to the best of our knowledge, it remains an open problem. That being said, the next result shows that the simple mechanism of selling all items as a bundle gets a constant factor of the optimal revenue, and when $k=n$, it is asymptotically optimal. This result suggests that for $k=n$ a seller facing a monopsony should implement a (near-optimal for large $n$) pure bundling strategy. Once again, due to space constraints, we will defer the proofs to the appendix. 

%\ganote{can we quantify the approximation ratio}
\begin{restatable}{proposition}{propmonopsony} \label{prop:2}
Suppose that $F$ is MHR, then for the monopsony case consider the mechanism selling all $k$ items in a bundle for a price $p_B=\phi_{G}^{-1}(0)$ where $G$ is the distribution of $\sum_{i=1}^k \ostat{i}{n}$. Then, the above mechanism provides at least $\frac 1 e$ of the optimal mechanism. Moreover for any distribution $F$ with finite moments we have that when $k=n$, the revenue of selling the bundle at a price $p^{\epsilon}_B=  n(\E[v]-\epsilon)$ is asymptotically optimal as $n\to \infty$ and $\epsilon \to 0$.\footnote{\citet{csorgo1989limit, broniatowski1998strong} show an asymptotic result for the case $k<n$ when the hazard rate $h_F$ has a linear growth and $k(n)$ does not increase too fast with $n$.}
\end{restatable}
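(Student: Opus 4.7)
For the first claim, my plan is to reduce the monopsony problem to a single-buyer, single-item problem. Since the monopolist has full bargaining power, its value for the full bundle of $k$ items is $W := \sum_{i=1}^{k} \ostat{i}{n}$, distributed as $G$. By Lemma~\ref{lem:0}, $\mbox{Rev}^M \le \mbox{Wel}(F) = \E[W]$. On the other hand, the bundle mechanism at price $p_B = \phi_G^{-1}(0)$ yields revenue $\max_{p} p(1-G(p))$ (provided $G$ is at least regular so that $\phi_G^{-1}(0)$ is the Myerson-optimal posted price), which in turn is lower-bounded by $\E[W]\cdot\P[W\ge \E[W]]$ via the suboptimal choice $p = \E[W]$. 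Hence it suffices to show $\P[W \ge \E[W]] \ge 1/e$.

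The plan is to establish this tail bound by proving that the distribution $G$ of $W$ is MHR, and then invoking Fact~\ref{lem:2} with $\lambda = 0$. Since the order statistics $\ostat{i}{n}$ are strongly positively dependent rather than independent, one cannot simply appeal to the standard fact that convolutions of log-concave survival functions are log-concave. Instead, I would use the stochastic representation $\ostat{i}{n} = F^{-1}(U_{(i)})$, where $U_{(1)}\ge \cdots \ge U_{(n)}$ are the order statistics of $n$ i.i.d.\ uniforms, and argue that the monotone transformation $F^{-1}$ of the top-$k$ uniforms preserves log-concavity of the survival function when $F$ is MHR. I expect this preservation result to be the main technical obstacle. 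As a fallback, one could try to establish $\P[W \ge \E[W]] \ge 1/e$ directly via a second-moment (Paley--Zygmund) argument using hazard-rate bounds on $\E[W^2]/\E[W]^2$, avoiding a full MHR proof for $G$ at the cost of possibly a slightly weaker constant.

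For the asymptotic claim when $k = n$, I would apply the weak law of large numbers. The monopsonist's value for receiving all $n$ items is $S_n := \sum_{i=1}^{n} v_i$ with $\E[S_n] = n\E[v]$; by Khinchin's WLLN (requiring only finiteness of the first moment, which follows from the finite-moments assumption), for every fixed $\epsilon > 0$ we have $\P[S_n \ge n(\E[v]-\epsilon)] \to 1$ as $n\to \infty$. Hence the bundle revenue at price $p_B^{\epsilon} = n(\E[v]-\epsilon)$ satisfies
\begin{equation*}
\mbox{Rev}^M(\text{bundle at }p_B^{\epsilon}) \;\ge\; n(\E[v]-\epsilon)\bigl(1-o(1)\bigr).
\end{equation*}
Combined with $\mbox{Rev}^M \le \mbox{Wel}(F) = n\E[v]$ from Lemma~\ref{lem:0}, the ratio $\mbox{Rev}^M(\text{bundle at }p_B^{\epsilon})/\mbox{Rev}^M$ is at least $(1-\epsilon/\E[v])(1-o(1))$, which tends to $1$ as first $n\to\infty$ and then $\epsilon\to 0$, giving asymptotic optimality.
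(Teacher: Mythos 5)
Your overall structure matches the paper's proof: upper-bound $\mbox{Rev}^M$ by $\mbox{Wel}(F)=\E[W]$ via Lemma~\ref{lem:0}, lower-bound the bundle revenue by posting the suboptimal price $\E[W]$, and reduce to the tail bound $\P[W\ge\E[W]]\ge 1/e$; the $k=n$ asymptotic claim is handled by the weak law of large numbers exactly as you propose. The one step you leave unresolved is where the argument actually lives: showing that $G$, the distribution of $W=\sum_{i=1}^k\ostat{i}{n}$, is MHR, so that Fact~\ref{lem:2} at $\lambda=0$ gives the $1/e$ tail bound. The paper asserts this by combining Fact~\ref{lem:1} (the maximum of i.i.d.\ MHR draws is MHR) with the classical closure of MHR distributions under convolution, citing \citet{barlow1963properties}. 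Your worry about the dependence of the $\ostat{i}{n}$ is legitimate — closure under convolution is a statement about independent summands, so even the paper's one-line justification deserves scrutiny — but your proposed resolutions do not close the gap either. The ``monotone transformation of the top-$k$ uniforms'' idea is not an argument: you would need an explicit decomposition (e.g.\ into independent spacings, which is particular to exponentials) together with a reason that summing preserves log-concavity of the survival function, and you supply neither. The Paley--Zygmund fallback also needs substantive work: $\P[W\ge\theta\E[W]]\ge(1-\theta)^2\E[W]^2/\E[W^2]$ is vacuous at $\theta=1$, so you would have to lower the posted price to $\theta\E[W]$ for some $\theta<1$ and separately bound $\E[W^2]/\E[W]^2$ for sums of MHR order statistics — plausible, but unworked, and certain to yield a constant strictly worse than $1/e$. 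In short, the skeleton is identical to the paper's and your $k=n$ argument is complete, but the load-bearing step of the first claim is missing.
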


%\begin{proof}
%The revenue of setting the optimal pricing $p_B$ is greater or equal than the revenue of selling the bundle at price $\E[\sum_{\ob=1}^k \ostat{\ob}{n}]$. Because $G$ satisfies the MHR property (combine~\Cref{lem:1} with the fact MHR distributions are closed under summation~\cite{barlow1963properties}), Lemma~\ref{lem:2} implies that
%$$ \E\left[\sum_{\ob=1}^{k} \ostat{\ob}{n}\right] \cdot \P\left[\sum_{\ob=1}^{k} \ostat{\ob}{n} \ge \E\left[\sum_{\ob=1}^{k} \ostat{\ob}{n}\right] \right] \ge \frac {1}e  \; \E\left[\sum_{\ob=1}^{k} \ostat{\ob}{n} \right] = \frac 1 e \; \mbox{Wel}(F).$$
%
%Using Lemma~\ref{lem:0}, we conclude that pricing the bundle at $p_B$ obtains a revenue of at least $\frac 1 e$ of Rev$^M$.
%
%For $k=n$, the Law of Large Numbers
%\footnote{Since the distribution F is MHR, the moments are bounded and law of large numbers can be applied \citep{barlow1963properties}.}
%implies that $\lim_{n\to \infty}\P[\sum_{\ob=1}^{n} \ostat{\ob}{n} \ge p_B^\epsilon] =1$ for every $\epsilon >0$. Hence,
%$ \lim_{n\to \infty} \frac{p_B^\epsilon \P[\sum_{\ob=1}^{n} \ostat{\ob}{n} \ge p_B^\epsilon]}{\mbox{Wel}(F)} = 1- \frac{\epsilon}{\E[v]}.$
%We conclude by taking $\epsilon \to 0$.
%\end{proof}

Using the previous result we obtain the surprising result that a monopsony may be preferable for the seller over competition.
\begin{restatable} {corollary}{corgoodmonopsony}
There are some instances where R$^M > $R$^C$.
\end{restatable}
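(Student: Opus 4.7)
The plan is to exhibit a single concrete instance in which the monopsony revenue strictly exceeds the competition revenue. I would take $k=n$ and let $F$ be the uniform distribution on $[0,1]$, which is MHR (hence $\lambda$-regular with $\lambda=0$) and has finite moments, so that both Proposition~\ref{prop:1} and Proposition~\ref{prop:2} apply.

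On the competition side, Proposition~\ref{prop:1} tells us the revenue-optimal mechanism is a $(k+1)^{\text{th}}$-price auction with reserve $r=\phi_F^{-1}(0)=1/2$. When $k=n$, every bidder whose value is at least $r$ wins an item and pays exactly $r$: by Myerson's payment identity, the critical value at which a winner would still win is the reserve itself, since there is no binding $(n+1)^{\text{th}}$ bid to push the price up. Hence $R^C = n \cdot r \cdot (1 - F(r)) = n/4$.

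On the monopsony side, the pure-bundle mechanism at price $p_B^{\epsilon}=n(\E[v]-\epsilon)$ is a valid BIC mechanism --- under the full-bargaining assumption used throughout this section, the intermediary values the full bundle at $\sum_i v_i$ --- and therefore its revenue is a lower bound on $R^M$. Using $\E[v]=1/2$ and $\mathrm{Var}(\tfrac{1}{n}\sum_i v_i) = \tfrac{1}{12n}$, Chebyshev's inequality gives $\Pr[\sum_i v_i \ge n(1/2-\epsilon)] \ge 1 - \tfrac{1}{12 n \epsilon^2}$. Fixing, say, $\epsilon = 1/8$ and taking $n$ sufficiently large, the bundle revenue is at least $n(1/2 - \epsilon)(1 - o(1)) > n/4 = R^C$, yielding the desired separation.

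The only real obstacle is turning the asymptotic statement of Proposition~\ref{prop:2} into a strict inequality at a concrete $n$, which the Chebyshev computation handles routinely. If instead one prefers a minimal hand-computable witness, at $n=k=2$ the bundle revenue $R(p)=p(1-p^2/2)$ is maximized at $p^{*}=\sqrt{2/3}$ with value $(2/3)\sqrt{2/3} \approx 0.544 > 1/2 = R^C$, so the separation already appears at $n=2$ without any limiting argument.
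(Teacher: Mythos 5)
Your argument follows the same route as the paper's: set $k=n$, exploit that a monopsony intermediary can be offered a single bundle, and compare against the revenue $n\,r\,\P[v\ge r]$ that item pricing yields in the competition case. The difference is in how the strict inequality is pinned down. The paper states the separation asymptotically, invoking Proposition~\ref{prop:2} and the Markov-type bound $\E[v]\ge r\,\P[v\ge r]$ (strict for distributions with two nonzero support points); it therefore only shows the gap appears in the large-$n$ limit and does not exhibit a single concrete instance. You make both ends explicit for the uniform distribution on $[0,1]$ (which satisfies the MHR hypothesis of Proposition~\ref{prop:2}), via Chebyshev for large $n$ and, more usefully, via a hand-checkable witness at $n=k=2$: the bundle-revenue curve is $p(1-p^2/2)$ on $[0,1]$ and $p(2-p)^2/2$ on $[1,2]$ (the latter decreasing), so it peaks at $p^*=\sqrt{2/3}$ in the first regime with value $(2/3)\sqrt{2/3}\approx 0.544$, whereas $R^C = 2\cdot\tfrac12\cdot\tfrac12 = \tfrac12$. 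That finite witness is cleaner and more self-contained than the paper's asymptotic argument and literally exhibits an instance, which is what the corollary claims, rather than a limit.
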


%\begin{proof}
%Consider the case $k=n$. From Proposition~\ref{prop:2} we have that the bundle price $p_B^\epsilon$ is asymptotically optimal. Hence, the revenue of the bundle is approximately $ n \E[v]$, while the revenue of Competition is $n r \P[v\ge r]$. From Markov Inequality, we know that $\E[v] = \E[v| v\ge r] \P[v \ge r] + \E[v|v < r] + \P[v < r] \ge r \P[v\ge r]$.  Further the last inequality is strict as long as $\P[v > r] > 0$ or $ \E[v| v < r] > 0$ and $\P(v < r) > 0$, which is true for distribution with at least two non-zero points in its support.  Thus, the revenue of the bundle is more than the revenue from Competition.
%%\ganote{we don't have strict inequality from markov unless we make some assumption about v sometimes exceeding r}\apnote{It holds with strict inequality because $\P[v\ge r]>0$} \knote{Okay now?}
%\end{proof}
%
Next we show that the bundle pricing mechanisms can make very little revenue when faced with Competition demand structure, compared to the optimal mechanism for Competition.

\begin{restatable}{proposition}{propbundlebad}
Consider the case $k=n$ and suppose that the seller implements the asymptotically optimal bundle pricing mechanism $(\vx^B,p_B^\epsilon)$. Then for every $\gamma>0$, there exists an instance $F$ such that Rev$^C(\vx^B,p_B^\epsilon)<\gamma$Rev$^C$.
\end{restatable}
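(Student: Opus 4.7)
The plan is to exhibit a simple bounded-support distribution for which the asymptotic bundle price $p_B^{\epsilon} = n(\E[v]-\epsilon)$ grows past any possible realized valuation, so that no unit-demand buyer accepts the bundle under competition, while the optimal competition mechanism still extracts positive revenue.

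First I would observe that in the competition case each intermediary $\ag$ represents a single unit-demand buyer, and by the ``does not create value'' condition in Section~\ref{sec:model}, the intermediary's value for $x_\ag$ items is exactly $v_\ag \min(1, x_\ag)$. So when the pure-bundling mechanism $(\vx^B, p_B^{\epsilon})$ is applied under competition (each intermediary is offered all $k=n$ items for the single price $p_B^{\epsilon}$), intermediary $\ag$ accepts only if $v_\ag \ge p_B^{\epsilon}$; any tie-breaking rule among multiple acceptances is irrelevant for the argument that follows.

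Next I would fix $F = \mathrm{Uniform}[0,1]$ and $\epsilon < 1/4$, so that $\E[v] = 1/2$ and $p_B^{\epsilon} = n(1/2 - \epsilon)$. For every $n \ge 3$ this price strictly exceeds the support upper bound $\overline v = 1$, hence no realization $v_\ag$ ever meets the bundle price, giving $\mbox{Rev}^C(\vx^B, p_B^{\epsilon}) = 0$. For a lower bound on $\mbox{Rev}^C$, I would invoke Proposition~\ref{prop:1}: the optimal competition mechanism is the $(k+1)$-th price auction with reserve $\phi_F^{-1}(0) = 1/2$. Since $k = n$, every buyer with $v \ge 1/2$ wins an item and pays at least $1/2$, so $\mbox{Rev}^C \ge n \cdot (1/2) \cdot (1/2) = n/4 > 0$. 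For any $\gamma > 0$ this yields $\mbox{Rev}^C(\vx^B, p_B^{\epsilon}) = 0 < \gamma \cdot \mbox{Rev}^C$, which is the desired conclusion.

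I do not anticipate a real obstacle here. The only slight subtlety is specifying the extension of the bundling mechanism from monopsony (one intermediary) to competition ($n$ intermediaries), but the argument is insensitive to that choice because no intermediary ever accepts the bundle on the chosen instance. Substantively, the whole proof amounts to noting that unit-demand preferences cap the value of the bundle at $\overline v$, while $p_B^{\epsilon}$ grows linearly in $n$, so any distribution with bounded support and sufficiently large $n$ suffices.
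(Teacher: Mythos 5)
Your proof is correct and rests on the same core idea as the paper's: for a bounded-support distribution, $p_B^{\epsilon} = n(\E[v]-\epsilon)$ outgrows any realized valuation of a unit-demand buyer, so the bundle is essentially never sold under competition while the $(k+1)$-price auction with reserve $\phi_F^{-1}(0)$ extracts $\Theta(n)$ revenue. Your version is in fact a touch sharper, since it makes the bundle revenue exactly zero on a finite instance rather than bounding it via Markov's inequality and taking $n\to\infty$ as the paper does. One tiny arithmetic slip: with $\epsilon < 1/4$ and $F=\mathrm{Uniform}[0,1]$ you only get $p_B^{\epsilon} = n(1/2-\epsilon) > 1$ for $n \ge 5$ (at $n=3$, $p_B^{\epsilon}$ can be as low as just above $3/4$); take $n \ge 5$, or tighten to $\epsilon < 1/6$, and the argument goes through unchanged.
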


\section{The \robustmechanism}\label{sec:robustmechanism}

This section constructs the \robustmechanism (IPM) and shows that for any demand structure $(m,\pi)$ and any intermediaries' utility functions $(U_\ag)_{\ag \in [m]}$ satisfying conditions (1)-(3), the \robustmechanismshort obtains a constant factor of the welfare and, therefore, of the optimal expected revenue $\mbox{Rev}(F,m,\pi)$. As a corollary, we obtain that the revenue-maximizing mechanism for each particular market structure is always a constant factor of the welfare.   

%Interestingly, the IP mechanism is a uniform single-item price that depends only on the number of items $k$, number of buyers $n$ and the distribution $F$. Thus, the IP mechanism is completely robust to the intermediaries' involvement in the auction.

We begin by identifying a key property of order statistics of i.i.d.~ samples from a distribution $F$. We show that expectation of the highest value among $\ceil{n/k}$ i.i.d.~samples is at least a constant fraction of the average of the expectations of the top $k$ samples among $n$ i.i.d.~samples.\footnote{Because intermediaries' don't create value (Condition 2) and buyers are unit-demand, without loss of generality we can assume that $k\le n$.} We note that the above fact is true for all  distributions. The lemma below will serve as the basis for the \robustmechanism. 
%\ganote{define \ostat{i}{n} in the model section}

\begin{lemma}\label{lem:main}
Recall that $\ostat{\ob}{t}$ is  the $\ob^{th}$-order statistics of $t$ i.i.d.~random variables drawn according to $F$. If $n \geq k$, 
\[ k \cdot \E[\ostat{1}{\ceil{n/k}}] \ge \left(1-\frac 1 e \right) \left (\sum_{\ob=1}^k \E[\ostat{\ob}{n}] \right), \]
\end{lemma}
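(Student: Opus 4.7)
The plan is to express $\E[\ostat{1}{\lceil n/k\rceil}]$ as an explicit weighted average of the $\E[\ostat{j}{n}]$ for $j=1,\dots,n$, and then combine Chebyshev's sum inequality with an exponential tail bound on the weights.

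First, I fix an i.i.d.\ sample $X_1,\dots,X_n\sim F$, write $Y_j := \ostat{j}{n}$ (so $Y_1\ge\cdots\ge Y_n$), set $s=\lceil n/k\rceil$, and draw a uniformly random size-$s$ subset $S\subseteq[n]$ independently of the sample. By exchangeability, $\max_{i\in S}X_i \stackrel{d}{=} \ostat{1}{s}$, while pointwise in the sample $\max_{i\in S}X_i = Y_{r(S)}$, where $r(S)$ is the smallest rank among the items in $S$. Since $\P[r(S)=j] = \binom{n-j}{s-1}/\binom{n}{s}$, taking expectations gives
\[
\E[\ostat{1}{s}] \;=\; \sum_{j=1}^{n-s+1} \frac{\binom{n-j}{s-1}}{\binom{n}{s}}\,\E[Y_j].
\]
Setting $p_j := k\binom{n-j}{s-1}/\binom{n}{s}$, so that $k\E[\ostat{1}{s}] = \sum_{j} p_j\E[Y_j]$, the ratio $p_{j+1}/p_j = (n-j-s+1)/(n-j)\le 1$ shows that $(p_j)$ is nonincreasing.

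Second, I discard the nonnegative tail $\sum_{j>k} p_j\E[Y_j]\ge 0$ and apply Chebyshev's sum inequality to the two nonincreasing sequences $(p_j)_{j=1}^k$ and $(\E[Y_j])_{j=1}^k$:
\[
k\E[\ostat{1}{s}] \;\ge\; \sum_{j=1}^k p_j\E[Y_j] \;\ge\; \frac{1}{k}\Big(\sum_{j=1}^k p_j\Big)\Big(\sum_{j=1}^k \E[Y_j]\Big).
\]
It then suffices to prove $\sum_{j=1}^k p_j \ge k(1-1/e)$. Using $\sum_{j=1}^k p_j = k\cdot\P[r(S)\le k] = k\bigl(1 - \binom{n-k}{s}/\binom{n}{s}\bigr)$, this reduces to the tail estimate $\binom{n-k}{s}/\binom{n}{s}\le 1/e$, which follows from
\[
\frac{\binom{n-k}{s}}{\binom{n}{s}} \;=\; \prod_{i=0}^{s-1}\frac{n-k-i}{n-i} \;\le\; \Big(1-\tfrac{k}{n}\Big)^s \;\le\; e^{-sk/n} \;\le\; e^{-1},
\]
where the final step uses $s\ge n/k$.

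The only real obstacle is finding the right representation in the first step, since it is what makes the weights $p_j$ monotone and opens the door to Chebyshev. Once that identity is written down, the monotonicity of $p_j$ and the standard $(1-x)^{1/x}\le e^{-1}$ bound complete the argument. Notice that no structural assumption on $F$ (e.g., $\lambda$-regularity) is used, matching the lemma's claim that it holds for arbitrary distributions.
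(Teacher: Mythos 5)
Your proof is correct and uses the same core construction as the paper: sample a uniformly random size-$s$ subset, note that its maximum equals an order statistic of the full sample, and bound the probability of missing the top $k$ by the product estimate $\binom{n-k}{s}/\binom{n}{s}\le 1/e$. The one place you diverge is worth noting: the paper jumps from ``the $\max$ of $Y$ is a large element with probability at least $1-1/e$'' and ``a large element has average expectation $\frac{1}{k}\sum_j\E[\ostat{j}{n}]$'' directly to the conclusion, tacitly using that the conditional law of $r(S)$ given $r(S)\le k$ is biased toward small (hence high-value) indices; your explicit mixture identity $k\,\E[\ostat{1}{s}]=\sum_j p_j\E[\ostat{j}{n}]$ together with Chebyshev's sum inequality makes that step rigorous rather than leaving it implicit. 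So the argument is the same in substance, with your version being the cleaner write-up of the monotonicity step.
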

%\knote{Do we need a factor 1/2 here?}
%\ganote{I don't think we lose 2 here. removing}
\begin{proof}
Let $X = \set{ v_{1},\dots v_{n}}$ be a set of $n$ i.i.d. random variables drawn from $F$. 
We denote the $k$  largest random variables to be  $\ostat{1}{n}, \ostat{2}{n}, \dots, \ostat{k}{n}$  
and we shall refer to them as ``large'' elements. 
Let $s = \ceil{n/k}$. One way to choose $s$ random variables uniformly at random is to choose a subset of size $s$ uniformly at random from $X$. Let $Y = \set{ Y_1, \dots Y_{s}}$ be a uniformly random subset of X of size $s$.
If the set $Y$ contains a large element, then the $\max_\bu{ Y_{\bu}}$ must be a large element. 
In expectation, the value of a large element is $\sum_{\ob =1}^k \E[\ostat{\ob}{n}]/ k$. 
We will now show that the probability that $Y$ contains no large elements is at most $\frac1e$.

To see this, observe that
\begin{align*}
  \P[Y \text{ contains no large elements}] &= \frac{ \binom{n-k}{s} }{ \binom{n}{s} } \\
   &=  \frac{ (n-k)! }{n! } \cdot \frac{(n-s)!}{(n-k-s)!}  \\
%    &= \frac{ (n-k-s+1) \cdot (n-k-s+2) \cdot \dots \cdot (n-s) }{(n-k+1) \cdot (n-k+2) \cdot \dots n }    \\
    &= \prod_{\ob=1}^k \left(1- \frac{s}{n-k+\ob} \right) \\
%    \intertext{Using the fact that  $s = \ceil{\frac{n}{k}}$ }
    &\leq  \prod_{\ob=1}^k \left(1- \frac{s}{sk-k+\ob} \right) % \\ 
    \leq (1- 1/k)^k \leq 1/e
\end{align*}
  
Thus we get that $\E[\ostat{1}{s}]  \geq  (1-\frac1e) \cdot \frac{1}{k}\E[\sum_{\ob=1}^k \ostat{\ob}{n}]$.  
%Observe that they will distributed randomly to each of the bins. 
%The probability that $k$ of them will be allocated is exactly 
%We can split this variables into $k$ buckets of size $l$ random variables each
%$Y_{1,1} \dots, Y_{1,l}, Y_{2,1}, \dots, Y_{k,l}$. 
%
%WLOG, we can assume that $X_1 \geq X_2 \geq \dots  X_{k}$ are the largest $k$ values in the set $\{X\}$. 
%We can partition the $X$ to randomly choose $l $ items uniformly at random to put into each bucket. We will show that the probability that each such bin does not contain any large item is at most $\frac1e$. 
%where $l = \floor{\frac{n}{k}}$.
\end{proof}

%\ganote{this is not a corollary. moving this to the end of the model section}
%\subsection*{The robust mechanism} 
%\ganote{should we give our mechanism a name?}\apnote{i like intermediary-proof mechanism. (IP) }\ganote{that sounds good}\ganote{where should we start calling it that name?}\\
%We assume, without loss of generality, that $n \geq k$ (if not, we will set $k = n$ without any change in potential space of outcomes). 

We are now in position to present the \robustmechanism. 
\begin{definition}[Intermediary-Proof  Mechanism]\label{def:ipm_hom}
 Let $s=\lceil n/k \rceil$. The  \robustmechanism $(\vx^R,\vp^R)$ consists of a simple uniform single-item pricing scheme where each item is sold separately at a price $p^R=\E[v^{(1,s)}]$. In the case of excess demand, items are randomly allocated among the interested bidders.
\end{definition}

Remarkably, the \robustmechanismshort is a simple single-item posted-price that depends only on the number of items $k$, number of buyers $n$ and the distribution $F$. Thus, the seller does not have to learn the demand structure nor the intermediaries' payoff. In other words, the \robustmechanismshort is completely robust to the intermediaries' involvement in the auction. Besides the simplicity of its implementation, posted prices have appealing properties for the bidders' behavior \citep{goldberg2005}.

We next prove that the \robustmechanism obtains a constant factor of the optimal revenue. The proof proceeds by dividing the $n$ bidders into $k$ groups of roughly $\lceil n/k \rceil$ bidders and estimating the revenue from selling a single item to each group. 
%Some care is required to handle the case when $n/k$ is not an integer.

%Let $s =\lceil\frac n k\rceil$. Our mechanism $(\vx^R,\vp^R)$ is a simple posted-pricing scheme where each item is sold separately at a price $p^R =  \E[\ostat{1}{s}]$. In the case of excess demand, items are randomly allocated among the interested bidders. %Note that our mechanism does not depend at all on the demand structure $(m,\pi^m)$. 

%\footnote{\knote{We should specify mechanism for k > n}}
\begin{theorem}\label{th:1} 

Let $\tau = \min_{\ag \in [m]} \tau_\ag$, then for every distribution $F$ and demand structure $(m,\pi)$ we have that
$$ \mbox{Rev}(F,m,\pi;(\vx^R,\vp^R) ) \ge  \tau \cdot \frac{c(\lambda)}{2} \cdot \left(1-\frac 1 e\right)  \mbox{Wel}(F) \ge \tau \cdot \frac{ c(\lambda) }{2 } \cdot \left(1-\frac 1 e\right)  \mbox{Rev}(F,m,\pi).$$
Recall $c(\lambda) = (1-\lambda)^{\frac{1}{\lambda}}$ for $\lambda \in (0,1]$ and $c(0) =  \lim_{\lambda \to 0} c(\lambda) = \frac{1}{e}$. Also recall that $\lambda = 0$ for MHR distributions.
\end{theorem}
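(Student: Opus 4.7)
My plan is to decompose $\mbox{Rev}(F, m, \pi; (\vx^R, \vp^R)) = p^R \cdot \E[\text{items sold}]$ and bound the two factors separately. Lemma~\ref{lem:main} already supplies $k p^R \geq (1 - 1/e)\, \mbox{Wel}(F)$, so it suffices to prove the mechanism sells at least $\tau c(\lambda) k/2$ items in expectation; the two bounds then compose, and Lemma~\ref{lem:0} converts $\mbox{Wel}(F)$ to $\mbox{Rev}(F, m, \pi)$ for the second inequality in the statement.

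For the expected-items-sold bound I would partition the $n$ buyers into $k$ disjoint groups $G_1, \ldots, G_k$ of sizes in $\{s-1, s\}$, where $s = \lceil n/k \rceil$ (possible because $k(s-1) < n \leq ks$). Call buyer $\bu$ \emph{high} if $v_\bu \geq p^R$, let $D$ be the total number of high buyers, and let $Z_j$ indicate that $G_j$ contains at least one high buyer. By Lemma~\ref{lem:1} the maximum of $s$ i.i.d.\ draws from $F$ is itself $\lambda$-regular, so Fact~\ref{lem:2} gives $\P[\ostat{1}{s} \geq \E[\ostat{1}{s}]] \geq c(\lambda)$, and for the size-$(s-1)$ groups Fact~\ref{cor:oneless} gives $\tfrac{s-1}{s} c(\lambda) \geq c(\lambda)/2$; thus $\P[Z_j = 1] \geq c(\lambda)/2$ in every group (the edge case $s=1$, i.e. $n=k$, is handled identically and even yields $c(\lambda)$). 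Because the groups are disjoint, $\sum_j Z_j \leq \min(k, D)$, and therefore $\E[\min(k, D)] \geq \E[\sum_j Z_j] \geq k c(\lambda)/2$.

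I then need to transfer this bound on high buyers to items actually sold; this is where I expect the main technical difficulty. Let $B_\ag$ denote the items purchased by intermediary $\ag$, $q_\ag$ the number of high buyers in $\pi(\ag)$, and $T = \sum_\ag B_\ag$ the total demand. Condition~3 gives $\E[B_\ag \mid \vv] \geq \tau_\ag q_\ag$, hence $\E[T \mid \vv] \geq \tau D$; and because the surplus $V$ in Condition~2 is maximized at $x_\ag = q_\ag$ under uniform price $p^R$, no rational intermediary profitably purchases more, giving $T \leq D$ almost surely. The key claim is $\E[\min(k, T) \mid \vv] \geq \tau \min(k, D)$, which I would prove by case split: if $D \leq k$ then $T \leq D \leq k$ forces $\min(k, T) = T$ with conditional mean $\geq \tau D = \tau \min(k, D)$; if $D > k$ then the inequality $\min(k, T) \geq kT/D$ (verified separately for $T \leq k$ and $T > k$ using $T \leq D$) gives conditional mean $\geq k\tau = \tau \min(k, D)$. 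The subtlety is precisely that Condition~3's ``in expectation'' does not commute with the concave supply cap $\min(k, \cdot)$, and the elementary bound $\min(k, T) \geq kT/D$ is what rescues the constant.

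Combining everything, $\E[\min(k, T)] \geq \tau \E[\min(k, D)] \geq \tau k c(\lambda)/2$, hence $\mbox{Rev}(F, m, \pi; (\vx^R, \vp^R)) \geq p^R \cdot \tau k c(\lambda)/2 \geq \tfrac{\tau c(\lambda)}{2}(1 - 1/e)\, \mbox{Wel}(F)$, and Lemma~\ref{lem:0} yields the final inequality $\mbox{Wel}(F) \geq \mbox{Rev}(F, m, \pi)$.
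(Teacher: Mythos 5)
Your proof follows the same skeleton as the paper's---partition the $n$ buyers into $k$ groups of size $\approx \lceil n/k\rceil$, apply Lemma~\ref{lem:1}, Fact~\ref{lem:2}/Fact~\ref{cor:oneless} to each group, and invoke Lemma~\ref{lem:main} and Lemma~\ref{lem:0}---so the approach is the same one. What is genuinely different, and in fact more careful than the chain of inequalities printed in the paper, is your treatment of the supply cap. The paper's first displayed inequality, $\mbox{Rev}\geq \tau\sum_\bu \P[v_\bu\geq p^R]\,p^R$, cannot hold verbatim when the number of high buyers $D$ far exceeds $k$ (revenue is bounded by $k\,p^R$ while the right-hand side scales with $n$), and you correctly diagnose that the real content is the inequality $\E[\min(k,T)]\geq \tau\,\E[\min(k,D)]$, which does not follow mechanically from Condition~3 because expectation does not commute with the concave cap $\min(k,\cdot)$. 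Your two-case argument via $\min(k,T)\geq kT/D$ is a clean fix and does give the stated constant.

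The one step you should tighten is the ``$T\leq D$ almost surely'' claim. Your one-line justification---that $V$ is maximized at $x_\ag=q_\ag$ so no rational intermediary buys more---does not literally follow from Conditions~1--3: Condition~2 only upper-bounds $U_\ag$ by $V$, so $U_\ag(x_\ag)$ for $x_\ag>q_\ag$ being below $V(q_\ag)$ does not by itself imply it is below $U_\ag(q_\ag)$. The bound $\E[\min(k,T)\mid\vv]\geq\tau\min(k,D)$ genuinely fails without some control of $T$ beyond $\E[T\mid\vv]\geq\tau D$ (one can construct a heavy-tailed $T$ with this mean, $T\leq n$, $D>k$, and $\E[\min(k,T)]<\tau k$). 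The natural reading of Condition~3, consistent with both the surplus-maximizer and the monopolist examples, is that the intermediary demands an item for each high buyer independently with probability at least $\tau_\ag$ and never demands items for low buyers, which gives $T\leq D$ directly; but this is an implicit modeling assumption rather than something derivable from Condition~2, and should be stated as such. With that caveat acknowledged, your argument is sound and in fact more rigorous at the one genuinely delicate step than the proof in the paper.
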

\begin{proof}%[Proof of Theorem~\ref{th:1}]
%\apnote{Fix the proof for having $\tau_\ag$.} DONE
We start by dividing the $n$ valuations $v_\bu$, $\bu \in [n]$, randomly into $k$ sets of size $s'$, where $s' = 1$ if $s=1$, and $s' = s-1$ otherwise (there might be some leftover valuations after we make the sets). We note that the maximum valuation in each of these sets is distributed according to \ostat{1}{s'}.

The \discon ~~condition on $U_\ag$ (Condition 3) implies that Intermediary $\ag$ will try to buy at least $\tau \cdot |\{\bu : v_\bu \ge p^R, \bu \in \pi(\ag)\}|$ items in expectation. 
%\ganote{can remove the next sentence I think, not needed on a per buyer basis}
%Thus, the intermediary of any buyer having a valuation $v_\bu\ge p^R$ will try to purchase an item on his behalf.
Using this, we lower bound the revenue as follows.
\begin{align*}
 \mbox{Rev}(F,m,\pi;(\vx^R,\vp^R) ) &\ge \tau \sum_{\bu = 1}^n  \P[\text{Buyer }\bu \mbox{ has value at least } p^R ] \cdot p^R\\
 &\geq \tau \cdot k\cdot \P[\ostat{1}{s'}\ge p^R] \cdot p^R \\
 &\ge \tau \cdot k \cdot \frac{s'}{s} \cdot c(\lambda)   \cdot \E[\ostat{1}{s}]\\
 &\ge \tau \cdot \frac{s'}{s} \cdot c(\lambda)  \cdot \left(1-\frac 1 e\right) \cdot \underbrace{\left (\sum_{\ob=1}^k \E[\ostat{\ob}{n}] \right)}_{\mbox{Wel}(F)}\\
 &\ge \tau \cdot \frac{c(\lambda) }{2} \cdot \left(1-\frac 1 e\right) \mbox{Wel}(F)\\
 %&\ge \tau \cdot \frac{s'}{s\cdot e} \cdot \left(1-\frac 1 e\right)  \mbox{Rev}(F,m,\pi) \\
 &\ge \tau \cdot \frac{c(\lambda)}{2}  \cdot \left(1-\frac 1 e\right)  \mbox{Rev}(F,m,\pi).
\end{align*}
The third inequality holds because of \Cref{cor:oneless} and the definition of $p^R$. The fourth inequality comes from Lemma~\ref{lem:main}. The last inequality is due to Lemma~\ref{lem:0}.
\end{proof}

In the proof of the theorem we observe that $1/2$ factor is not necessary when $s'=s$, i.e., when $n/k$ is an integer number. The following corollary remarks this observation.
\begin{corollary}
Suppose that $n/k$ is an integer number. Then,
$$ \mbox{Rev}(F,m,\pi;(\vx^R,\vp^R) ) \ge  \tau c(\lambda) \left(1-\frac 1 e\right)  \mbox{Wel}(F) \ge \tau c(\lambda) \left(1-\frac 1 e\right)  \mbox{Rev}(F,m,\pi).$$
%where $c(\lambda) = (1-\lambda)^{\frac{1}{\lambda}}$ for $\lambda \in (0,1]$ and $c(0) = \frac{1}{e}$. 
\end{corollary}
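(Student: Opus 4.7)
The plan is to repeat the argument of Theorem~\ref{th:1} verbatim, using the integer divisibility of $n$ by $k$ to eliminate the two places where the factor of $1/2$ was lost. In the proof of Theorem~\ref{th:1}, the $n$ buyers were partitioned randomly into $k$ groups of size $s' = s-1$ (with some leftover) because when $n/k$ is not an integer, $k$ groups of size $s = \lceil n/k \rceil$ do not fit into $n$ buyers. When $n/k$ is an integer, $s = n/k$ exactly and we can instead partition the $n$ buyers into $k$ disjoint groups of size exactly $s$ with no leftovers, so the maximum valuation in each group is distributed as $\ostat{1}{s}$.

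With this cleaner partition, two simplifications appear in the chain of inequalities. First, rather than invoking Fact~\ref{cor:oneless} (which incurs a $\tfrac{s-1}{s}$ loss by passing from $\ostat{1}{s}$ to $\ostat{1}{s-1}$), we can appeal directly to Fact~\ref{lem:1} and Fact~\ref{lem:2}: since $\ostat{1}{s}$ is itself $\lambda$-regular, we have $\P[\ostat{1}{s} \ge \E[\ostat{1}{s}]] \ge c(\lambda)$, i.e., $\P[\ostat{1}{s} \ge p^R] \ge c(\lambda)$. Second, the ratio $s'/s$ is now exactly $1$. Combining these with the \discon condition on the intermediaries (Condition~3) and with Lemma~\ref{lem:main}, we obtain
\begin{align*}
\mbox{Rev}(F,m,\pi;(\vx^R,\vp^R))
&\ge \tau \sum_{\bu=1}^n \P[v_\bu \ge p^R] \cdot p^R \\
&\ge \tau \cdot k \cdot \P[\ostat{1}{s} \ge p^R] \cdot p^R \\
&\ge \tau \cdot c(\lambda) \cdot k \cdot \E[\ostat{1}{s}] \\
&\ge \tau \cdot c(\lambda) \cdot \left(1 - \tfrac{1}{e}\right) \cdot \mbox{Wel}(F),
\end{align*}
and the final inequality $\mbox{Wel}(F) \ge \mbox{Rev}(F,m,\pi)$ is Lemma~\ref{lem:0}.

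There is no substantive obstacle here: the corollary is a direct specialization of Theorem~\ref{th:1}'s proof that makes explicit the $1/2$ factor that was a slack introduced solely to handle the non-integer case. The only thing to verify is that the integer divisibility permits the exact partition into $k$ groups of size $s$ used above, which is immediate.
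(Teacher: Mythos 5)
Your proof is correct and matches the paper's intent exactly: the paper itself remarks, immediately before stating this corollary, that the factor of $1/2$ in Theorem~\ref{th:1} is only needed because $s' = s-1$ when $n/k$ is not an integer, and your argument is precisely that observation carried out in full. (A minor stylistic point: your ``two simplifications'' are really one—the $\tfrac{s-1}{s}$ loss in Fact~\ref{cor:oneless} \emph{is} the $s'/s$ factor—but this does not affect correctness.)
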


A second consequence of Theorem~\ref{th:1}, which is of independent interest, shows that for every demand structure and bidders' utility functions satisfying Conditions (1) - (3), the optimal revenue is a constant factor of the optimal welfare.

\begin{corollary}\label{coro:1}
For every distribution $F$ and demand structure $(m,\pi)$, we have that $$\mbox{Rev}(F,m,\pi) \ge \tau  \frac{c(\lambda)}{2}\left(1-\frac 1 e\right)\mbox{Wel}(F).$$
%where $c(\lambda) = (1-\lambda)^{\frac{1}{\lambda}}$ for $\lambda \in (0,1]$. Observe that we can define $c(0) := \lim_{\lambda \to 0} c(\lambda) = \frac{1}{e}$. 
\end{corollary}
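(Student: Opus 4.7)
The plan is to derive this corollary as an immediate consequence of Theorem~\ref{th:1}, using only the fact that $\mbox{Rev}(F,m,\pi)$ is defined (Definition~\ref{def:optrev}) as the supremum of $\mbox{Rev}(F,m,\pi;(\vx,\vp))$ over all BIC mechanisms. So the only content beyond citing Theorem~\ref{th:1} is to verify that the Intermediary-Proof Mechanism $(\vx^R,\vp^R)$ of Definition~\ref{def:ipm_hom} can be cast as a BIC mechanism in the sense of Section~\ref{sec:model}.

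First, I would exhibit a direct-revelation implementation of $(\vx^R,\vp^R)$: each intermediary $\ag$ reports a valuation profile $\tilde{\vv}_\ag \in \R_+^{\pi(\ag)}$, and the mechanism offers $\ag$ the choice of buying any integer number of items at the uniform per-item price $p^R=\E[\ostat{1}{\lceil n/k\rceil}]$, breaking ties uniformly at random in case the total demand exceeds $k$. Given any utility function $U_\ag$ satisfying conditions (1)--(3), the reported profile $\tilde{\vv}_\ag$ does not change the offered price or $\ag$'s payment-per-item, and only affects the allocation through $\ag$'s own best response; hence reporting truthfully is a best response regardless of what other intermediaries do, which trivially implies (IC). Condition~(1) on $U_\ag$ ensures (IR) (the intermediary can always opt out to secure a zero payoff), and the allocation rule never sells more than $k$ items. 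Thus $(\vx^R,\vp^R)$ lies in the feasible set over which $\mbox{Rev}(F,m,\pi)$ is maximized.

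Combining the two ingredients yields the claim in one line:
\begin{align*}
\mbox{Rev}(F,m,\pi) \;\ge\; \mbox{Rev}(F,m,\pi;(\vx^R,\vp^R)) \;\ge\; \tau \cdot \frac{c(\lambda)}{2} \cdot \left(1-\frac{1}{e}\right) \mbox{Wel}(F),
\end{align*}
where the first inequality is by Definition~\ref{def:optrev} and the second is Theorem~\ref{th:1}. There is no real obstacle here; the only subtlety is the BIC verification, and even that is essentially automatic because the mechanism is a per-item posted price whose payment and allocation to intermediary $\ag$ depend on $\ag$'s report alone (up to tie-breaking that does not affect per-item surplus). In particular, one does not need to characterize the best response or the parameter $\tau_\ag$ for the BIC check; those enter only through Theorem~\ref{th:1} in bounding the revenue that results.
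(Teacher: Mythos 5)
Your proof is correct and matches the paper's (implicit) reasoning: the paper states the corollary as an immediate consequence of Theorem~\ref{th:1}, and the only content is exactly the two steps you identify, namely that the posted-price mechanism has a direct BIC implementation so that $\mbox{Rev}(F,m,\pi) \ge \mbox{Rev}(F,m,\pi;(\vx^R,\vp^R))$, followed by an application of the first inequality in Theorem~\ref{th:1}. The extra care you take in spelling out the revelation-principle step is a reasonable thing to flag, since the paper treats it as understood.
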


%\ganote{clarify the para below, and check that we don't need two taus etc. 
%A consequence of this is that all uniform posted-price mechanisms that approximate optimal %revenue , that are naturally intermediary-proof, also provide 

The above corollary enables us to show that per-item posted-price mechanisms that perform well in the full competition setting (when all buyers participate directly in the auction) perform well in the context of intermediation as well when conditions (1)-(3) are satisfied. This is because per-item posted-price mechanisms provide, up to a constant factor $\tau$, the same revenue for the seller under all demand structures. Combined with Corollary~\ref{coro:1}, this implies that if a per-item posted-price mechanism is a constant approximation of the optimal revenue for a particular demand structure (e.g. full competition), then it is a constant approximation of the welfare, and hence the optimal revenue for any demand structure.\footnote{\citet{ChawlaHK07} construct a uniform per-item posted-pricing mechanism that is at least a 14\% approximation of the optimal mechanism when buyers are directly bidding in the auction (competition case). Corollary~\ref{coro:1} guarantees that this mechanism provides a 
%I don't think we need tau^2 for this case because we won't lose tau when we compare the competition revenue to welfare.
$\tau \;  \frac{0.14}{2e}\left(1-\frac 1 e\right)$ approximation of the welfare and also the revenue for any demand structure.} We now extend this to a more general class of mechanisms.

\begin{definition}[$\beta$-Robust Mechanism]
A mechanism $(\vx,\vp)$ is a $\beta$-Robust Mechanism if for every distribution $F$ and demand structures $(m,\pi)$, $(m',\pi')$ we have that $\mbox{Rev}(F,m',\pi';(\vx,\vp) )\ge \beta \; \mbox{Rev}(F,m,\pi;(\vx,\vp) )$.\footnote{Note that from the \discon condition on the 
intermediaries' utility function, every per-item posted-pricing mechanism is a $\tau$-Robust Mechanism.}
\end{definition}

%\ganote{add gamma to the prop below}
%\knote{Just checking if $\gamma$ is different from $c(\lambda)/2 ( 1- 1/e$ Corollary 3 does not have $\gamma$.}
%\ganote{$\gamma$ is the approximation ratio of an arbitrary beta-robust mechanism for competition. For item pricing, beta = 1, and gamma is the best possible item pricing mechanism}
\begin{proposition}\label{prop:n}
Consider a $\beta$-Robust Mechanism $(\vx,\vp)$ such that the revenue for the demand structure $(\overline m,\overline \pi)$ is a $\gamma$ approximation of $\mbox{Rev}(F,\overline m,\overline \pi)$. Then for every distribution $F$ and demand structure $(m,\pi)$ we have that
\begin{align*}\mbox{Rev}(F,m,\pi;(\vx,\vp) ) \ge  \beta \gamma \tau \frac{c(\lambda)}{2 } \left(1-\frac 1 e\right)  \mbox{Wel}(F) \ge  \beta \gamma \tau \frac{ c(\lambda)}{2} \left(1-\frac 1 e\right)  \mbox{Rev}(F,m,\pi). 
\end{align*}
%where $c(\lambda) = (1-\lambda)^{\frac{1}{\lambda}}$ for $\lambda \in (0,1]$ and $c(0) = \frac{1}{e}$. 
\end{proposition}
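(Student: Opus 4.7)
The plan is to chain together three facts that are already established in the paper. Fix an arbitrary distribution $F$ and demand structure $(m,\pi)$, and let $(\overline m,\overline \pi)$ be the reference demand structure from the hypothesis.

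First, I invoke the definition of a $\beta$-Robust Mechanism with $(m,\pi)$ playing the role of the target and $(\overline m,\overline \pi)$ playing the role of the baseline, to get
\[
\mbox{Rev}(F,m,\pi;(\vx,\vp)) \;\ge\; \beta \cdot \mbox{Rev}(F,\overline m,\overline \pi;(\vx,\vp)).
\]
Second, the hypothesis that $(\vx,\vp)$ is a $\gamma$-approximation on $(\overline m,\overline \pi)$ yields
\[
\mbox{Rev}(F,\overline m,\overline \pi;(\vx,\vp)) \;\ge\; \gamma \cdot \mbox{Rev}(F,\overline m,\overline \pi).
\]
Third, applying Corollary~\ref{coro:1} to the demand structure $(\overline m,\overline \pi)$ gives
\[
\mbox{Rev}(F,\overline m,\overline \pi) \;\ge\; \tau \cdot \tfrac{c(\lambda)}{2}\!\left(1-\tfrac{1}{e}\right) \mbox{Wel}(F).
\]

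Multiplying these three inequalities gives the first bound of the proposition. The second bound then follows immediately from Lemma~\ref{lem:0}, which states $\mbox{Rev}(F,m,\pi) \le \mbox{Wel}(F)$, so that the left-hand-side lower bound against $\mbox{Wel}(F)$ also lower-bounds $\mbox{Rev}(F,m,\pi)$ by the same quantity.

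There is really no obstacle here: all the heavy lifting has already been done by Corollary~\ref{coro:1}, whose proof in turn leverages Theorem~\ref{th:1} and the Intermediary-Proof Mechanism. The only minor subtlety is bookkeeping about the constant $\tau$: strictly speaking, $\tau$ depends on the demand structure to which Corollary~\ref{coro:1} is being applied (here $(\overline m,\overline \pi)$), so I would either take $\tau$ to be a uniform lower bound on $\min_\ag \tau_\ag$ across the relevant demand structures or state explicitly that $\tau$ refers to the baseline $(\overline m,\overline \pi)$. No further calculation is required.
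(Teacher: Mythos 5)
Your proof is correct and follows essentially the same chain as the paper's own argument: $\beta$-robustness, the $\gamma$-approximation hypothesis, Corollary~\ref{coro:1}, and then Lemma~\ref{lem:0}. Your side remark about the dependence of $\tau$ on the demand structure is a valid observation that the paper glosses over, and interpreting $\tau$ as the value associated with $(\overline m,\overline \pi)$ (or a uniform lower bound) is the right reading.
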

\begin{proof}
We do the proof in two steps. First, because $(\vx,\vp)$ is  a $\beta$-Robust Mechanism we have that $\mbox{Rev}(F,m,\pi;(\vx,\vp) )\ge \beta \; \mbox{Rev}(F,\overline m,\overline \pi;(\vx,\vp) )$. Second, Corollary~\ref{coro:1} implies that $\mbox{Rev}(F,\overline m,\overline \pi;(\vx,\vp) ) \ge \gamma \tau  \frac{c(\lambda)}{2}\left(1-\frac 1 e\right)\mbox{Wel}(F)$. We conclude using that $\mbox{Wel}(F) \ge \mbox{Rev}(F,m,\pi)$. (Lemma~\ref{lem:0}).
\end{proof}

To finish this section we present two important applications of Theorem~\ref{th:1}. The first one is when the intermediaries are aware of the buyers' valuations, and their payoff is a fraction of the overall surplus (see Equation~\eqref{eq:1}). In that case, $\tau_\ag= 1$ for every $j$.

\begin{corollary}
When intermediaries utilities' functions are of the form described in Equation~\eqref{eq:1} the robust mechanism gets at least a $\frac{c(\lambda)}{2}  \left(1-\frac 1 e\right)$ fraction of the Optimal welfare (and revenue). 
%where $c(\lambda) = (1 - \lambda)^{\frac{1}{\lambda}}$. 
\end{corollary}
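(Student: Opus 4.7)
The plan is to reduce the corollary to a direct application of Theorem~\ref{th:1} by verifying that every intermediary utility of the form given in Equation~\eqref{eq:1} satisfies the \discon condition with $\tau_\ag = 1$ for every $\ag \in [m]$. Once this is done, we may substitute $\tau = \min_\ag \tau_\ag = 1$ into the conclusion of Theorem~\ref{th:1} to obtain the desired bound.

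First, I would verify \discon with $\tau_\ag = 1$. Fix an intermediary $\ag$ and suppose the seller posts a uniform per-item price $r$. Let $q = |\{\bu \in \pi(\ag) : v_\bu \ge r\}|$ and let $v^{(\ob)}_\ag$ denote the $\ob$-th largest value among the buyers in $\pi(\ag)$. If the intermediary purchases $x$ items at total price $p_\ag = xr$, the internal allocation maximizing surplus sends items to the top-$x$ valued buyers, so
\[
V(x, xr; \vv_\ag) = \sum_{\ob=1}^{x} \bigl(v^{(\ob)}_\ag - r\bigr).
\]
Because the marginal term $v^{(\ob)}_\ag - r$ is non-negative exactly when $\ob \le q$, the function $V(\cdot, \cdot r; \vv_\ag)$ is uniquely maximized at $x = q$ over $\{0, 1, \dots, k\}$. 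Under Equation~\eqref{eq:1}, $U_\ag = \alpha_\ag V$ with $\alpha_\ag \in [0,1]$ representing bargaining power or an auditing fee that does not invert the preference ordering between surplus-positive outcomes, so $U_\ag$ is likewise maximized at $x = q$. Thus the intermediary demands exactly $q$ items, establishing \discon with $\tau_\ag = 1$.

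Second, substituting $\tau = 1$ into Theorem~\ref{th:1} yields
\[
\mbox{Rev}(F, m, \pi; (\vx^R, \vp^R)) \;\ge\; \frac{c(\lambda)}{2}\left(1 - \frac{1}{e}\right) \mbox{Wel}(F) \;\ge\; \frac{c(\lambda)}{2}\left(1 - \frac{1}{e}\right) \mbox{Rev}(F, m, \pi),
\]
where the second inequality is Lemma~\ref{lem:0}. This is exactly the claimed bound.

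The only non-trivial content is the first step. The main obstacle is handling the fact that $\alpha_\ag$ may depend on $(x_\ag, p_\ag)$, so one must confirm this dependence does not cause the intermediary to strictly prefer some $x \neq q$. For the concrete interpretations listed after Equation~\eqref{eq:1} (a bargaining fraction or an auditing fee), the scaling $\alpha_\ag$ is non-negative and preserves the preference ordering between surplus-positive outcomes, which is precisely why the paper tags this step as ``a simple exercise.'' After that, everything is an immediate specialization of Theorem~\ref{th:1}.
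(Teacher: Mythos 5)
Your proposal is correct and follows the same route as the paper: observe that utility functions of the form in Equation~\eqref{eq:1} satisfy \discon with $\tau_\ag = 1$ (which the paper asserts in Section~\ref{sec:model} as ``a simple exercise''), then substitute $\tau = 1$ into Theorem~\ref{th:1}. The paper presents the corollary as an immediate consequence without re-deriving the $\tau_\ag = 1$ step, but your verification of it is the same argument the paper has in mind.
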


The second application is the case of double marginalization when each intermediary does not know the buyer's valuation and decides to implement the profit-maximizing mechanism. The following lemma shows that for that case $\tau_\ag \ge \frac 1 e$.

%\knote{Leave MHR for proof of exposition}
For the sake of exposition, we state and prove the following lemma first for the case when the distribution is MHR(i.e.~ $\lambda=0$). 
We present the general lemma below and defer the proof to the appendix. 

\begin{lemma}\label{lem:dm}
Suppose that Intermediary $\ag$ is a monopolist and only knows the prior of the buyer's valuations. If the seller posts a price $p$ per item and Buyer $\bu\in \pi(\ag)$ has valuation $v_\bu \ge p$ then the probability that Intermediary $\ag$ would like to buy one item is at least $\frac 1 e$.
\end{lemma}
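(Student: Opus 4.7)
The plan is to characterize the monopolist intermediary's optimal selling strategy and then invoke the MHR property to bound the relevant survival probability.

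\textbf{Step 1 (the intermediary's optimal mechanism).} Since Intermediary $\ag$ knows only the prior $F$, the buyers in $\pi(\ag)$ have i.i.d.\ valuations, and the seller imposes no supply constraint on how many items the intermediary may purchase at the uniform price $p$, the intermediary's monopoly problem decouples across buyers. For a single buyer whose type is drawn from $F$ with marginal cost $p$, Myerson's revenue-optimal mechanism is a take-it-or-leave-it posted price; by symmetry this gives the same price $r^\star$ for each buyer, where
\[
r^\star \;=\; \arg\max_{r \ge p}\,(r-p)\,(1-F(r)).
\]
The intermediary wants to buy one item for buyer $\bu$ exactly when $v_\bu \ge r^\star$.

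\textbf{Step 2 (first-order condition).} If the maximizer is at the corner $r^\star = p$, then the event $v_\bu \ge p$ already implies $v_\bu \ge r^\star$, so the conditional probability equals $1 \ge 1/e$ trivially. Otherwise differentiating $(r-p)(1-F(r))$ gives the first-order condition
\[
(r^\star - p)\,h_F(r^\star) \;=\; 1, \qquad h_F(v)\;:=\;\tfrac{f(v)}{1-F(v)}.
\]

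\textbf{Step 3 (survival-ratio bound via MHR).} The sought probability is
\[
\P\!\bigl[v_\bu \ge r^\star \mid v_\bu \ge p\bigr] \;=\; \frac{1-F(r^\star)}{1-F(p)} \;=\; \exp\!\left(-\int_p^{r^\star} h_F(t)\,dt\right).
\]
Because $F$ is MHR, $h_F$ is non-decreasing, hence
\[
\int_p^{r^\star} h_F(t)\,dt \;\le\; (r^\star - p)\,h_F(r^\star) \;=\; 1,
\]
which gives the desired bound $\ge e^{-1}$.

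\textbf{Main obstacle.} The bulk of the work is in Step~1, namely arguing that a single posted price is optimal for the intermediary. The key observation is that the intermediary faces no supply constraint at cost $p$, so the multi-buyer problem separates into independent single-buyer problems to which Myerson's characterization applies. Once this is in place, Steps~2 and~3 are short calculus, and the same template (replacing the FOC manipulation with the $\lambda$-regular inequality $\lambda v + 1/h_F(v)$ non-decreasing) should extend the argument to give $\tau_\ag \ge c(\lambda)$ in the general $\lambda$-regular case quoted elsewhere in the paper.
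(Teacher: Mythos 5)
Your proof is correct and takes essentially the same route as the paper's. You both reduce the claim to bounding the survival ratio $\frac{1-F(\phi_F^{-1}(p))}{1-F(p)}$ (your $r^\star$ is exactly $\phi_F^{-1}(p)$, via the first-order condition), and you both observe that $(r^\star - p)\,h_F(r^\star) = 1$ together with MHR gives $\int_p^{r^\star} h_F \le 1$. The only cosmetic difference is presentation: the paper writes $1-F = e^{-H}$ and invokes convexity of $H$ with a tangent-line bound, whereas you use the monotonicity of $h_F$ to bound the integral directly — these are the same inequality. Your closing remark about extending via the $\lambda$-regular characterization is also precisely how the paper handles the general case in Lemma~\ref{lem:dml}.
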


\begin{proof}
    Because the price per item is $p$, the mechanism design problem for Intermediary $\ag$ is the same one solved in the Proof of Proposition~\ref{prop:1} when the cost of production is $p$. % \knote{Should proposition 1 also include the formula for expected revenue? } 
    Thus, Intermediary $\ag$ would like to allocate the item to Buyer $\bu$ if $\phi_F(v_\bu)\ge p$. Thus, we need to show that $\P[\phi_F(v_\bu)\ge p | v_\bu \ge p] \ge \frac 1 e$.
    
    From Bayes' rule we derive that 
    \begin{align*}\P[\phi_F(v_\bu)\ge p | v_\bu \ge p] &= \frac{\P[\phi_F(v_\bu)\ge p , v_\bu \ge p]}{\P[ v_\bu \ge p]} =  \frac{\P[\phi_F(v_\bu)\ge p]}{\P[ v_\bu \ge p]}\\&=\frac{1-F(\phi^{-1}_F(p))}{1-F(p)}.
    \end{align*}
    The second equality follows from the fact that $\phi_F(v_\bu)  = v_\bu - (1 - F(v_\bu))/f(v_\bu) \le v_\bu$.
    Also, note that $1-F(v)= e^{-H(v)}$ where  $H(v) = \int_0^v h_F(z) dz$. Then, we have that
    \begin{align*}\P[\phi_F(v_\bu)\ge p | v_\bu \ge p] &=e^{-(H(\phi^{-1}_F(p)) - H(p))}\\
    &\ge e^{-\max_{p \ge 0}(H(\phi^{-1}_F(p)) - H(p))}\\ 
    &= e^{-\max_{p\ge \phi^{-1}(0)}(H(p) - H(\phi_F(p)))},
    \end{align*}
    where the last equality holds because $\phi_F$ is increasing. 
    
    To conclude the proof, we assert that $H(p) - H(\phi_F(p))\le 1$. Indeed, because $F$ is MHR we have that $H$ is convex with $H'(v) = h_F(v)$. The convexity and differentiability of $H$ implies that 
    \begin{equation}\label{eq:convex} H(\phi_F(p)) \ge H(p) + h_F(p) (\phi_F(p) - p ).\end{equation}
    From the definition of the virtual valuation we have that $\phi_F(p) = p - \frac 1 {h_F(p)}$. Replacing in the previous inequality we conclude that   $H(p) - H(\phi_F(p))\le 1$. 
\end{proof}

\begin{restatable}{lemma}{lemdmone} \label{lem:dml}
Suppose that Intermediary $\ag$ is a monopolist and only knows the prior of the buyer's valuations. If the seller posts a price $p$ per item and Buyer $\bu\in \pi(\ag)$ has valuation $v_\bu \ge p$ then the probability that Intermediary $\ag$ would like to buy one item is at least $ c(\lambda)$. Recall $c(\lambda) = (1-\lambda)^{\frac{1}{\lambda}}$ for $\lambda \in (0,1]$ and $c(0) =  \lim_{\lambda \to 0} c(\lambda) = \frac{1}{e}$. 
\end{restatable}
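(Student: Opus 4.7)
The plan is to follow the structure of the MHR argument from Lemma~\ref{lem:dm}, but replace the convexity-of-$H$ step (which used that $h_F$ is monotone) with a direct integration that exploits the $\lambda$-regularity condition as a pointwise bound on the slope of $1/h_F$. First, as in the MHR case, the intermediary is a profit-maximizing monopolist facing buyers whose valuations are i.i.d.\ from $F$ and who can procure items at per-unit wholesale cost $p$, so by Myerson its revenue-optimal BIC mechanism allocates to Buyer $i$ precisely when $\phi_F(v_i) \ge p$. Under $\lambda$-regularity with $\lambda < 1$, $\phi_F$ is strictly increasing and $\phi_F(v) \le v$, so the event $\{\phi_F(v_i) \ge p\}$ is contained in $\{v_i \ge p\}$, and the conditional probability we must lower bound equals
\begin{equation*}
\frac{1 - F(\phi_F^{-1}(p))}{1 - F(p)} = \exp\bigl(-[H(\phi_F^{-1}(p)) - H(p)]\bigr),
\end{equation*}
where $H(v) = \int_0^v h_F(z)\,dz$ and $h_F = f/(1-F)$. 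Setting $q = \phi_F^{-1}(p)$, it therefore suffices to prove $H(q) - H(\phi_F(q)) \le -\ln c(\lambda) = -\ln(1-\lambda)/\lambda$ for every $q \ge \phi_F^{-1}(0)$.

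The heart of the proof is a pointwise upper bound on $h_F$ over $[\phi_F(q), q]$. $\lambda$-regularity is equivalent to $(1/h_F)'(v) \le \lambda$ almost everywhere, so for any $z \le q$,
\begin{equation*}
\frac{1}{h_F(z)} \;\ge\; \frac{1}{h_F(q)} - \lambda(q - z),
\end{equation*}
hence $h_F(z) \le [1/h_F(q) - \lambda(q-z)]^{-1}$ on this interval. The denominator stays strictly positive: at the left endpoint $z = \phi_F(q)$ it equals $(1-\lambda)/h_F(q) > 0$ since $q - \phi_F(q) = 1/h_F(q)$ and $\lambda < 1$. Substituting $u = q - z$ and writing $a = 1/h_F(q)$,
\begin{equation*}
H(q) - H(\phi_F(q)) \;=\; \int_{\phi_F(q)}^q h_F(z)\,dz \;\le\; \int_0^{a} \frac{du}{a - \lambda u} \;=\; -\frac{\ln(1-\lambda)}{\lambda},
\end{equation*}
which is exactly $-\ln c(\lambda)$. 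Exponentiating yields the required lower bound $c(\lambda)$, and the limit $\lambda \to 0$ recovers $1/e$, consistent with Lemma~\ref{lem:dm}.

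The main obstacle, compared to the MHR proof, is the loss of convexity of $H$: once $\lambda > 0$ the hazard rate $h_F$ need not be monotone, so the tangent-line lower bound $H(\phi_F(p)) \ge H(p) + h_F(p)(\phi_F(p) - p)$ is unavailable. The correct substitute is to bound $h_F$ on $[\phi_F(q), q]$ by inverting the $\lambda$-regularity inequality on $1/h_F$; the interval has length exactly $1/h_F(q)$, which is precisely what makes the resulting integral evaluate to the clean closed form $-\ln(1-\lambda)/\lambda$. A secondary care point is verifying that $\phi_F$ remains well-defined and increasing, and that the denominator never vanishes on the integration interval, both of which follow from $\lambda < 1$ under Assumption~\ref{ass:1}.
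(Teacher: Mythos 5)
Your proof is correct, but it takes a genuinely different route from the paper's. The paper's proof of Lemma~\ref{lem:dml} invokes the Schweizer--Szech representation (Proposition~\ref{prop:schw}): it writes $1-F = \Gamma_\lambda(H_\lambda)$ with $H_\lambda$ convex and $\Gamma_\lambda(v)=(1+\lambda v)^{-1/\lambda}$, applies a tangent-line lower bound to the convex function $H_\lambda$ at $p$, translates $q-\phi_F(q)=1/h_F(q)$ through the identity $r_\lambda(p)/h_F(p)=(1-F(p))^{-\lambda}=1+\lambda H_\lambda(p)$, and then lets the algebraic structure of $\Gamma_\lambda$ produce $(1-\lambda)^{1/\lambda}$. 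Your argument instead works directly with the ordinary cumulative hazard $H=\int h_F$: you read the $\lambda$-regularity definition as the pointwise slope bound $(1/h_F)' \le \lambda$, deduce $h_F(z)\le \bigl[1/h_F(q)-\lambda(q-z)\bigr]^{-1}$ on $[\phi_F(q),q]$, and integrate, exploiting exactly the coincidence that the interval length equals $1/h_F(q)$ to get $-\ln(1-\lambda)/\lambda$ in closed form. Your route is more elementary and self-contained (it does not lean on the representation lemma); the paper's route has the advantage of reusing the $\Gamma_\lambda$, $H_\lambda$ machinery it already sets up to prove Fact~\ref{lem:1}. Both correctly handle the positivity of the denominator and the role of $\lambda<1$. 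One harmless discrepancy to be aware of: the printed form of Definition~\ref{def:lambda-regular} ($\lambda v + (1-F)/f$ non-decreasing) is inconsistent with the claimed $\lambda=0$ and $\lambda=1$ endpoints; it should read $\lambda v - (1-F)/f$ non-decreasing, which is precisely the $(1/h_F)'\le\lambda$ characterization you use, so your reading is the intended one.
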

We defer the proof to the appendix.

\begin{corollary}
Suppose that each intermediary is behaving as monopolist and implementing the profit-maximizer mechanism. Then, the robust mechanism has a $\frac{c(\lambda)^2}{2} \left(1-\frac 1 e\right)$ revenue guarantee. 
\end{corollary}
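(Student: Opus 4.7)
The plan is to deduce this corollary directly from Theorem~\ref{th:1} combined with Lemma~\ref{lem:dml}, by showing that when every intermediary acts as a monopolist and runs its revenue-maximizing mechanism, the \discon parameter satisfies $\tau_\ag \ge c(\lambda)$ for each $\ag$. Substituting $\tau = \min_\ag \tau_\ag \ge c(\lambda)$ into the guarantee $\tau \cdot \tfrac{c(\lambda)}{2}\bigl(1-\tfrac{1}{e}\bigr) \cdot \mbox{Wel}(F)$ of Theorem~\ref{th:1} then produces the claimed factor of $\tfrac{c(\lambda)^2}{2}\bigl(1-\tfrac{1}{e}\bigr)$.

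First I would fix an intermediary $\ag$ facing uniform per-item price $r = p^R$. Since $\ag$ knows only the prior $F$, its revenue-maximizing downstream mechanism, computed exactly as in the proof of Proposition~\ref{prop:1} with marginal cost $r$, serves buyer $\bu$ precisely when $\phi_F(v_\bu) \ge r$; the (random) number of items $\ag$ purchases from the seller equals $|\{\bu \in \pi(\ag) : \phi_F(v_\bu) \ge r\}|$. To verify Condition 3, I would take expectations buyer-by-buyer and use $\phi_F(v) \le v$ to write $\P[\phi_F(v_\bu) \ge r] = \P[\phi_F(v_\bu) \ge r,\, v_\bu \ge r]$, so that Lemma~\ref{lem:dml} gives $\P[\phi_F(v_\bu) \ge r \mid v_\bu \ge r] \ge c(\lambda)$ for each $\bu$. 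Summing via linearity of expectation over the buyers in $\pi(\ag)$, the expected purchase count is at least $c(\lambda) \cdot \E[|\{\bu \in \pi(\ag) : v_\bu \ge r\}|]$, which is exactly the \discon bound with $\tau_\ag = c(\lambda)$.

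The only subtle point, and the step where I would take care, is matching Lemma~\ref{lem:dml} (a per-buyer conditional probability bound) to the aggregate expected-count form of Condition 3. This transfer is clean because the monopolist's Myerson-optimal allocation at fixed marginal cost $r$ is separable across buyers, with no binding capacity on the intermediary's side, so linearity of expectation lifts the per-buyer guarantee to the sum without loss. Once $\tau \ge c(\lambda)$ is in hand, a direct invocation of Theorem~\ref{th:1} closes the proof with no further work.
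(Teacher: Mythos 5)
Your proposal is correct and follows essentially the same route as the paper: invoke Lemma~\ref{lem:dml} to establish $\tau_\ag \ge c(\lambda)$ for a monopolist intermediary, then substitute $\tau = c(\lambda)$ into the bound of Theorem~\ref{th:1}. Your extra care in bridging the per-buyer conditional probability in Lemma~\ref{lem:dml} to the aggregate expected-count form of Condition 3 via linearity of expectation is a point the paper glosses over but is indeed the correct and intended reading.
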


\section{Selling non-homogeneous items}\label{sec:ext}

%\ggnote{ Can we solve this for the double marginalization case?}\apnote{It is not immediate how to obtain the $\tau$}

In this section, we consider the problem when the items are not necessarily homogeneous. We consider $k$ items with weights $\eta_1 \geq \eta_2 \geq \ldots \geq \eta_k \geq 0$ so that the value for buyer $i$ of getting item $\ob$ is $\eta_\ob v_\bu$. Given that the items are non-identical, we will adapt the mechanisms the seller can implement to incorporate per-item allocation. In this setting, a BIC mechanism consists of $(x_{\ob\ag},p_\ag)_{\ob\in[k],\ag\in[m]}$, where $x_{\ob\ag}$ is the probability that Intermediary $\ag$ get item $\ob \in [k]$ and $p_\ag$ corresponds to the price Intermediary $\ag$ has to pay.

In this generalized model, the expected welfare is given by $\mbox{Wel}(F) = \sum_{\ob=1}^k \eta_\ob \E[\ostat{\ob}{n}]$.

Observe that the mechanism for the identical items case will not perform well in the general setting. \robustmechanism does not distinguish between the case of uniform item weights and the case of one item with a high weight and the rest with near-zero weights. In the latter case, we need to charge close to $E[\ostat{1}{n}]$ in order to compete with welfare, and this could be much higher than 
$E[\ostat{1}{\ceil{n/k}}]$, e.g. for $k=n$.

%\ganote{it will be nice to be able to claim that a uniform price will not give a good approximation ratio to motivate the differential pricing and the need for a stronger assumption on utility functions}

{\bf Condition on utility functions of Intermediaries:} We observe that in order to guarantee good welfare, higher-weight items need to cost more on a per-unit-weight basis. Otherwise there will be no way to screen for higher value buyers to assign them the higher-weight items. In addition, we need to ensure that the utility functions are such that the intermediary will choose higher-value, higher-cost items over lower-value, lower cost items in some cases.
With this in mind, we will impose the following, relatively strong, condition on the utility functions of the Intermediaries for the purpose of this section. We will assume that the utility function of Intermediary $\ag$ is such that it tries to maximize the surplus among its buyers, i.e., 
 $U_\ag (x_\ag,p_\ag;\vv_{\ag})= V(x_\ag,p_\ag;\vv_{\ag}) :=  \max_{z_\ag \in [k]^{\pi(\ag)}} \left\{ \sum_{\bu\in \pi(\ag), j \in [k]} v_{\bu}  \eta_l z_{\bu j \ag} \bigg| \; \forall l \;\sum_{\bu \in \pi(\ag)} z_{\bu j \ag}\le x_{\ag,j} \right\}  - p_\ag.$
 Here $x_{\ag,j}$ is the fraction of item $j$ assigned to the intermediary and $z_{\bu j \ag}$ is the fraction of item $j$ that the aggregator assigns to $\bu$. 
Let $z^*_l(x_l)$ be the $z_l$ that maximizes the above expression. Then intermediary $\ag$ gets a utility of
$U_\ag(x_\ag,p_\ag;\vv_\ag) = \sum_{\ob=1}^{x_\ag} \sum_{j = 1}^k v_i \eta_j z^*_{ijl} - p_\ag$.
%Note that \ganote{give example} satisfy this condition.
%\ganote{we are abusing the v notation here. probably good to define a new value here as the j-th highest value among the buyer values represented by \ag}

%\ganote{prose to explain how we approach this, and what the hurdles are -- we need to use higher prices for higher weights but also need to make sure they get bought. need to make sure that we only sell an item once. talk about the dependencies -- who can buy an item depends on who bought the earlier items. we get rid of this dependence by removing any buyer who was eligible to buy a better item. btw, we are assuming different etas I think}

As we argued in the previous paragraph, it is important to ensure that items with higher weights are charged a higher price to ensure that we can screen for buyers with a high value. However, we need to make sure that the balance of prices is such that the high-value items are still bought with a reasonable probability.
%Resolving this tension is a key challenge in this setting. 

%\begin{lemma}
%Let $r_1 \geq r_2 \geq \dots \geq r_k $ be a set of prices for the $k$ items. Given a mechanism M, 
%let $q_i(M)$ denote the probability that item $i$ gets sold. If the mechanism satisfies the %following two properties:

%\begin{enumerate}
%\item $\sum_{i \leq j} \q_i(M) \geq \rho \cdot j,\qquad  \forall j \leq n$, i.e. the %expected number of items sold in each prefix of length $j$ is at least $\rho \cdot j$.
%\item When the set of items sold is $S \subseteq [k]$, then the revenue is at least %$\sum_{i \in S} r_i$.
%\end{enumerate}
%Then, the expected revenue of the mechanism is at least $\rho \sum_i r_i$. 
%\end{lemma}

%\ggnote{Not sure what you mean. It means that this section doesn't solve for a general function $U_j$. This mechanism only works for a utility function of the } %Okay, I understand.  :)

%\knote{use subscript other than i here?}
%\ganote{changing to $j$ as $u_j$ is used to set $r_j$ which is the reserve price for the $j$-th item}

\subsection{A robust mechanism for heterogeneous items}

%Let $u_\ob = E[\ostat{1}{\ceil{n/\ob}}]$. Define ${r}_\ob =  {r}_{\ob+1}  + u_\ob \cdot (\eta_\ob - \eta_{\ob+1})$.

In this section, we give a sequential posted-price mechanism with non-uniform per-item prices, and show that it gets a revenue that is a constant fraction of the optimal welfare.

% \knote{Can we add a single definition with a description of the mechanism? Here's an attempt.}
% \ganote{Moved the definition inside}
\begin{definition}[Intermediary-Proof  Mechanism  for  Heterogeneous Items]\label{def:het}
Let $u_\ob = E[\ostat{1}{\ceil{n/\ob}}]$. Define ${r}_\ob =  {r}_{\ob+1}  + u_\ob \cdot (\eta_\ob - \eta_{\ob+1})$.
\heteromechanism ~~ is a sequential posted-price mechanism that offers the same menu of prices to each Intermediary, ordering the Intermediaries in an arbitrary order. The menu consists of individual prices for each item, with the price of item $\ob$ being $r_\ob$, and each Intermediary allowed to buy any items it likes as long as the item is still available.
\end{definition}
Note that $u_\ob$ and $r_\ob$ are decreasing in $j$. Thus the intermediary pays more for positions with higher $\eta_j$. 

We will show that due to the choice of $r_\ob$s, an intermediary representing a buyer with value $v_\bu$ will purchase item $\ob$ if $v_\bu \in [u_\ob, u_{\ob-1})$ (or another item with the same $\eta$ as $\ob$), since this item will generate the most value through buyer $\bu$.
%In fact, if $u_\ob \geq v_\bu \geq r_\ob$, even though the item $\ob$ will earn a positive utility, the intermediary purchases a different item which provides a higher utility. 
Below we make this more formal for intermediaries representing multiple buyers.
%\ganote{when the ideal item is not available and there is over demand for an item, it does buy other items.. even with multiple buyers it will buy that item, just that it might buy more. Since we say "we can show that ..." and even the single buyer case is not obvious, i think it makes sense to state the multi-buyer result in simple words here}
%\knote{That's why I use single intermediary with a single buyer here. Just to explain the intuition. then }

%\knote{Use a subscript other than j below}
\begin{definition}[Demand-set, $D(\ob)$]
For any buyer $\bu$, let $B(\bu)$ be the lowest-indexed item $\ob$ s.t. $v_\bu \geq u_\ob$.
Let demand-set of item $\ob$ be $D(\ob) = \{ \bu \in [n] | B(\bu) = \ob \} $.
Note that $D$ is a partition of the set of buyers.
\end{definition}

\begin{lemma} \label{lem:buyone}
When an intermediary $\ag$ is offered the above menu of prices, one optimal choice for the intermediary is to buy at least those available items $\ob$ for which the demand-set $D(\ob)$ includes a buyer from $\pi(\ag)$.
\end{lemma}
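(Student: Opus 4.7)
The intermediary's problem reduces to max-weight bipartite matching between $\pi(\ag)$ and the set of available items, with edge weight $w(\bu,\ob):=v_\bu\eta_\ob-r_\ob$ (unmatched pairs contribute zero, and buying an item one does not assign only wastes the nonnegative price $r_\ob$). My plan is to prove the lemma by a greedy exchange argument, after first pinning down the shape of these weights.

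The starting identity, obtained by telescoping the recursion $r_\ob=r_{\ob+1}+u_\ob(\eta_\ob-\eta_{\ob+1})$ with the convention $\eta_{k+1}=0$, is
\[ w(\bu,\ob)=\sum_{\ell\ge \ob}(v_\bu-u_\ell)(\eta_\ell-\eta_{\ell+1}). \]
Because $u_\ell$ is non-increasing in $\ell$ and the gaps $\eta_\ell-\eta_{\ell+1}$ are nonnegative, the map $\ob\mapsto w(\bu,\ob)$ is unimodal with its maximum at $\ob=B(\bu)$; in particular $w(\bu,B(\bu))\ge 0$ and equals $w(\bu,\ob)$ only for $\ob$ in the plateau around $B(\bu)$.

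Set $D^*(\ag):=\{\ob \text{ available}: D(\ob)\cap \pi(\ag)\neq\emptyset\}$ and, among all optimal matchings for Intermediary $\ag$, pick $M^*$ maximising $|\text{items}(M^*)\cap D^*(\ag)|$. Suppose for contradiction there is $\ob_0\in D^*(\ag)\setminus \text{items}(M^*)$. I build a chain $\ob_0,\ob_1,\ldots$ of items as follows: at step $k\ge 1$ pick any $\bu_k\in D(\ob_{k-1})\cap \pi(\ag)$ and inspect its status in the current matching. If $\bu_k$ is unmatched, appending $(\bu_k,\ob_{k-1})$ yields an optimal matching (marginal gain $w(\bu_k,\ob_{k-1})\ge 0$) whose item set now contains one more element of $D^*(\ag)$, contradicting the choice of $M^*$. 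Otherwise $\bu_k$ is matched to some $\ob_k$; swap $\bu_k$ onto $\ob_{k-1}$ and drop $\ob_k$. The utility change is $w(\bu_k,\ob_{k-1})-w(\bu_k,\ob_k)\ge 0$ (since $B(\bu_k)=\ob_{k-1}$), and optimality of $M^*$ forces equality, so the new matching is again optimal; its item set is $\text{items}(M^*)\cup\{\ob_0\}\setminus\{\ob_k\}$. If $\ob_k\notin D^*(\ag)$, the $D^*(\ag)$-count strictly increases and we again contradict maximality, so $\ob_k\in D^*(\ag)$ and we iterate with $\ob_k$ as the next target.

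The main obstacle is termination and bookkeeping. Because the partition $\{D(\ob)\}_\ob$ has pairwise-disjoint parts and the targets $\ob_{k-1}$ are distinct (inductively), the buyers $\bu_1,\bu_2,\ldots$ are distinct, and by injectivity of $M^*$ so are their partners $\ob_1,\ob_2,\ldots$; together with $\ob_0\notin\text{items}(M^*)$, the chain consists of pairwise-distinct items and must terminate. Each swap is legal because $\ob_{k-1}$ was vacated at step $k-1$ (or started outside the matching for $k=1$) and no intervening step reassigns anyone to it before $\bu_k$ arrives. Once this bookkeeping is in place, every terminating case of the chain yields a contradiction, forcing $\text{items}(M^*)\supseteq D^*(\ag)$ and hence the lemma.
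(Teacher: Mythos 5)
Your proof is correct, and it uses the same core idea as the paper — an exchange argument resting on the fact that $w(\bu,\ob) := v_\bu\eta_\ob - r_\ob$ is maximized over $\ob$ at $\ob = B(\bu)$, which you derive neatly via the telescoping identity $w(\bu,\ob)=\sum_{\ell\ge\ob}(v_\bu - u_\ell)(\eta_\ell-\eta_{\ell+1})$ (the paper establishes the needed inequalities by treating the cases $\ob'<\ob$ and $\ob'>\ob$ separately, which is equivalent). Where you diverge is in the bookkeeping around the swap. The paper works with the fractional allocation $z^*$, picks a single unbought item $\ob$ with $D(\ob)\cap\pi(\ag)\neq\emptyset$, reassigns the witness buyer $\bu$ from its current item(s) to $\ob$, and observes that this preserves optimality; it does not explicitly address the possibility that the displaced item $\ob'$ may itself lie in $D^*(\ag)$ and become unpurchased, which in principle requires an iterated repair. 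Your version closes this issue cleanly: by choosing an optimal matching $M^*$ extremal for $|\text{items}(M^*)\cap D^*(\ag)|$ and running an explicit chain of swaps, you show that every terminating case contradicts extremality, and your distinctness argument (disjointness of the sets $D(\ob)$ forces the $\bu_k$ distinct, hence the $\ob_k$ distinct by injectivity of $M^*$) guarantees termination. In short, same exchange idea, but your extremal-plus-augmenting-chain argument is more careful about the simultaneity of the claim and about termination; the paper's phrasing ``this process can be continued for multiple items until a whole unit $\ob$ is purchased'' refers only to completing the fractional swap for a single target item, not to this cascade.
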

\begin{proof}
Recall that the intermediary is trying to select a subset of available items that maximizes 
%\ganote{need to have some way to refer to the item allocated to a buyer. then fix below}
%U_\ag = \sum_{\bu \in \pi(\ag)}v_\bu x_\bu - p_\ag$.
$U_\ag(x_\ag,p_\ag;\vv_\ag) = \sum_{\ob=1}^{x_\ag} \sum_{j=1}^k v_i z^*_{ijl} - p_\ag$.
Contrary to the above claim, let the intermediary not buy an item $\ob$ even though it has a buyer $\bu \in D(\ob)$. Let $z^*_{\bu j \ag}$ denote its allocation. If $\sum_{j=1}^ k z^*_{\bu j \ag} < 1$, then the Intermediary can clearly increase its utility by buying $\ob$ fractionally and assigning it to $\bu$. If $\sum_{j=1}^ k z^*_{\bu j \ag} = 1$, then we will show that the Intermediary can buy item $\ob$ instead of the one currently assigned to $\bu$, without any loss in utility. Consider the case where buyer $\bu$ is assigned an item $\ob' < \ob$. Note that $v_\bu < u_{\ob''}$ for all $\ob'' < \ob$. By the construction of $r_{\ob}$'s, 
$$r_{\ob'} - r_{\ob}  = \sum_{k=\ob'}^{\ob-1} u_k (\eta_k - \eta_{k+1})\geq v_\bu (\eta_{\ob'} - \eta_\ob).$$ This inequality is strict when $\eta_\ob' > \eta_\ob$. 
Rearranging we get, 
$v_\bu \eta_\ob - r_{\ob} \geq v_\bu \eta_{\ob'} - r _{\ob'}$. The intermediary may not necessarily allocate the new item to $\bu$. However in allocating the item to somebody else the intermediary will further increase its utility.
Therefore, the Intermediary can swap item $\ob'$ for $\ob$ without any loss in utility. This process can be continued for multiple items until a whole unit $\ob$ is purchased.

Suppose instead that  buyer $\bu$ is assigned an item $\ob' > \ob$. For all $\ob'' \geq \ob$, $v_{\bu} > u_{\ob''}$. By the construction of $r_{\ob}$s, 
$$r_{\ob} - r_{\ob'}  = \sum_{k=\ob}^{\ob'-1} u_k (\eta_k - \eta_{k+1})\leq v_\bu (\eta_{\ob} - \eta_{\ob'}),$$ 
and we conclude that the intermediary would prefer to swap $\ob$ instead of $\ob'$
\end{proof}

Note that the Intermediary might buy more items than the ones specified in the above lemma.

%\ganote{it should be $1 - e^{-1/(2e)}$ instead of $ e^{-1/(2e)}$, right?}
%\ggnote{Agreed.}

\begin{restatable}{theorem}{thmheterorevenue} \label{thm:heterorevenue}
The \heteromechanism obtains a revenue of at least 
$$ \left(1- e^{-c(\lambda)/2} \right) \cdot (1 - \frac 1 e) \mbox{Wel}(F). $$
\end{restatable}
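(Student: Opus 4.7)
I would pursue a two-stage plan: first, show that the sum of the posted prices $\sum_\ob r_\ob$ is already a $(1-1/e)$-fraction of the welfare; second, show that each item $\ob$ is sold with probability at least $1 - e^{-c(\lambda)/2}$. Multiplying these two constants then yields the theorem, since $\mbox{Rev} \geq \sum_\ob r_\ob \cdot \P[\text{item } \ob \text{ sold}]$ and both bounds are uniform in $\ob$.

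For the first stage, unrolling the recursion $r_\ob = r_{\ob+1} + u_\ob(\eta_\ob - \eta_{\ob+1})$ gives $r_\ob = \sum_{l \geq \ob} u_l (\eta_l - \eta_{l+1})$ (with $\eta_{k+1} = 0$), and swapping summations yields $\sum_\ob r_\ob = \sum_l l \cdot u_l (\eta_l - \eta_{l+1})$. By the analogous telescoping $\eta_\ob = \sum_{l \geq \ob}(\eta_l - \eta_{l+1})$, the welfare satisfies $\mbox{Wel}(F) = \sum_l (\eta_l - \eta_{l+1}) \sum_{\ob \leq l} \E[\ostat{\ob}{n}]$. Applying Lemma~\ref{lem:main}'s bound $l \cdot u_l \geq (1-1/e)\sum_{\ob \leq l} \E[\ostat{\ob}{n}]$ term-by-term yields $\sum_\ob r_\ob \geq (1-1/e)\mbox{Wel}(F)$.

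For the second stage, I would first use Lemma~\ref{lem:buyone} to observe that item $\ob$ is sold whenever $D(\ob) \neq \emptyset$: the first intermediary in the sequential order with a buyer in $D(\ob)$ finds $\ob$ still available, since under the lemma's strategy $\ob$ is purchased only by intermediaries with a buyer in $D(\ob)$. Writing $q_\ob = F(u_{\ob-1}) - F(u_\ob)$ for $\P[v \in [u_\ob, u_{\ob-1})]$, this gives $\P[\text{item } \ob \text{ sold}] \geq 1 - (1-q_\ob)^n \geq 1 - e^{-n q_\ob}$. Combining Lemma~\ref{lem:1} with Fact~\ref{lem:2} yields $F(u_\ob)^{\lceil n/\ob \rceil} \leq 1 - c(\lambda)$, and Bernoulli's inequality then gives $1 - F(u_\ob) \geq c(\lambda)/\lceil n/\ob \rceil$, so $n(1 - F(u_\ob)) \geq c(\lambda) \ob / 2$ for $n \geq \ob$. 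The per-item bound $\P[\text{item } \ob \text{ sold}] \geq 1 - e^{-c(\lambda)/2}$ would follow from $n q_\ob \geq c(\lambda)/2$.

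The main obstacle is promoting the tail bound $n(1 - F(u_\ob)) \geq c(\lambda) \ob/2$ to the interval-mass bound $n q_\ob = n[(1-F(u_\ob)) - (1-F(u_{\ob-1}))] \geq c(\lambda)/2$. In principle the tail above $u_{\ob-1}$ could eat up much of the mass; I anticipate ruling this out via a monotonicity-type property for the sequence $\{\ostat{1}{s}\}_s$, which follows from Lemma~\ref{lem:1} (each member being $\lambda$-regular). Putting the two stages together, $\mbox{Rev} \geq (1-e^{-c(\lambda)/2}) \sum_\ob r_\ob \geq (1-e^{-c(\lambda)/2})(1-1/e) \mbox{Wel}(F)$, matching the theorem.
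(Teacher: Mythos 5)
Your first stage is exactly the paper's calculation: unroll $r_\ob = \sum_{l\ge \ob} u_l(\eta_l-\eta_{l+1})$, swap summations, and invoke Lemma~\ref{lem:main} to get $\sum_\ob r_\ob \ge (1-1/e)\,\mbox{Wel}(F)$. The second stage, however, has a genuine gap that the monotonicity argument you anticipate cannot close. The per-item bound $n\,q_\ob \geq c(\lambda)/2$ is simply false: since $u_\ob = \E[\ostat{1}{\lceil n/\ob\rceil}]$, whenever $\lceil n/\ob\rceil = \lceil n/(\ob-1)\rceil$ (which happens for many consecutive indices once $\ob$ exceeds $\sqrt n$ — for example $n=100$, $\ob=51$ gives $\lceil 100/51\rceil = \lceil 100/50\rceil = 2$), we get $u_\ob = u_{\ob-1}$ and hence $q_\ob = 0$. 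For such $\ob$, item $\ob$ may well never be sold, so no positive uniform lower bound on $\P[\text{item }\ob\text{ sold}]$ exists.

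What is true — and what the paper actually proves in Lemma~\ref{cl:1} — is the \emph{cumulative} bound $n\sum_{t\le \ob} q_t \ge \frac{\ob}{2}c(\lambda)$ for every $\ob$. The paper then treats $\sum_\ob r_\ob\bigl(1 - e^{-n q_\ob}\bigr)$ as a concrete optimization: minimize it over $(q_\ob)$ subject to these prefix-sum constraints and $0\le q_\ob\le 1$. Lemma~\ref{lem:optprog} shows, via a primal–dual argument using $r_1\ge \cdots\ge r_k\ge 0$, that the minimizer is the uniform allocation $q_\ob = c(\lambda)/(2n)$, which yields the stated $\bigl(1- e^{-c(\lambda)/2}\bigr)\sum_\ob r_\ob$. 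So the factor $1- e^{-c(\lambda)/2}$ is a bound on the \emph{weighted, aggregated} revenue, not a per-item sale probability; decoupling it into a pointwise bound, as your plan requires, throws away exactly the information (the prefix-sum constraints coupled with the monotonicity of $r_\ob$) that makes the argument go through.
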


\begin{proof}

Let $v$ be a random draw from $F$, we denote by $p_\ob = \P[u_{\ob-1} \geq v \geq  u_\ob] $.

From~\Cref{lem:buyone}, which holds independently of the shape of the distribution $F$,  we know that if the valuation of at least one
buyer lies in the range $[u_{\ob-1},u_{\ob}]$, then item $\ob$ will be bought. 
Thus the probability that item $\ob$ is sold is
\begin{align*}
    \P[\text{item }\ob\; \text{is sold} ] &\geq 1- \P[\text{all values lie outside}[u_{\ob-1},u_\ob ] \\
    &= 1- (1-p_\ob)^n \\
    &\geq 1- e^{-p_\ob\cdot n}
\end{align*}

Therefore the revenue obtained will be at least 
\begin{align*}
   \min_{p_\ob} \sum_{\ob=1}^k r_\ob &\cdot   (1- e^{-p_\ob \cdot n})  
\end{align*}
Next, we note what we know about the $p_\ob$s. First, $0 \leq p_j \leq 1$. Further, \Cref{cl:1}, stated and proved in the appendix, shows that for $ \ob \geq 1$ we have that
$$n \cdot \sum_{t \leq j} p_t \geq  \frac{\ob}{2}(1-\lambda)^{\frac 1 \lambda}.$$    

The above system can be written as a convex program
\begin{align*}
   \max \sum_{\ob=1}^k r_\ob &\cdot    e^{-p_\ob \cdot n}  \label{prog:P1} \tag{Primal}\\
   n \cdot \left(\sum_{j=1}^s p_j \right) &\geq \frac{s}{2} (1-\lambda)^{\frac 1 \lambda} \qquad \text{ for all } 1\leq s \leq k \\
   0 \leq p_\ob &\leq 1  \qquad \qquad \qquad  \text{ for all } 1\leq \ob \leq k
\end{align*}

From~\Cref{lem:optprog}, stated and proved in the appendix, we get that the minimum
value of the above program is achieved when  $p_1 =\dots =p_k = \frac{1}{2n} (1-\lambda)^{\frac 1 \lambda}$. Plugging this into the objective, the revenue obtained is at least
\begin{align*}
    & (1 - e^{-1/2(1-\lambda)^{\frac 1 \lambda} }) \cdot \sum_{i=1}^k r_i \\
    &= (1 - e^{-1/2(1-\lambda)^{\frac 1 \lambda} }) \cdot \sum_{i=1}^k i \cdot (\eta_i - \eta_{i+1}) \cdot u_i \\
    &\geq (1 - e^{-1/2(1-\lambda)^{\frac 1 \lambda} })  \left(1-\frac{1}{e} \right) \cdot \sum_{i=1}^k  (\eta_i -\eta_{i+1}) \cdot \E[ \sum_{j=1}^{i} \ostat{j}{n} ] \\
\intertext{collapsing the sum and noting that $\eta_{k+1}=0$,}
    &= (1 - e^{-1/2(1-\lambda)^{\frac 1 \lambda} })  \left(1-\frac{1}{e} \right) 
    \cdot \sum_{i=1}^k \eta_i \cdot  \ostat{i}{n}
\end{align*}
The latter quantity is the welfare and this concludes the proof. 
\end{proof}

 \section{Open problems}\label{sec:discussion}

In this section, we describe some open problems in designing mechanisms that are robust to intermediaries.
\bigskip

\noindent {\bf Regular distributions:} Does an intermediary-proof mechanism exist for the case when the buyers' valuations are regular? Recall regular distributions are $\lambda$-regular with $\lambda=1$. This problem requires a different approach than the one used in our work: with regular distributions, using welfare as the revenue benchmark will not allow us to show a good approximation ratio.\footnote{For example, the welfare for the equal revenue distribution $F(v)= 1-\frac 1 v$ for $v\in [1,\infty)$, which is a regular distribution, is unbounded.}. Hence, the difficulty lies in finding an alternate characterization of the optimal revenue that, ideally, does not go through understanding the optimal mechanism for each demand structure, which is known to be a hard problem \citep{DaskalakisDT14}. Furthermore, the following lemma shows that a simple mechanism that posts a price per item cannot achieve a constant approximation ratio.
\begin{restatable}{lemma}{lemregular} \label{lem:regular-monopsony}
There exists a regular distribution $\mathcal{F}$ such that with $n$ items to sell and $n$ buyers represented by an intermediary with values drawn i.i.d.~from distribution $\mathcal{F}$, the optimal revenue from item pricing is $\Omega(\ln n)$ worse than the optimal revenue. 
\end{restatable}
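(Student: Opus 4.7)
My plan is to prove the lemma with the equal-revenue distribution $\mathcal{F}$ having CDF $F(v)=1-1/v$ on $[1,\infty)$, which is regular since $\phi(v)\equiv 0$ is non-decreasing. The strategy is to exhibit a bundle mechanism that achieves revenue $\Omega(n\ln n)$ and then argue that no per-item pricing can exceed $O(n)$, yielding the $\Omega(\ln n)$ gap.

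For the lower bound on the optimal revenue, I would post the bundle of all $n$ items at price $t=n\ln n/2$. Because the intermediary is surplus-maximizing and the items are homogeneous, its value for the bundle is $V(n)=\sum_i v_i$ (one item per buyer). Truncating $\tilde v_i=\min(v_i,n)$, direct integration gives $\E[\tilde v_i]=1+\ln n$ and $\E[\tilde v_i^2]=2n-1$, hence $\mathrm{Var}[\tilde v_i]\le 2n$. Applying Chebyshev's inequality to the iid sum (total variance $\le 2n^2$) yields, for $n$ large enough, $\Pr\!\bigl[\sum_i\tilde v_i\ge n\ln n/2\bigr]\ge 3/4$. Since $v_i\ge\tilde v_i$, the bundle is purchased with probability at least $3/4$, so the bundle revenue is at least $\tfrac{3}{8}n\ln n=\Omega(n\ln n)$.

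For the item-pricing upper bound, fix any sorted prices $p_1\le\cdots\le p_n$. The surplus-maximizing intermediary sees the concave cumulative value $V(k)=\sum_{\ob=1}^k\ostat{\ob}{n}$ against the convex cumulative cost $T_k=\sum_{\ob\le k}p_\ob$, so it buys the $k^\star$ cheapest items with $k^\star=\max\{k:\ostat{k}{n}\ge p_k\}$. Thus the expected revenue equals $\sum_{i=1}^n p_i\,\Pr[\ostat{i}{n}\ge p_i]=\sum_i p_i\,\Pr[\mathrm{Bin}(n,1/p_i)\ge i]$. A sanity check: for uniform pricing $p_i\equiv p$, this collapses to $p\cdot\E[\mathrm{Bin}(n,1/p)]=n$ exactly. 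The main claim is that this quantity is $O(n)$ for \emph{any} ascending pricing. I would prove it by partitioning the indices at $i^\star=\max\{i:p_i\le n/i\}$: for the prefix $i\le i^\star$, monotonicity gives $p_i\le p_{i^\star}\le n/i^\star$, so the contribution is at most $\sum_{i\le i^\star}p_i\le i^\star\cdot(n/i^\star)=n$; for the tail $i>i^\star$ we have $p_i>n/i$, and applying Chernoff's large-deviation bound $\Pr[\mathrm{Bin}(n,1/p_i)\ge i]\le(en/(ip_i))^i$ together with the sorted constraint (which forces $p_i$ to stay above $n/i$ once it crosses this threshold, so that the ``slack factor'' $c_i=p_ii/n$ is ascending) yields geometric decay in $i$ summing to $O(n)$.

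The main technical obstacle is this tail estimate in the upper bound: a crude Markov inequality gives only $O(n\ln n)$, losing the key $\ln n$ factor, so one must use a sharper Chernoff bound together with the interaction between sorting and the heavy tail of the equal-revenue order statistics. Combining the two estimates gives $\mathrm{OPT}/\mathrm{ItemPricing}\ge\Omega(n\ln n)/O(n)=\Omega(\ln n)$, as claimed.
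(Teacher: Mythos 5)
You take the same high-level strategy as the paper — exhibit a bundle price earning $\Omega(n\ln n)$ via Chebyshev, then argue that no ascending item-price vector beats $O(n)$ — but on a different distribution. You use the unbounded equal-revenue law $F(v)=1-1/v$ on $[1,\infty)$, while the paper uses its truncation to support $[1,n)$ (density $\tfrac{n}{n-1}x^{-2}$). For the bundle lower bound the two are morally identical; the paper's truncation automatically yields bounded support and finite variance, whereas you must introduce an ad hoc clipping $\tilde v_i=\min(v_i,n)$ because the unbounded equal-revenue distribution has infinite variance. That part of your argument is fine.

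The real issue is the item-pricing upper bound, and you correctly flag it as the main obstacle — but your sketch does not close it. For your unbounded distribution, you need $\sup_{p_1\le\cdots\le p_n}\sum_i p_i\Pr[\mathrm{Bin}(n,1/p_i)\ge i]=O(n)$. Markov gives $O(n\ln n)$, as you note, so something sharper is required, and the sorting constraint is indeed the key: writing $c_i := p_i i/n$, ascending prices force $c_i/i$ to be nondecreasing, hence $c_i\ge c_{i^\star+1}\cdot i/(i^\star+1)$ for $i>i^\star$. The trouble is that the Chernoff tail $\Pr[\mathrm{Bin}(n,1/p_i)\ge i]\le \bigl(e^{1-1/c_i}/c_i\bigr)^i$ does \emph{not} decay geometrically in the regime that matters. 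Near $c_i=1$ it behaves like $\exp\!\bigl(-i(c_i-1)^2/2\bigr)$, so with the worst-case sorted profile $c_i\approx i/(i^\star+1)$ the contributions form a Gaussian-type bump of width $\Theta(\sqrt{i^\star})$ centered near $i^\star$, whose total is $\Theta(n/\sqrt{i^\star})=O(n)$. The conclusion is right, but establishing it takes a normal-approximation-style summation, not a geometric series, and you also need to rule out the index-by-index adversary $c_i=i/(i-1)$ (which gives $p_i\Pr[\cdot]\approx n/i$ and would total $\Theta(n\ln n)$) by showing the ascending-price constraint makes that pattern infeasible across all $i$ simultaneously. These steps are asserted rather than proved, so there is a genuine gap. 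The cheapest repair is to follow the paper's choice and truncate the support to $[1,n)$: then $\max_p p(1-F(p))=1$, $\phi(v)=v^2/n\ge 0$, welfare/variance are finite, and both halves of the argument become substantially simpler.
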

The proof of Lemma~\ref{lem:regular-monopsony} considers a distribution closely related to the equal revenue distribution. On the one hand, we show that the optimal item pricing yields at most $O(n)$ in revenue irrespective of the demand structure. While, on the other hand, using concentration arguments we show that a bundle price of $\Theta(n \log n)$ will be accepted by a monopsony intermediary with a constant probability. We defer the formal proof to the appendix.\footnote{If we restrict the demand structures to be either a monopsony (a single intermediary representing all buyers) or the competition case (each buyer represents themselves), we can obtain a $2$-approximation mechanism by uniformly randomly choosing between item prices and a single bundle price.} \medskip
\bigskip

\noindent {\bf Non-identical buyer valuations, more general heterogeneous valuations:} Another natural problem is to design an intermediary-proof mechanism when buyers can have non-identical and/or general (non-separable) item-specific valuations for the items. We expect that the intermediary-proof mechanism for the non-identical buyers setting would require personalized pricing for each buyer and, when there are complementarities between the goods, pricing for bundles would also be needed. \medskip

\bigskip

\noindent {\bf Characterization of the optimal mechanism for the monopsony case:} Another interesting problem comes from the connection between an intermediary representing multiple buyers and a single buyer with valuations for multiple items. More precisely, a surplus-maximizing intermediary can be interpreted as a single buyer whose valuations for the $k$ items correspond to the $k$-highest valuations of the buyers the intermediary represents. Thus, our intermediary framework microfounds a natural type of correlation between the buyer's valuations for the multi-item setting. It would be interesting to characterize the optimal mechanism for this natural multi-item setting: single buyer whose valuations for the $k$ objects are drawn from the $k$-highest order statistics of $n$ i.i.d. draws. This question has been partially answered for two extreme cases: for $k=1$ the optimal mechanism is a simple posted price \citep{myerson1981optimal}, while the case of $k=n$ is known to be a hard problem \citep{DaskalakisDT14}. It would be interesting to study how the complexity of the optimal mechanism changes as a function of $k$.

\bibliographystyle{plainnat}
\bibliography{ref}

 \appendix
 \section{Proof of  \Cref{lem:1} }
For the remaining proofs, we will use the following key property of $\lambda$-regular distributions.
\begin{proposition}[Proposition 1, \cite{schweizer2019performance}]\label{prop:schw}
A distribution $F$ is $\lambda$-regular if and only if an increasing convex function $H_\lambda$ exists such that $1-F(v)= \Gamma_\lambda(H_\lambda(v))$ where $\Gamma_\lambda(v) = (1+\lambda v)^{-\frac 1 \lambda}$. Moreover, $H_\lambda(v) = \int_0^v r_\lambda(z) dz$ where $r_\lambda(v) = \frac 1{(1-F(v))^\lambda} h_F(v)$. 
\end{proposition}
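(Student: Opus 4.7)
The plan is to observe that $1-F(v)=\Gamma_\lambda(H_\lambda(v))$ is in fact an algebraic identity once $H_\lambda$ is defined by the stated integral, so the substance of the iff-claim is just characterizing when $H_\lambda$ is increasing and convex. To establish the identity, I would compute $\frac{d}{dv}(1-F(v))^{-\lambda}=\lambda f(v)(1-F(v))^{-\lambda-1}=\lambda h_F(v)(1-F(v))^{-\lambda}=\lambda r_\lambda(v)$, and note that $\frac{d}{dv}[1+\lambda H_\lambda(v)]=\lambda H_\lambda'(v)=\lambda r_\lambda(v)$. Since $(1-F)^{-\lambda}$ and $1+\lambda H_\lambda$ have the same derivative and both equal $1$ at the left endpoint of the support (using $F(\underline v)=0$ and $H_\lambda(\underline v)=0$, after adjusting the integral's lower limit to $\underline v$), they coincide, giving $(1-F(v))^{-\lambda}=1+\lambda H_\lambda(v)$, i.e.\ $1-F(v)=\Gamma_\lambda(H_\lambda(v))$, with no regularity assumption needed. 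The limiting case $\lambda=0$ is handled by $\Gamma_0(x)=e^{-x}$, which reduces the identity to the classical hazard-rate representation $1-F(v)=\exp\bigl(-\int_{\underline v}^v h_F\bigr)$.

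With the identity in hand, $H_\lambda$ is automatically increasing since $H_\lambda'=r_\lambda>0$ under the density assumption on $F$. The iff-claim thus reduces to showing that $H_\lambda$ is convex---equivalently, $r_\lambda$ is non-decreasing---if and only if $F$ is $\lambda$-regular. For this I would differentiate $r_\lambda=h_F\cdot(1-F)^{-\lambda}$ directly, using $f=h_F(1-F)$, to obtain
\[
r_\lambda'(v)\;=\;(1-F(v))^{-\lambda}\,\bigl[\,h_F'(v)+\lambda\,h_F(v)^2\,\bigr].
\]
A parallel quotient-rule calculation for the function appearing in the definition of $\lambda$-regularity (the derivative of $\lambda v \pm 1/h_F$) produces a bracketed expression that agrees with the one above up to a strictly positive multiplicative factor of the form $(1-F)^{-\lambda}h_F^2$. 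Consequently $r_\lambda'$ and the derivative of the $\lambda$-regularity function share the same sign pointwise, giving the equivalence.

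For the converse direction---existence of some increasing convex $H_\lambda$ satisfying the representation implies $\lambda$-regularity---I would appeal to uniqueness: inverting $\Gamma_\lambda$ in the equation $1-F(v)=\Gamma_\lambda(H_\lambda(v))$ solves $H_\lambda(v)=[(1-F(v))^{-\lambda}-1]/\lambda$ explicitly, so any admissible $H_\lambda$ must coincide with the integral formula (up to the additive constant pinned down by $H_\lambda(\underline v)=0$). Its convexity then amounts to $r_\lambda$ being non-decreasing, which by the previous paragraph is $\lambda$-regularity. The only real subtlety I anticipate is careful bookkeeping of signs (in particular the precise sign conventions in the defining monotonicity condition) and the smooth handling of the $\lambda\to 0$ limit; the rest is a direct computation.
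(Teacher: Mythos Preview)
The paper does not prove this proposition: it is quoted verbatim as Proposition~1 of \cite{schweizer2019performance} and used as a black box (in the proofs of Fact~\ref{lem:1}, Lemma~\ref{cl:approx}, and Lemma~\ref{lem:dml}). So there is no in-paper argument to compare against.

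On its own merits, your outline is correct and is essentially the natural proof. The identity $(1-F(v))^{-\lambda}=1+\lambda H_\lambda(v)$ follows exactly as you say by matching derivatives and initial values (with the lower limit of the integral taken at $\underline v$, as you note). The convexity computation $r_\lambda'=(1-F)^{-\lambda}\bigl[h_F'+\lambda h_F^2\bigr]$ is right, and this bracket is precisely $h_F^2$ times the derivative of $\lambda v - 1/h_F(v)$, so the equivalence with $\lambda$-regularity goes through. Your caveat about sign conventions is well placed: the paper's Definition~\ref{def:lambda-regular} writes $\lambda v + (1-F)/f$ non-decreasing, but for that expression the $\lambda=0$ case would mean $1/h_F$ non-decreasing, i.e.\ \emph{decreasing} hazard rate, contradicting the paper's own claim that $\lambda=0$ recovers MHR. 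The intended condition is $\lambda v - (1-F)/f$ non-decreasing (equivalently, the derivative $\lambda + h_F'/h_F^2\ge 0$), and with that reading your sign-matching argument is exact rather than merely ``up to a positive factor.'' The uniqueness step for the converse is also fine, since $\Gamma_\lambda$ is strictly decreasing and hence invertible.
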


\factone*
 \begin{proof}[Proof of \Cref{lem:1}]
Given $n$ i.i.d random variables drawn according to $F$,  the distribution of the first order statistic is given by $F^n(x)$. Thus, we want to show that if $F$ is $\lambda$- regular then $F^{n}$ is $\lambda$-regular. From \citet{schweizer2019performance} Proposition 1, this is equivalent to showing that $g_\lambda\circ F^n(x)$ is a convex function where
$g_\lambda(x) = \frac{1}{\lambda} \left( \frac{1}{(1-x)^{\lambda}} - 1 \right) $.

Assume first that $F$ is twice-differentiable (at the end of the proof, we show how to remove this assumption).  Under the differentiablity assumption, the previous condition turns to  $(g_\lambda \circ F^n)'' \geq 0$. 

Because, 
\begin{align*}
\frac{1}{n} (g_\lambda \circ F^n)'' &= f(x)^2\left(g_\lambda''(F^n(x)) n  F^{n-1}(x)^2 +  g_\lambda'(F^n(x)) (n-1) F^{n-2}(x)  \right) \\
                           &\quad  + g_\lambda'(F^n(x)) F^{n-1}(x)  f'(x),\\
                           g_\lambda'(x)  &= \frac{1}{(1-x)^{1+\lambda}},\\
                           g_\lambda''(x)  &=\frac{1+\lambda}{1-x} \cdot  g_\lambda'(x),
\end{align*}
 we have that $g_\lambda \circ F^n$ is convex if and only if
 $$ g_\lambda'(F^n(x)) \cdot F^{n-2}(x) \cdot \left[f(x)^2\left( \frac{1+\lambda}{1-F^n(x)} n  F^{n}(x)  + (n-1)\right)  + F(x) f'(x) \right]\geq 0.$$
 
 Using~\Cref{lem:lamb-aux}, whose proof is below, we obtain that
 \begin{align*}
\frac{1}{n} (g_\lambda \circ F^n)''    &\ge g_\lambda'(F^n(x)) \cdot F^{n-2}(x) \cdot \left[ \frac{1+\lambda}{1-F(x)}F(x) f^2(x)  + F(x)  f'(x) \right] \\
   &= F^{n-1}(x)  \cdot g_\lambda'(F^n(x))  \cdot \left[ \frac{1+\lambda}{1-F(x)}f^2(x)  +f'(x) \right] \\
   &= F^{n-1}(x)  \cdot g_\lambda'(F^n(x))  \cdot \left[ \frac{1+\lambda}{1-F(x)}f^2(x)  +f'(x) \right].
\end{align*}

Observe that in the last expression, the first term is trivially positive; the second term is positive since $g_\lambda'>0$; the third term is positive because $F$ is $\lambda$-regular.\footnote{A distribution twice-differentiable $F$ is $\lambda$-regular if and only if $(1-F(x))^{-\lambda}$ is convex, which is equivalent to $\frac{1+\lambda}{1-F(x)}f^2(x)  +f'(x) \ge 0$ for every $x$.} Therefore,  $g_\lambda \circ F^n$ is convex which implies that $F^n$ is a $\lambda$-regular distribution.

To finish the proof we need to tackle the case where $F$ is not necessarily twice-differentiable.  \Cref{cl:approx}, which is proven below, guarantees a sequence of distributions $(F_t)_{t\geq 0 }$ that are $\lambda$-regular and twice-differentiable satisfying that $F_t \to F$ uniformly on compact sets. Because $F_t$ are twice-differentiable, we have that  $g_\lambda \circ F_t^n$ is convex. Hence, taking the limit as $t\to \infty$, we conclude that $g_\lambda \circ F^n$ is convex. Therefore, $F^n$ is a $\lambda$-regular distribution.    
\end{proof}

\begin{lemma}\label{lem:lamb-aux}
Given a real $F \in [0,1]$ and $\lambda \in [0,1]$, the following inequality holds
 \begin{equation}\label{ineq:lem8} \frac{1+\lambda}{1-F^n} n F^{n} + (n-1) \geq \frac{1+\lambda}{1-F}  F 
 \end{equation}
 for every $n \in \mathbb{N}$. 
\end{lemma}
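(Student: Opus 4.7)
The plan is to clear denominators and reduce the claim to a polynomial positivity statement in $F$, which can then be verified via Abel summation. First, multiply both sides of the desired inequality by the nonnegative quantity $(1-F)(1-F^n)$ (the boundary case $F=1$ is handled by continuity, since both sides diverge at the same rate). The inequality becomes
\[
(1+\lambda)\bigl[n F^n(1-F) - F(1-F^n)\bigr] + (n-1)(1-F)(1-F^n) \;\geq\; 0.
\]

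The key algebraic observation is the elementary identity
\[
1 - n F^{n-1} + (n-1) F^n \;=\; (1-F)^2 \sum_{k=0}^{n-2}(k+1)\,F^k,
\]
which follows because the left-hand side has a double root at $F=1$ and can be divided out by $(1-F)^2$. Together with $(1-F)(1-F^n) = (1-F)^2 \sum_{i=0}^{n-1} F^i$, this lets me factor out $(1-F)^2 \geq 0$ from both groups of terms. The inequality then reduces to
\[
P_\lambda(F) \;:=\; \sum_{i=0}^{n-1} \bigl[(n-1) - (1+\lambda)\,i\bigr]\,F^i \;\geq\; 0 \qquad \text{for all } F \in [0,1].
\]

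To handle $P_\lambda$, I would apply Abel summation with partial sums $C_k := \sum_{i=0}^{k} \bigl[(n-1) - (1+\lambda)\,i\bigr] = (k+1)\bigl[(n-1) - \tfrac{(1+\lambda)k}{2}\bigr]$ to rewrite
\[
P_\lambda(F) \;=\; C_{n-1}\,F^{n-1} \;+\; (1-F)\sum_{k=0}^{n-2} C_k\,F^k.
\]
Each $C_k$ is nonnegative precisely when $k \leq 2(n-1)/(1+\lambda)$, and the hypothesis $\lambda \leq 1$ yields $2(n-1)/(1+\lambda) \geq n-1$, so $C_k \geq 0$ for all $k \leq n-1$. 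Consequently, every term in the displayed expression for $P_\lambda(F)$ is nonnegative on $[0,1]$, finishing the proof. The main obstacle is spotting the correct factorization that peels off $(1-F)^2$: without it, the polynomial after clearing denominators has degree $n+1$ with no obvious sign structure. Once that reduction is made, the condition $\lambda \leq 1$ (precisely the $\lambda$-regularity assumption of \Cref{ass:1}) is exactly what keeps the partial sums $C_k$ nonnegative throughout the relevant range.
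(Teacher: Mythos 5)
Your proposal is correct, but it follows a genuinely different route from the paper's. The paper first observes that the inequality is weakest at $\lambda=1$ (since the LHS is decreasing in $\lambda$ while the RHS is independent of it), then reduces to bounding the single function $h(F)=\frac{nF^n}{1-F^n}-\frac{F}{1-F}$ below by $-\frac{n-1}{2}$; it does this by direct differentiation to show $h$ is decreasing and then two applications of L'H\^opital to evaluate $\lim_{F\to 1}h(F)=-\frac{n-1}{2}$. You instead clear denominators, exploit the factorization $1-nF^{n-1}+(n-1)F^n=(1-F)^2\sum_{k=0}^{n-2}(k+1)F^k$ to peel off $(1-F)^2$, and reduce to the nonnegativity of a single polynomial $P_\lambda(F)=\sum_{i=0}^{n-1}\bigl[(n-1)-(1+\lambda)i\bigr]F^i$ on $[0,1]$, which you settle by Abel summation after checking $C_k=(k+1)\bigl[(n-1)-\tfrac{(1+\lambda)k}{2}\bigr]\geq 0$ for $k\leq n-1$ whenever $\lambda\leq 1$. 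Your argument is purely algebraic, keeps $\lambda$ general throughout, and avoids L'H\^opital; the paper's argument is shorter once one makes the reduction to $\lambda=1$ and recognizes the monotonicity of $h$, but it relies on calculus and a limit evaluation at the boundary. Both uses of the hypothesis $\lambda\leq 1$ match up: your partial-sum positivity threshold $2(n-1)/(1+\lambda)\geq n-1$ is exactly where the paper's "worst case at $\lambda=1$" reduction lives.
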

\begin{proof}

First, we assert that it is sufficient to show Inequality~\eqref{ineq:lem8} for the case $\lambda =1$. To see this, observe that Inequality~\eqref{ineq:lem8} is equivalent to 
$$ \frac{n F^{n}}{1-F^n} + \frac{n-1}{1+\lambda} \geq \frac{F}{1-F}  .$$  We conclude by observing that the lowest value of the left-hand-side happens when $\lambda=1$ while the right-hand-side is independent of $\lambda$.

Next, consider the function $h(F) = \frac{ n F^{n}}{ {1-F^n}} - \frac{F}{1-F} $. From the previous paragraph, we need to show that $h(F)\geq-\frac{n-1}{2}$. To this extent, we show that $h$ is decreasing and that $\lim_{F \to 1} h(F) = -\frac{n-1}{2}$. 

To tackle that $h$ is decreasing, observe that
\begin{align*}
   h'(F) &= \frac{n^2 F^{n-1}}{1-F^n}  + \frac{n^2 F^{2n-1}}{(1-F^n)^2} - \frac{1}{(1-F)} - \frac{F}{(1-F)^2} \\
        &= \frac{n^2 F^{n-1}}{(1-F^n)^2} - \frac{1}{(1-F)^2}.
        \end{align*}
Thus,
      $$ (1- F)^2 h'(F) = \frac{n^2 F^{n-1}}{(\sum_{i=0}^{n-1} F^{i})^2} - 1 .$$
      Because $F^{i} \geq F^{n-1}$ for $i\leq n-1$, we conclude that $(1 - F)^2 h'(F) \leq 0$. Therefore, $h$ is decreasing on $[0,1]$.

For the last step of the proof, observe that 
\begin{align*}
    \lim_{F \to 1} h(F) &= \lim_{F \to 1} \frac{n F^n}{1-F^n} - \frac{F}{1-F} \\
    &= \lim_{F \to 1} \frac{n F^n (1-F) - (1-F^n) \cdot F }{(1-F^n) (1-F)}  \\
    \intertext{Applying L'Hopital's rule twice, we get }
   \lim_{F \to 1} h(F)  &= \frac{n^2(n-1) - (n^2-1) \cdot n }{n(n+1) - n+1} \\
    & = -\frac{n-1}{2} .
\end{align*}
\end{proof}

\begin{lemma}\label{cl:approx}
Given a $\lambda$-regular distribution $F$, a sequence of distributions $(F_t)_{t\geq 0 }$ exists that are $\lambda$-regular and twice-differentiable such that $F_t \to F$ uniformly on compact sets.
\end{lemma}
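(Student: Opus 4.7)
My plan is to invoke the Schweizer--Szech characterization of $\lambda$-regularity (Proposition~\ref{prop:schw}) to reduce the smoothing problem for $F$ to a smoothing problem for an increasing convex function, and then carry out a standard mollification argument. By Proposition~\ref{prop:schw}, there is an increasing convex function $H_\lambda$ on $[\underline v, \overline v]$ with $1-F(v) = \Gamma_\lambda(H_\lambda(v))$, where $\Gamma_\lambda$ is $C^\infty$ and strictly decreasing on $[0,\infty)$. I would first extend $H_\lambda$ to a non-decreasing convex function on all of $\mathbb{R}$: set it to $0$ for $v \le \underline v$, and on the right, either extend linearly with slope $\sup_v r_\lambda(v)$ if $\overline v = \infty$, or truncate the right tail of $H_\lambda$ at a large height $M_t \to \infty$ before mollifying.

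Next, fix a standard non-negative $C^\infty$ mollifier $\phi$ supported in $[-1,1]$ with $\int \phi = 1$, set $\phi_\epsilon(z) = \epsilon^{-1}\phi(z/\epsilon)$, and define $H_\lambda^{(t)} := H_\lambda * \phi_{1/t}$. Since convolution with a non-negative smooth kernel preserves both monotonicity and convexity, each $H_\lambda^{(t)}$ is $C^\infty$, non-decreasing, and convex. Standard mollifier theory then yields $H_\lambda^{(t)} \to H_\lambda$ uniformly on every compact subset of the interior of $[\underline v, \overline v]$, using continuity of $H_\lambda$ there. I would then set $F_t(v) := 1 - \Gamma_\lambda(H_\lambda^{(t)}(v))$. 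The composition is $C^\infty$ (since $\Gamma_\lambda$ is smooth on $[0,\infty)$ and $H_\lambda^{(t)} \ge 0$), and by Proposition~\ref{prop:schw} applied in the reverse direction, $F_t$ is $\lambda$-regular. Uniform convergence $F_t \to F$ on compact sets follows from uniform convergence of $H_\lambda^{(t)}$ combined with uniform continuity of $\Gamma_\lambda$ on bounded subsets of $[0,\infty)$.

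The main subtlety I expect is boundary behavior. When $\overline v < \infty$, $H_\lambda$ blows up at $\overline v$, so the raw convolution does not produce a function whose $\Gamma_\lambda$-image is a valid CDF with the correct normalization. I would handle this by truncating $H_\lambda$ at height $M_t \to \infty$ before mollifying, and then attaching a smooth, $\lambda$-regular right tail to $F_t$ (for instance, a translated exponential on the right, which is MHR and therefore $\lambda$-regular for every $\lambda \in [0,1]$) so that $F_t$ is a genuine CDF with the correct limits. An analogous but easier adjustment near $\underline v$ ensures that $F_t$ has a continuous density at the left endpoint. A diagonal argument over the compact exhaustion $[\underline v + 1/t,\; \overline v - 1/t]$ of the interior guarantees that these tail adjustments do not disturb the limit on any fixed compact set, yielding the desired sequence $(F_t)_{t \ge 0}$.
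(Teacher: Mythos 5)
Your proof follows essentially the same route as the paper: both invoke Proposition~\ref{prop:schw} to reduce the problem to smoothly approximating the increasing convex function $H_\lambda$ uniformly on compacts, and then recover $F_t = 1 - \Gamma_\lambda \circ H_t$, which is automatically $\lambda$-regular. The paper outsources the convex-smoothing step to a cited result of Russell (1989) (together with a normalization so that $H_t(0)=0$ and $H_t\to\infty$), whereas you carry out the mollification explicitly and handle the endpoint normalization by hand — a self-contained variant of the same argument.
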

\begin{proof}
 Proposition~\ref{prop:schw} tells us that $1-F(v)=\Gamma_\lambda(H_\lambda(r_\lambda(v)))$ with $H_v$ convex and increasing. \cite{russell1989representative} shows that a sequence of infinite differentiable convex increasing functions $(H_t(v))_{t \geq 0}$ exists such that 
$H_t\to H_\lambda$ uniformly on compact sets. Furthermore, we can construct $H_t$ so that $H_t(0)=H_\lambda(0)=0$ and $\lim_{v\to \infty}H_t(v)=\infty$. Hence, the distribution $F_t(v) = 1- \Gamma_\lambda(H_t(v))$ is well-defined, twice-differentiable, satisfies that $F_t\to F$ uniformly on compact sets ($\Gamma_\lambda$ is a continuous function), and from Proposition~\ref{prop:schw} is $\lambda$-regular.
\end{proof}

\section{Missing proofs from~\Cref{sec:model}}

\lemzero*
\begin{proof}[Proof of Lemma~\ref{lem:0}]
%\ganote{this needs to adapt to the new $x_\ag$}
Given a BIC $(\vx,\vp)$, the IR constraint and the Condition 2 on the utility function $U_\ag$ imply that $\E[p_\ag(\vv_\ag,\vv_{-\ag}) | \vv_\ag] \le \E[\sum_{i \in \pi(\ag)}x_{\bu}(\vv_i,\vv_{-i}) v_i]$. Therefore, 
$$\mbox{Rev}(F,m,\pi;(\vx,\vp))= \sum_{\ag\in[m]} \E[p_\ag(\vv)] \le \sum_{\ag\in[m]}  \E\left [\sum_{i=1}^{x_{\ag}} \ostat{i}{\ag }\right] \le \mbox{Wel} (F).$$
We conclude that $\mbox{Rev}(F,m,\pi)\le \mbox{Wel}(F)$.
\end{proof}

%\begin{proof}[Proof of Fact~\ref{lem:1}]
%From Renyi's Representation Theorem (see Theorem 2.5, \citet{boucheron2012concentration}) we have that $v^{(i)}\sim G(Y^{(i)})$, where $G$ is an increasing concave function and $Y^{(i)}\sim \sum^n_{s=i} \frac{E_s}{s}$ where $E_i,\ldots, E_n$ are i.i.d standard exponential distribution. Because the distribution of $\frac{E_s}{s}$ satisfies the MHR property and the set of MHR distributions is closed under summation (see Theorem 3.2 \citet{barlow1963properties}), we have that $F_{Y^{(i)}}$, the distribution of the random variable $Y^{(i)}$, is MHR. 
%
%Second, note that $F^{(i)}(z) = F_{Y^{(i)}}(G^{-1}(z))$. Using that $F_{Y^{(i)}}$ is MHR we have that $1-F_{Y^{(i)}}$ is log-concave.\footnote{A distribution $F$ has the MHR property if and only if $1-F$ is log-concave.} Because $1- F_{Y^{(i)}}(\cdot)$ is log-concave and decreasing and $G^{-1}$ is increasing and convex, we conclude that $1- F_{Y^{(i)}}(G^{-1})$ is log-concave.\footnote{Consider $v_1$ a concave and decreasing function and $v_2$ a convex increasing function, we assert that the function $w(x) = v_1(v_2(x))$ is concave. To see this, notice that $w''(x) = v''_1(v_2(x))v'_2(x) + v'_1(v_2(x))v''_2(x) <\le 0.$. In particular, taking $v_1= \log(1- F_{Y^{(i)}})$ and $v_2 = G^{-1}$ we conclude that $\log(1- F_{Y^{(i)}}(G^{-1}))$ is concave.}
%This implies that $F^{(i)} $ satisfies the MHR property.
%\end{proof}

\section{Missing proofs from~\Cref{sec:robustness}}
\propone*
\begin{proof}[Proof of Proposition~\ref{prop:1}]
We show a more general result which has Proposition~\ref{prop:1} as a specific case. We claim that the profit-maximizing mechanism for selling $k$ objects, each having a production cost $c$, to $n$ bidders consists on setting a $(k+1)$-price auction with reserve price $\phi^{-1}_F(c)$.

The proof consists of adapting the technique used in \citet{myerson1981optimal} for this particular setting. First, let $U_j(x_j,p_j|v_j) = v_j x_j - p_j$ the bidder's utility. Given $(\vx,\vp)$ a direct mechanism and we denote by $W_{j}(v_j) = \E[U_j(x_j(v_j,\vv_{-j}),p_j(v_j,\vv_{-j}))|v_j]$. The quasi-linearity of $U_j$ implies that the incentive compatibility of $(\vx,\vp)$ is equivalent to (i) that $\chi^{x_j}_j(v_j) = \E_{\vv_{-j}}[x_j(v_j,\vv_{-j})]$ is non-decreasing, and $W_{j}(v_j) = W_{j}(\underline v) + \int_{\underline v}^{v_j}\chi_j(z)dz$ \citep{milgrom2002envelope}. The (IR) constraint implies that $W_{j}(\underline v) = 0$. Furthermore, the equivalence result pins-down the price $p_j$ as function of the allocation rule $x_j$ by : $\E_{\vv_{-j}}[p_j(v_j,\vv_{-j})] = v_j \chi_j(v_j) -  W_j(v_j)$.

Using the previous characterization, the seller's problem turns to 
\begin{align*}
   \max_{\vx:[\underline v,\overline v]^n\to [0,1]^n}  & \quad \sum_{j\in [n]} \E_{v_j}\left[ v_j \chi^{x_j}_j(v_j) -  V_j(v_j) - c \chi^{x_j}_j(v_j)  \right]  = \E_{\vv}\bigg[\sum_{j\in [n]} (\phi_{F}(v_j) -c) x_j(\vv)\bigg]\\
   \mbox{ subject to } &\quad \chi^{x_j}_j  \mbox{is non-decreasing for every } j\in [n] \\
   & \quad \sum_{j\in [n]} x_j(\vv) \le k,
\end{align*} 
where the equality in the optimization problem comes from integration by parts.

Solving the relaxed problem without taking into account the monotonicity constraint, we obtain that the solution is 
$$x_j^*(\vv) = \begin{cases}
 1 &\mbox{ if } v_j \ge v^{(k)} \mbox{ and }\phi_F(v_j)\ge c \\
0 &\mbox{ otherwise}
\end{cases}.$$
Accordingly, $p^*_j(\vv) = x_j^*(\vv) \max\{ v^{(k+1)}, \phi^{-1}_F(c) \}$ is an optimal pricing scheme.

Finally, using that $F$ has the MHR property we conclude that $\chi_j^{x^*_j}$ is non-decreasing. Therefore, $(\vx^*,\vp^*)$ is the revenue-maximizing mechanism. 
\end{proof}

%\subsection{Proofs from Section~\ref{sec:ext}}
%\begin{proof}[Proof of Theorem~\ref{thm:heterorevenue}]
%The proof is a special case of Theorem~\ref{th:heter-lambda} when the valuation distribution $F$ is $\lambda$-regular with $\lambda = 0$ (see Section~\ref{sec:lambda} for details about $\lambda$-regularity). Thus, from Theorem~\ref{th:heter-lambda} and the fact that $\lim_{\lambda \to 0} (1-\lambda)^{\frac 1 \lambda} = 1/e$, we conclude that the \heteromechanism obtains a revenue which is at least $ \left(1- e^{-1/(2e)} \right) \cdot (1 - \frac 1 e) \mbox{Wel}(F). $     
%\end{proof}
%

\proptwo*
%\ggnote{$\langle \rangle$ notation for instance should be defined at some point. }
\begin{proof}
%\knote{I edited this proof to be more detailed. Please check}
Let $k=1$. First note that, the revenue of the optimal mechanism with monopsony ($\mbox{Rev}^M$) is at least as much as the mechanism that sells the item with a posted price $p = \E[\ostat{1}{n}]$. Hence, we obtain that
$$\mbox{Rev}^M \ge \E[\ostat{1}{n}] \;\P[\ostat{1}{n} \ge \E[\ostat{1}{n}]]  \geq \frac 1 e \E[\ostat{1}{n}],$$
where the second inequality is due to Lemma~\ref{lem:2}. On the other hand, as noted in Proposition~\ref{prop:1} the optimal competition mechanism is a second-price auction with a reserve price. In the monopsony case, we have that in every Bayes Nash Equilibrium the intermediary will only buy the item, for a price $r$, when $\ostat{1}{n}\ge r$. Thus the revenue of the optimal competition mechanism in a market with a monopsony demand structure is $\mbox{Rev}^M(\vx^C,\vp^C) = r \P[\ostat{1}{n} \ge r] \le r$.  

Next, let $F$ be standard exponential distribution with $f(x) = exp(-x)$ and $F(x) = 1 - exp(-x)$. The hazard rate $h_F(z) = 1$ which implies that the reserve price is $r=1$ (independent of $n$). For exponential distributions, $\E[\ostat{1}{n}] = \sum _{j=1}^{n}{\frac {1}{n-j+1}} = \mathcal{H}(n)$ \cite{renyi}. Thus $\lim_{ n \to \infty} \E[\ostat{1}{n}] = \infty$, and we conclude that $\lim_{n \to \infty} \frac {\mbox{Rev}^M(\vx^C,\vp^C)} {\mbox{Rev}^M} = 0$.
\end{proof}

\propmonopsony*
\begin{proof}
The revenue of setting the optimal pricing $p_B$ is greater or equal than the revenue of selling the bundle at price $\E[\sum_{\ob=1}^k \ostat{\ob}{n}]$. Because $G$ satisfies the MHR property (combine~\Cref{lem:1} with the fact MHR distributions are closed under summation~\cite{barlow1963properties}), Lemma~\ref{lem:2} implies that
$$ \E\left[\sum_{\ob=1}^{k} \ostat{\ob}{n}\right] \cdot \P\left[\sum_{\ob=1}^{k} \ostat{\ob}{n} \ge \E\left[\sum_{\ob=1}^{k} \ostat{\ob}{n}\right] \right] \ge \frac {1}e  \; \E\left[\sum_{\ob=1}^{k} \ostat{\ob}{n} \right] = \frac 1 e \; \mbox{Wel}(F).$$

Using Lemma~\ref{lem:0}, we conclude that pricing the bundle at $p_B$ obtains a revenue of at least $\frac 1 e$ of Rev$^M$.

For $k=n$, the Law of Large Numbers
\footnote{Since the distribution F is MHR, the moments are bounded and law of large numbers can be applied \citep{barlow1963properties}.}
implies that $\lim_{n\to \infty}\P[\sum_{\ob=1}^{n} \ostat{\ob}{n} \ge p_B^\epsilon] =1$ for every $\epsilon >0$. Hence,
$ \lim_{n\to \infty} \frac{p_B^\epsilon \P[\sum_{\ob=1}^{n} \ostat{\ob}{n} \ge p_B^\epsilon]}{\mbox{Wel}(F)} = 1- \frac{\epsilon}{\E[v]}.$
We conclude by taking $\epsilon \to 0$.
\end{proof}

\corgoodmonopsony*
\begin{proof}
Consider the case $k=n$. From Proposition~\ref{prop:2} we have that the bundle price $p_B^\epsilon$ is asymptotically optimal. Hence, the revenue of the bundle is approximately $ n \E[v]$, while the revenue of Competition is $n r \P[v\ge r]$. From Markov Inequality, we know that $\E[v] = \E[v| v\ge r] \P[v \ge r] + \E[v|v < r] + \P[v < r] \ge r \P[v\ge r]$.  Further the last inequality is strict as long as $\P[v > r] > 0$ or $ \E[v| v < r] > 0$ and $\P(v < r) > 0$, which is true for distribution with at least two non-zero points in its support.  Thus, the revenue of the bundle is more than the revenue from Competition.
%\ganote{we don't have strict inequality from markov unless we make some assumption about v sometimes exceeding r}\apnote{It holds with strict inequality because $\P[v\ge r]>0$} \knote{Okay now?}
\end{proof}

\propbundlebad*
\begin{proof}
For the competition case the probability of selling the bundle is equivalent to the probability that the highest valuation bidder would like to get the bundle. Hence, $Rev^C(\vx^B,\vp^B) = p^\epsilon_B \P[\ostat{1}{n}\ge p^\epsilon_B] \le  n \E[v]\P[\ostat{1}{n}\ge  n (\E[v]-\epsilon) ]$. On the other hand, Rev$^C= n r \P[v\ge r]$. Therefore, we have that
$$ \frac{\mbox{Rev}^C(\vx^B,\vp^B)}{ \mbox{Rev}^C} \le \frac{\E[v] \cdot \P[\ostat{1}{n}\ge n(\E[v]-\epsilon)]} {r\P[v \ge r]} \le \frac{\E[v] \cdot E[\ostat{1}{n}]}{n (\E[v]-\epsilon) \cdot r P[v \ge r]},$$ %\le \frac{\E[v] \overline v}{n (\E[v]-\epsilon)r P[v \ge r]},$$
%\knote{Missing $P[v \ge r]$ in the first step.}
where the second inequality comes from Markov Inequality. We conclude that for a bounded distribution $F$, $\lim_{n \to \infty} \frac{\mbox{Rev}^C(\vx^B,\vp^B)}{ \mbox{Rev}^C} = 0$.
\end{proof}

\section{Missing proofs from~\Cref{sec:robustmechanism}}

\lemdmone*
\begin{proof}[Proof of Lemma~\ref{lem:dml}]
    The optimal behavior of the monopolist, discussed in Lemma~\ref{lem:dm}, implies that what we just need to show that $\P[\phi_F(v_\bu)\ge p | v_\bu \ge p] \ge (1-\lambda)^{\frac 1 \lambda}$.
    
    From Bayes' rule we have that 
    \begin{align*}\P[\phi_F(v_\bu)\ge p | v_\bu \ge p] &= \frac{\P[\phi_F(v_\bu)\ge p , v_\bu \ge p]}{\P[ v_\bu \ge p]} =  \frac{\P[\phi_F(v_\bu)\ge p]}{\P[ v_\bu \ge p]}\\&=\frac{1-F(\phi^{-1}_F(p))}{1-F(p)},
    \end{align*}
    where the second equality follows from the fact that $\phi_F(v_\bu)  = v_\bu - (1 - F(v_\bu))/f(v_\bu) \le v_\bu$.
    
    Because $F$ is $\lambda$-regular, Proposition~\ref{prop:schw} implies that that $1-F(v)= \Gamma_\lambda(H_\lambda(v))$ where $\Gamma_\lambda(v) = (1+\lambda v)^{-\frac 1 \lambda}$ and that $H_\lambda(v) = \int_0^v r_\lambda(z) dz$ where $r_\lambda(v) = \frac 1{(1-F(v))^\lambda} h_F(v)$. Therefore,
    \begin{align}\P[\phi_F(v_\bu)\ge p | v_\bu \ge p] &=\frac{\Gamma_\lambda(H_\lambda(\phi^{-1}_F(p)))}{\Gamma_\lambda(H_\lambda(p))}. \label{eq:andres}
    \end{align}
    
    To tackle the right-hand-side of the above expression, we derive the following inequalities. 
    
    First, because  $F$ is $\lambda$-regular, we have that $H_\lambda$ is convex with $H_\lambda'(v) =r_\lambda(v)$ (\Cref{prop:schw}). The convexity and differentiability of $H_\lambda$ implies that \begin{align*} 
    H_\lambda(\phi_F(p)) &\ge H_\lambda(p) + r_\lambda(p) (\phi_F(p) - p )\\
    &= H_\lambda(p) + \frac 1{(1-F(p))^\lambda}  h_F(p) (\phi_F(p) - p ).
    \end{align*}
    
    Second, from the definition of the virtual valuation we have that $\phi_F(p) = p - \frac 1 {h_F(p)}$. Replacing in the previous inequality we conclude that  $H_\lambda(\phi_F(p))\ge H_\lambda(p) - (1-F(p))^{-\lambda}$. Moreover, noticing that $(1-F(p))^{-\lambda}= \Gamma_\lambda (H_\lambda(p))^{-\lambda} = 1+\lambda H_\lambda(v)$, we obtain that
    $$ H_\lambda(\phi_F(p))\ge H_\lambda(p) (1-\lambda) -1.$$
    
    Third, because $\Gamma_\lambda$ is decreasing and that $\phi_F$ is monotone,  we derive from the previous inequality that
    $$ \Gamma_\lambda (H_\lambda (p)) \le \Gamma_\lambda\big(\; H_\lambda(\phi^{-1}_F(p)) (1-\lambda) -1\; \big).$$
    
    To conclude the proof, we replace the above expression in Equation~\eqref{eq:andres} and obtain that
    \begin{align*}
       \P[\phi_F(v_\bu)\ge p | v_\bu \ge p] &\ge \frac{\Gamma_\lambda(H_\lambda(\phi^{-1}_F(p)))}{\Gamma_\lambda\big(H_\lambda(\phi^{-1}_F(p)) (1-\lambda) -1 \big)}\\
       &= \left( \frac {1+\lambda H_\lambda(\phi^{-1}_F(p))}{1+\lambda \left( H_\lambda(\phi^{-1}_F(p)) (1-\lambda) -1\right)}  \right)^{-\frac 1 \lambda}\\
       & = \left({1-\lambda}\right)^{\frac 1 \lambda}.
    \end{align*}
 \end{proof} 
 
\section{Missing proofs from~\Cref{sec:ext}}
%\thmheterorevenue*

%We first prove a small property of the $p_j$s.
\begin{lemma}\label{cl:1}
For $ \ob \geq 1$,
$$n \cdot \sum_{t \leq j} p_t \geq  \frac{\ob}{2}(1-\lambda)^{\frac 1 \lambda}.$$  
\end{lemma}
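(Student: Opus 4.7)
The plan is to first rewrite the left-hand side in a cleaner form. Since $p_t = \P[u_{t-1} \ge v \ge u_t]$ is a telescoping probability (with the convention that $u_0$ is the right endpoint of the support), the sum $\sum_{t \le j} p_t$ collapses to $\P[v \ge u_j]$. So the goal becomes
\[ n \cdot \P[v \ge u_j] \;\ge\; \frac{j}{2}\,(1-\lambda)^{1/\lambda} \;=\; \frac{j}{2}\,c(\lambda). \]

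I would split into two cases. For $j=1$, we have $u_1 = \E[\ostat{1}{n}]$; by \Cref{lem:1} the first order statistic $\ostat{1}{n}$ is itself $\lambda$-regular, so \Cref{lem:2} gives $\P[\ostat{1}{n} \ge u_1] \ge c(\lambda)$. A union bound then yields $n\,\P[v \ge u_1] \ge \P[\ostat{1}{n} \ge u_1] \ge c(\lambda) \ge c(\lambda)/2$, which is what we need.

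For $j \ge 2$, let $s = \lceil n/j \rceil$ so that $u_j = \E[\ostat{1}{s}]$. The key observation is that $j(s-1) \le n$, so we can carve out $j$ disjoint groups of $s-1$ i.i.d.~draws from among the $n$ draws; the maximum of each group is distributed as $\ostat{1}{s-1}$. Applying \Cref{cor:oneless} to each group, the probability that the group's max is at least $u_j = \E[\ostat{1}{s}]$ is at least $\tfrac{s-1}{s}c(\lambda)$. On the other hand, for a single group of $s-1$ draws, the expected number of draws exceeding $u_j$ is exactly $(s-1)\P[v \ge u_j]$, and this is at least the probability that at least one draw exceeds $u_j$, i.e.\ at least $\tfrac{s-1}{s}c(\lambda)$. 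Rearranging gives $\P[v \ge u_j] \ge c(\lambda)/s$, and multiplying by $n$ yields $n\,\P[v \ge u_j] \ge (n/s)c(\lambda)$.

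The last step is to verify $n/s \ge j/2$. Since $s = \lceil n/j \rceil \le (n+j-1)/j$, we get $n/s \ge nj/(n+j-1)$, and $nj/(n+j-1) \ge j/2$ is equivalent to $n \ge j-1$, which holds because $n \ge k \ge j$. The only edge case requiring separate handling is $s=1$ (i.e.\ $n=j$), where the partition argument is vacuous but the bound follows directly from $\P[v \ge \E[v]] \ge c(\lambda)$ via \Cref{lem:2}. I do not expect a genuine obstacle here; the main subtlety is just bookkeeping around the floor/ceiling so that the partition fits inside $n$ draws and that the ratio $(s-1)/s$ can be absorbed into the factor $1/2$.
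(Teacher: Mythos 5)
Your proof is correct and follows essentially the same route as the paper: telescope the sum to $\P[v\ge u_j]$, bound a single group's maximum via \Cref{cor:oneless} together with a union/first-moment bound, and close with arithmetic on $\lceil n/j\rceil$. Your bookkeeping is marginally cleaner (isolating $\P[v\ge u_j]\ge c(\lambda)/\lceil n/j\rceil$ and then verifying $n/\lceil n/j\rceil\ge j/2$) and you handle the $n=j$ edge case explicitly where the paper is terse, but these are cosmetic differences, not a different argument.
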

\begin{proof}

Observe that when $\ob =1$, 
$$np_j =n\P[x \geq \E[\ostat{1}{n}]] \geq \P[\ostat{1}{n} \geq \E[\ostat{1}{n}]] \geq (1-\lambda)^{\frac 1 \lambda}.$$ Here the first inequality follows from union bound and the second inequality follows from~\Cref{lem:2}.
% the claim is a (weaker) consequence of~\Cref{lem:2}.
For $\ob > 1$, 
$n \sum_{t \leq j} p_j = n \P[x \geq u_j]$. Recall that $u_j = E[\ostat{1}{\ceil{n/\ob}}]$.
We can partition the set of $n$ draws into $\ob$ sets each containing $s= \ceil{\frac{n}{\ob}}-1$ values. From~\Cref{lem:1}, we know that the maximum element in each partition will lie in the bucket with probability at least $\frac{1}{2}(1-\lambda)^{\frac 1 \lambda}$ (as $l \geq 2$). Therefore,
$$n \P[x \geq u_j] \ge j \cdot (s) \P[x \geq u_j] \geq j \cdot \P[\ostat{1}{s} \geq \E[\ostat{1}{s+1}] ] \geq j/2 (1-\lambda)^{\frac 1 \lambda}.$$
\end{proof}

\begin{lemma} \label{lem:optprog}
 Consider $r_1 \geq r_2 \geq \cdots \geq r_k \geq 0$,  then an optimal solution to the convex program \begin{align*}
   \max \sum_{\ob=1}^k r_\ob &\cdot    e^{-p_\ob \cdot n}  \label{prog:P1} \tag{Primal}\\
   n \cdot \left(\sum_{j=1}^s p_j \right) &\geq \frac{s}{2}  (1-\lambda)^{\frac 1 \lambda} \qquad \text{ for all } 1\leq s \leq k \\
   0 \leq p_\ob &\leq 1  \qquad\qquad \qquad  \text{ for all } 1\leq \ob \leq k
\end{align*}
 is $p_1 = p_2 = \dots =  p_k = \frac{1}{2n} (1-\lambda)^{\frac 1 \lambda}$.
\end{lemma}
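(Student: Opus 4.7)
My plan is to prove the lemma via a Lagrangian/KKT certificate. First I would attach a non-negative multiplier $\mu_s$ to each cumulative constraint $n\sum_{j\leq s}p_j\geq \tfrac{s}{2}c(\lambda)$, and leave the upper box constraints $p_j\leq 1$ implicit, since the candidate value $c(\lambda)/(2n)$ lies strictly in $(0,1)$ for every $n\geq 1$ (and in any case $c(\lambda)\leq 1$). Introducing $S_j := \sum_{s\geq j}\mu_s$, stationarity of the Lagrangian in $p_j$ reads $n r_j e^{-n p_j}=S_j$, which inverts to $p_j=\tfrac{1}{n}\log(n r_j/S_j)$.

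Second, I would take the uniform primal $p_j^{\ast}=\tfrac{c(\lambda)}{2n}$ as the candidate. At $p^{\ast}$ every cumulative constraint is tight, so complementary slackness holds automatically and the remaining KKT conditions reduce to exhibiting a valid dual. Matching stationarity forces the dual choice $S_j^{\ast}=n r_j e^{-c(\lambda)/2}$; the hypothesis $r_1\geq r_2\geq\cdots\geq r_k\geq 0$ makes $S_j^{\ast}$ weakly decreasing in $j$, so $\mu_s^{\ast}=S_s^{\ast}-S_{s+1}^{\ast}=n(r_s-r_{s+1})e^{-c(\lambda)/2}\geq 0$ is dual-feasible. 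Evaluating the primal objective at $p^{\ast}$ yields $e^{-c(\lambda)/2}\sum_{j=1}^k r_j$, and by construction the Lagrangian at $(p^{\ast},\mu^{\ast})$ attains the same value; matching primal and dual values would then identify $p^{\ast}$ as an optimal solution.

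The obstacle I expect to be the main one is that the objective $\sum_j r_j e^{-n p_j}$ is separably \emph{convex} (rather than concave) in $p$, so that the KKT stationary point I identify is a minimum, not a maximum, of the Lagrangian in $p$ at fixed $\mu$; consequently Lagrangian weak duality does not bound the primal maximum from above in the usual way, and the certificate above does not automatically close. My plan to handle this is to complement the KKT certificate with a direct exchange argument that exploits the lower-triangular structure of the constraints at $p^{\ast}$: because $p_1^{\ast}$ is pinned down by the $s=1$ constraint, any feasible perturbation of $p^{\ast}$ can only redistribute mass from a later coordinate $j$ to an earlier coordinate $i<j$, and I would analyse the separable exponential objective along each such ray (using $r_i\geq r_j$ and the tightness of the cumulative constraints) to argue the uniform point is a maximum. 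This second step is where I anticipate the real work, since the exchange analysis has to be carried out carefully enough to rule out all combined perturbations across multiple coordinates, not just pairwise swaps.
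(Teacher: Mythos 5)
You have correctly identified the critical obstacle: the objective $\sum_{j} r_{j} e^{-n p_{j}}$ is convex, so the stationary point of the Lagrangian in $p$ at fixed multipliers is a \emph{minimizer} of the Lagrangian over $p$, not a maximizer, and the ``dual'' obtained by substituting this stationary point does not upper-bound the primal maximum. This is exactly where the paper's own proof also goes astray: it constructs a dual by plugging in the stationary $p_j$, and then certifies the uniform primal by matching primal and dual objective values, which would only be a valid certificate if the Lagrangian were maximized, rather than minimized, over $p$ at the stationary point.

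Unfortunately no exchange argument can close the gap, because the lemma as stated is actually false. Take $k=2$ and $r_1 = r_2 = r > 0$, and write $c = c(\lambda) \le 1/e$. The point $p = (c/n,\, 0)$ is feasible: $n p_1 = c \ge \tfrac{c}{2}$, $n(p_1 + p_2) = c \ge c$, and both coordinates lie in $[0,1]$ since $c < 1 \le n$. Its objective value is $r(1 + e^{-c})$, which strictly exceeds the value $2 r e^{-c/2}$ attained at the uniform point whenever $c > 0$, since $1 + e^{-c} - 2e^{-c/2} = (1 - e^{-c/2})^2 > 0$. The first-order pairwise exchange you propose, moving mass from $p_2$ to $p_1$, is neutral to first order when $r_1 = r_2$, but becomes strictly profitable at finite step sizes precisely because the objective is convex, and the cumulative constraints permit moving all the way to $p_2 = 0$. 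So the fix is not a more careful exchange analysis; one needs either additional constraints reflecting structural properties of the true $p_j$'s in Theorem~\ref{thm:heterorevenue}, beyond the cumulative lower bounds used here, or a weaker constant than $e^{-c(\lambda)/2}$.
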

\begin{proof}
Our proof-technique uses the complementary slackness result for convex duality. 

Given the Primal problem, we denote by $\lambda_s$ the dual variable corresponding to the first constraint and define $S_\ob = \sum_{s=\ob}^k \lambda_s$. Naturally $S_\ob \geq S_{\ob + 1}$. If we lagrangify the first constraint and find the optimal value of $p_\ob$, we get that $p_\ob = -\log(\frac{nS_\ob}{r_\ob})/n$. Substituting this value gives us the objective. The constraint $0 \leq p_\ob \leq 1$ gives us the second constraint on $S_j$. Thus we obtain the following dual problem.
\begin{align*}
   \min \sum_{\ob=1}^k S_\ob \left( n - \frac{1}{2}(1-\lambda)^{\frac 1 \lambda} \right)  &- S_\ob \cdot \log \left(\frac{nS_\ob}{r_\ob} \right)  \label{prog:D1} \tag{Dual}\\
    S_\ob &\geq  S_{\ob+1}  \qquad \quad \text{ for all } 1\leq \ob \leq k-1 \\
   \frac{r_\ob}{ne^n} \leq S_\ob &\leq  \frac{r_\ob} {n} \qquad  \qquad \text{ for all } 1\leq \ob \leq k.
\end{align*}

Consider the primal solution $p_\ob = \frac{1}{2n}(1-\lambda)^{\frac 1 \lambda}$ and observe that this is a feasible solution and achieves an objective value of $\sum_{j=1}^k r_j e^{-1/2(1-\lambda)^{\frac 1 \lambda} }$.  Consider setting the dual to be $S_\ob = r_\ob \cdot e^{-1/2 (1-\lambda)^{\frac 1 \lambda}} / n$. Since $r_1 \geq \dots r_k \geq 0$, we know that this dual is feasible and has the same objective value. Since both the primal and dual are feasible and bounded,  we know that the primal solution $p_\ob$ is optimal. 
\end{proof}

\section{Missing proofs from Section~\ref{sec:discussion}} 

\lemregular*
\begin{proof}
Let $n > e^4$ be an integer. Consider a distribution with support $[1, n)$ with density $f(x) = \frac{n}{n-1} \times \frac{1}{x^2}$. Note that for this distribution the C.D.F $F(x) = \frac{n}{n-1}\times (1 - 1/x)$. The optimal revenue for selling single item with values drawn from distribution $f$ is $\arg \max_x x ( 1- F(x)) = \arg \max x (1/x - 1/n) \frac{n}{n-1} = (1 - 1/n) \frac{n}{n-1} = 1$. Thus the optimal revenue for selling $n$ items to $n$ I.I.D. bidders with values drawn from $f$ is $n$. In the following lemma, we show that when the demand structure is monopsony, it is possible to use bundle pricing to get $\Omega(n \ln{n})$ revenue.

We will show that the expected revenue from setting a bundle price of $n \ln{n}/2$ is $\Omega(n \ln{n})$. For a bundle price $p$, the intermediary will purchase the bundle if the sum of the $n$ values of its buyers is at least $p$. Let $X_1, X_2, \ldots X_n$ denote the random variable indicating the buyers draws and let $X = X_1 + X_2 + \cdots + X_n$, denote the random variable for the sum of these $n$ random draws. For each $X_i$, $\E[X_i] = \frac{n}{n-1} \ln{n}$, $\E[X_i^2] = n$, and hence $Var[X_i] = n - (\frac{n \ln{n}}{n-1})^2$. Further $\E[X] = \frac{n^2 \ln{n}}{n-1}$ and $Var[X] = n^2 - \frac{n^3 \ln{n}^2}{(n-1)^2} < n^2$. 

We will use the following one-sided Chebychev's inequality to prove this result. For any random variable with mean $\mu$ and standard deviation $\sigma$,
$$ \P[|x - \mu| \geq k \sigma] \leq 1/k^2 \implies \P[\mu - k \bar{\sigma} \geq x] \leq 1/k^2, $$ 
for any $\bar{\sigma} \ge \sigma$. 
Setting $\mu = \frac{n^2 \ln{n}}{n-1}$, $\bar{\sigma} = n$, and $k = \frac{\mu}{2\bar{\sigma}}$, 
$$ \P[X \leq \mu - k \bar{\sigma} = \mu/2] \leq 1/k^2 = \frac{4 \bar{\sigma}^2}{\mu^2} = \frac{4n^2 (n-1)^2}{n^4 \ln{n}^2} \leq 1/4. $$ 
In the last step we use that $n > e^4$.

Thus $\P[X \geq \E[X]/2] \geq 3/4$ and setting a bundle price of $\E[X]/2$, will yield a revenue of $\Omega(n \ln {n})$.

On the other hand, with item pricing, the most revenue that can be obtained is $O(n)$. Thus item-pricing obtain at most $\Omega(\ln n)$ factor to the optimal revenue.
\end{proof}

\end{document}